\providecommand{\U}[1]{\protect\rule{.1in}{.1in}}
\newtheorem{theorem}{Theorem}
\newtheorem*{theorem*}{Theorem}
\newtheorem{conjecture}[theorem]{Conjecture}
\newtheorem{corollary}[theorem]{Corollary}
\newtheorem{definition}[theorem]{Definition}
\newtheorem{lemma}[theorem]{Lemma}
\newtheorem{notation}[theorem]{Notation}
\newtheorem{problem}[theorem]{Problem}
\newtheorem{remark}[theorem]{Remark}
\def\RE{\mathbb{R}}
\def\CO{\mathbb{C}}
\def\IN{\mathbb{N}}
\def\H{\mathcal H}
\def\D{\mathcal D}
\def\A{\mathcal A}
\def\G{\mathcal G}
\def\B{\mathcal B}
\def\N{\mathcal N}
\def\E{\mathrm{Extr}}
\def\F{\mathcal F}
\def\>{\rangle}
\def\<{\langle}
\def\K{\mathbb K}
\def\wt{\widetilde}
\def\ol{\overline}
\begin{document}

\title[Wigner positive states]{Characterising the sets of quantum states with non-negative Wigner function}

\author[Cerf]{Nicolas J. Cerf}
\author[Chabaud]{Ulysse Chabaud}
\author[Davis]{Jack Davis}
\author[Dias]{Nuno C. Dias}
%\textbf{\thanks{ncdias@meo.pt}}}
%\textbf{\thanks{cristina.goncalves.jorge@gmail.com}}
\author[Prata]{Jo\~ao N. Prata}
%\textbf{\thanks{joao.prata@mail.telepac.pt}}
\author[Van Herstraeten]{Zacharie Van Herstraeten$^*$}
\keywords{Sets of quantum states; Wigner functions; positivity; convexity; extreme points; Krein-Milman theorem}
\thanks{{\it 2020 Mathematics Subject Classification.} 52A07; 46B28; 46B22; 81P16; 81S30}
\thanks{$*$Authors are listed alphabetically.}
%\author{Nicolas J. Cerf, Ulysse Chabaud, Jack Davis\\
%\\
%Nuno C. Dias, Jo\~ao N. Prata and  Zacharie Van Herstraeten}

\date{}

\maketitle

%---------------------------------------------------------------------------------------------------

\begin{abstract}
For Hilbert spaces $\H\subseteq L^2(\RE)$ we consider the convex sets $\D_+(\H)$ of Wigner-positive states (WPS), i.e.~density matrices over $\H$ with non-negative Wigner function. We investigate the topological structure of these sets, namely concerning closure, compactness, interior and boundary (in a relative topology induced by the trace norm). We also study their geometric structure and construct minimal sets of states that generate $\D_+(\H)$ through convex combinations.  If $\H$ is finite-dimensional, the existence of such sets follows from a central result in convex analysis, namely the Krein--Milman theorem. In the infinite-dimensional case $\H=L^2(\RE)$ this is not so, due to lack of compactness of the set $\D_+(\H)$. Nevertheless, we prove that a Krein--Milman theorem holds in this case, which allows us to extend most of the results concerning the sets of generators to the infinite-dimensional setting. Finally, we study the relation between the finite and infinite-dimensional sets of WPS, and prove that the former provide a hierarchy of closed subsets, which are also proper faces of the latter. These results provide a basis for an operational characterisation of the extreme points of the sets of WPS, which we undertake in a companion paper \cite{physicspaper}. Our work offers a unified perspective on the topological and geometric properties of the sets of WPS in finite and infinite dimensions, along with explicit constructions of minimal sets of generators. 
\end{abstract}

%---------------------------------------------------------------------------------------------------

\tableofcontents

%---------------------------------------------------------------------------------------------------

\section{Introduction}
\label{sec:intro}

Wigner functions are widely used in quantum mechanics, quantum optics, and quantum information \cite{GossonBook1,Lions,Wigner}. They are phase-space quasi-probability distributions providing a unified formulation of both pure and mixed quantum states. Wigner functions share many features with classical probability distributions, with the notable exception that they may assume negative values. 

This property is often interpreted as a signature of their quantum nature, signalling non-classicality \cite{Booth}, and can also be understood as a resource in the context of quantum computing \cite{Albarelli}. 
On the other hand, quantum states with non-negative Wigner functions (denoted Wigner-positive states (WPS) hereafter) are particularly interesting: they can be thought of as quantum states with a `classical' description and they lead to efficient classical algorithms for simulating quantum computations \cite{Mari}.

For pure quantum states, Hudson's theorem provides a simple characterisation of WPS: a Wigner function $W\psi(z)$ is non-negative if and only if it corresponds to a generalised Gaussian \cite{Hudson,Soto}. 
In contrast, for mixed states the identification and construction of all WPS remains a challenging problem, with little progress since Hudson's original publication more than fifty years ago. For various perspectives on this problem, as well as a few notable partial results, the reader may refer to \cite{Bondia,Werner,Dias1,Sudarshan,Cerf,Narcowich1,Narcowich2,van1}. 

%for some context into the problem, and some interesting past approaches. 

\subsection{Overview}

In this paper, we take a novel approach to the 
\emph{Wigner positivity problem}. Our main objective is to investigate the topological and geometric structure of entire sets of WPS, rather than focusing on the properties of individual states. However, since the two problems are in fact related, as a by-product of our analysis we will also obtain explicit results concerning the characterisation and the construction of WPS. Our approach largely relies on methods and tools from convex analysis and Banach space theory.

We are mainly interested in the case where the Hilbert space of the system is $\H=L^2(\RE)$. However, most of the results for this case are, to some extend, related to similar properties of the finite-dimensional systems with Hilbert spaces $\H\subseteq \H^N=$ span$\{|n\>,\, {0\le n\le N} \}$, where $|n\>$ are the usual Fock states. Motivated by this, we will provide a unified approach to both the finite and infinite-dimensional cases, emphasising the connections between the two.  

Let $\D(\H)$ be the set of density matrices over $\H$, and $\D_+(\H) \subset \D(\H)$ denote the convex set of WPS. When $\H=\H^N$ we write $\D_+^N$; and if $\H=L^2(\RE)$ we simply write $\D_+$. We want to address the following two general problems:

\begin{problem}[Topological structure]\label{prob:topology}
What are the topological properties of the sets $\D_+(\H)$? Can one identify the interior and boundary points of $\D_+(\H)$ using simple properties of the corresponding Wigner functions? Are there any interesting dense subsets of $\D_+(\H)$? \\
\smallskip
\emph{(The topology considered here is the relative topology induced by the trace norm in the affine hull of $\D(\H)$).} 
\end{problem}

\begin{problem}[Geometric structure]\label{prob:geometry}
Is it possible to construct every state in $\D_+(\H)$ as a convex combination of a small set of \emph{generator} states? In both the finite and infinite-dimensional cases, what are the minimal generating sets, and can the generators be identified and constructed explicitly? Can all  states in $\D(\H)$ be written as affine combinations of the same set of generators? 
\end{problem}

We solve \cref{prob:topology} for all cases $\H \subseteq \H^N$ and $\H=L^2(\RE)$ in \cref{sec:topo}. In particular, in the infinite-dimensional setting we show that $\D_+$ is closed but non-compact, and has empty interior. On the other hand, in the finite-dimensional case, we prove that $\D_+(\mathcal H)$ is closed and compact. Moreover, when $\mathcal H\subseteq\mathcal H^N$, we provide a complete characterisation of the interior and boundary of the sets of WPS using a recent result about the properties of the sets of zeros of Wigner functions \cite{Abreu}.

\medskip

Building on these results, our approach to \cref{prob:geometry}, detailed in \cref{sec:geom}, further relies on the Krein--Milman theorem \cite{Krein,Simon2011}:

\begin{theorem}[Krein--Milman] \label{Krein--Milman}
Let $S$ be a compact and convex subset of a Hausdorff, locally convex topological vector space $V$. Then
\begin{equation}\label{KM1}
S = \overline{\mbox{\rm conv}}\bigl(\E(S)\bigr),
\end{equation}
where $\E(S)$ is the set of extreme points of $S$ (see \cref{extreme} below) and $
\overline{\mbox{\rm conv}}\bigl(\E(S)\bigr)$ denotes the smallest closed convex set containing $\E(S)$.

\end{theorem}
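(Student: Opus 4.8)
The plan is to follow the classical two-step strategy: first establish that every nonempty compact convex subset of $V$ possesses at least one extreme point, and then use Hahn--Banach separation to show that the closed convex hull of $\E(S)$ exhausts $S$. The inclusion $\overline{\mbox{\rm conv}}(\E(S)) \subseteq S$ is immediate, since $\E(S) \subseteq S$ and $S$ is itself closed and convex; the entire difficulty lies in the reverse inclusion, equation~\eqref{KM1}.

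For the first step I would introduce the notion of a \emph{closed extreme subset} (or face) of $S$: a nonempty closed set $F \subseteq S$ such that whenever $\lambda x + (1-\lambda) y \in F$ with $x,y \in S$ and $0 < \lambda < 1$, one has $x,y \in F$. The set $S$ is itself such a face, so the family of closed extreme subsets, ordered by reverse inclusion, is nonempty. Given a chain of faces, their intersection is again a face and is nonempty by compactness (the finite intersection property), so Zorn's lemma yields a minimal closed extreme subset $F_0$. The crucial claim is that $F_0$ is a singleton: if it contained two distinct points $x \neq y$, then, since $V$ is Hausdorff and locally convex, continuous linear functionals separate points, so some continuous $f$ satisfies $f(x) \neq f(y)$; the subset of $F_0$ on which $f$ attains its maximum (nonempty by compactness) is then a strictly smaller closed extreme subset, contradicting minimality. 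This singleton is an extreme point of $S$. Since the face relation is transitive, an extreme subset of a face of $S$ is again a face of $S$, running the same argument inside an arbitrary face $F$ shows that \emph{every} face of $S$ contains an extreme point of $S$.

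For the second step, set $K = \overline{\mbox{\rm conv}}(\E(S))$ and suppose toward a contradiction that there is a point $x_0 \in S \setminus K$. Since $K$ is closed and convex and $\{x_0\}$ is compact and disjoint from $K$, the Hahn--Banach separation theorem (available precisely because $V$ is locally convex) produces a continuous linear functional $f$ and a scalar $\alpha$ with $f(x_0) > \alpha \geq f(k)$ for all $k \in K$. Consider the face $M = \{z \in S : f(z) = \max_{S} f\}$, which is nonempty by compactness and is a closed extreme subset of $S$ by the linearity of $f$. By the first step $M$ contains an extreme point $e$ of $S$, whence $e \in K$ and $f(e) \leq \alpha$; but $f(e) = \max_{S} f \geq f(x_0) > \alpha$, a contradiction. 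Hence $S \subseteq K$, and equality holds.

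The main obstacle I anticipate is the first step: verifying carefully that the max-set of a continuous functional over a face is again a face of $S$ (not merely of the face in which it lives), and that arbitrary intersections of chains of faces remain faces and stay nonempty. The essential structural input throughout is local convexity, which is what makes Hahn--Banach separation available; without it the separation in the second step, and the point-separation used to collapse $F_0$ to a single point, would both fail.
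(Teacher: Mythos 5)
Your proof is correct, but note that the paper does not actually prove this statement: it quotes the Krein--Milman theorem as a classical background result, citing \cite{Krein} and \cite{Simon2011}, so there is no internal proof to compare against. Your argument is the standard one found in those references --- Zorn's lemma applied to the family of closed extreme subsets ordered by reverse inclusion to produce an extreme point in every closed face, followed by Hahn--Banach separation of a hypothetical point of $S\setminus\overline{\mbox{\rm conv}}(\E(S))$ from the closed convex hull. The two delicate points you flag both go through: the max-set of a continuous linear functional over a face $F$ is a face of $F$, hence a face of $S$ by transitivity of the face relation; and the intersection of a chain of nonempty closed faces is nonempty by compactness because a chain is directed, so every finite subfamily has nonempty intersection. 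Local convexity together with the Hausdorff property is used exactly where you say: to separate two points of a putative non-singleton minimal face, and to separate $x_0$ from $\overline{\mbox{\rm conv}}(\E(S))$.
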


Since every extreme point is necessarily a boundary point in a convex set, one also has
$$
S = \overline{\mbox{\rm conv}}\bigl(\partial S\bigr).
$$

In the finite-dimensional case, where $\D_+(\H)$ is compact, these results indicate that the extreme and boundary points provide two natural sets of generators of $\D_+(\H)$. Using a recent result \cite{Abreu}, we are able to characterise these states in terms of simple properties of the nodal sets of their Wigner functions, and then present a method for constructing these states explicitly. This yields a solution to \cref{prob:geometry} in the finite-dimensional case.

The extension of these results to the infinite-dimensional case $\H=L^2(\RE)$ is non-trivial. In this setting, the set of WPS is non-compact and has empty relative interior in the trace norm topology; hence the Krein--Milman theorem does not apply. However, while the conditions in the Krein--Milman theorem are sufficient for the relation (\ref{KM1}) to hold, they are not necessary for such (or other) decomposition to exist. Therefore, we aim to identify a suitable subset of generators $\G\subset \D_+ $ that can play the role of ``pure states" for the WPS in the infinite-dimensional case:
$$
\D_+ = \overline{\mbox{\rm conv}}\{\G\}.
$$
Interestingly, by equipping the set of trace-class operators with the weak* topology, we are able to prove that our main results for \cref{prob:geometry} in the finite-dimensional case extend to the infinite-dimensional setting. In particular, we show that a Krein--Milman-type relation holds for the trace norm with $\G=\E(\D_+)$ or, more generally, $\G$ a dense subset of $\E(\D_+)$. A related result shows that the set of WPS with nonempty nodal set is dense in $\D_+$. 

We further investigate the relation between the finite-dimensional and infinite-dimensional cases and show, in particular, that the sets $\D_+^N$ form a hierarchy of closed subsets which are also {\it proper faces} of $\D_+$. This implies that all extreme points of  $\D_+^N$ are also extreme points of $\D_+$. 

\medskip

%We summarise our results in the next section.

\subsection{Summary of results}

We begin by introducing some key notation and by defining the primary sets studied in this paper, after which we summarise the main results. For more details on notation, as well as general definitions and background results, the reader should refer to the sections \ref{Notation} and \ref{Preliminaries}.

For each $N\in \IN_0$, we define the finite-dimensional Hilbert space: 
$$
\H^N=\mathrm{span}\{|n\rangle,\, 0\le n\le N\} \subset L^2(\RE).
$$ 
where the Fock states $|n\rangle$, ${n\in \IN_0}$ are the eigenstates of the quantum harmonic oscillator Hamiltonian. In the position representation they  are given by the  Hermite functions:  
\begin{equation}\label{HermiteF}
h_n(x) =(2^n n!)^{-1/2} \pi^{-1/4} H_n(x)e^{-\tfrac{1}{2}x^2}, \quad n \in \IN_0,
\end{equation} 
where $H_n(x)$ is the Hermite polynomial of degree $n$ (we use units where $\hbar=1$).
These functions form an orthonormal basis of $L^2(\RE)$. For notational convenience, we identify $|n\>$ with the function $h_n$, and treat them interchangeably.

Unless otherwise stated, we assume the Hilbert space of the system to be either $\H\subseteq\H^N$ or $\H=L^2(\RE)$.

Quantum states (pure or mixed) are represented by density matrices, i.e.~self-adjoint, positive, trace-class operators $\rho:\H\to \H$ with unit trace. Pure states correspond to rank-one projectors. 

For a density matrix $\rho$ let $W\rho$ denote its Wigner function. In the sequel we will repeatedly use the two following sets:
\medskip

\begin{itemize}

\item The {\it nodal set} of $W\rho$ (and, by abuse of notation, of $\rho$) is defined as:
$$
\N(\rho)\equiv \N(W\rho)\coloneqq\{z\in \RE^2: W\rho(z)=0\}.
$$ 
%\smallskip
\item The {\it negative set} of $\rho$ is:
 $$
 \Omega_-(\rho)\equiv \Omega_-(W\rho)\coloneqq\left\{z\in \RE^2: W\rho(z) \le 0 \right\}.
 $$
%\smallskip 
\end{itemize}

%\medskip

We also define the following sets of operators:
\medskip
\begin{itemize}

    \item \textbf{$\B_1(\H)$} is the space of trace-class operators. This is a Banach space with the (trace) norm $\|\rho\|_1={\rm tr}|\rho|$. Its elements are compact operators $\rho:\H\to \H$ satisfying $\|\rho\|_1 <\infty$.
    \medskip
    \item \textbf{$\A(\H)$} is the affine space of self-adjoint, trace-class operators with unit trace,
    $$
    \A(\H)=\left\{\rho \in \B_1(\H):\, \rho=\rho^\dagger,\, {\rm tr} \, \rho=1 \right\}.
    $$
    %\smallskip
    \item \textbf{$\D(\H)$} is the set of density matrices,
    $$
    \D(\H)=\left\{\rho \in \A(\H):\, \rho \ge 0 \right\}.
    $$
    %\smallskip
    \item \textbf{$\D_+(\H)$} is the set of Wigner-positive states (WPS),
    $$
    \D_+(\H)= \left\{\rho \in \D(\H):\, W\rho(z) \ge 0 \text{ for all } z\in \RE^2 \right\}.
    $$
    %\smallskip
    \item \textbf{$\D_{0+}(\H)$} is the subset of WPS with nonempty nodal set,
    $$
    \D_{0+}(\H)= \left\{\rho \in \D_+(\H):\, \N(\rho) \neq \emptyset \right\}.
    $$
    %\smallskip
    \item If $\H=L^2(\RE)$ we write the previous sets as $\B_1\, ,\A \, ,\D\, ,\D_+$ and $\D_{0+}$, respectively. 
    \medskip
    
    \item If $\H=\H^N$ we write them as $\B_1^N \, , \A^N \, , \D^N \, , \D_+^N $ and $\D_{0+}^N$, respectively. 
\end{itemize}
\medskip
These sets satisfy the chain of inclusions:
$$
\D_{0+}(\H)\subset \D_+(\H) \subset \D(\H) \subset \A(\H) \subset \B_1(\H).
$$

In the simplest case $\H=\H^1=$ span$\{|0\rangle ,|1\rangle
\}$, we obtain the nice geometrical structure displayed in Figure 1, and which immediately suggests several of the questions posed in \cref{prob:topology} and \cref{prob:geometry} regarding the topological and the geometric properties of $\D_+(\H)$. It also naturally prompts one to ask whether a similar structure appears when $\H\subseteq\H^N$ with $N>2$, or when $\H=L^2(\RE)$.

\begin{figure}[ht]
    \centering
    \includegraphics[width=7cm,height=7cm]{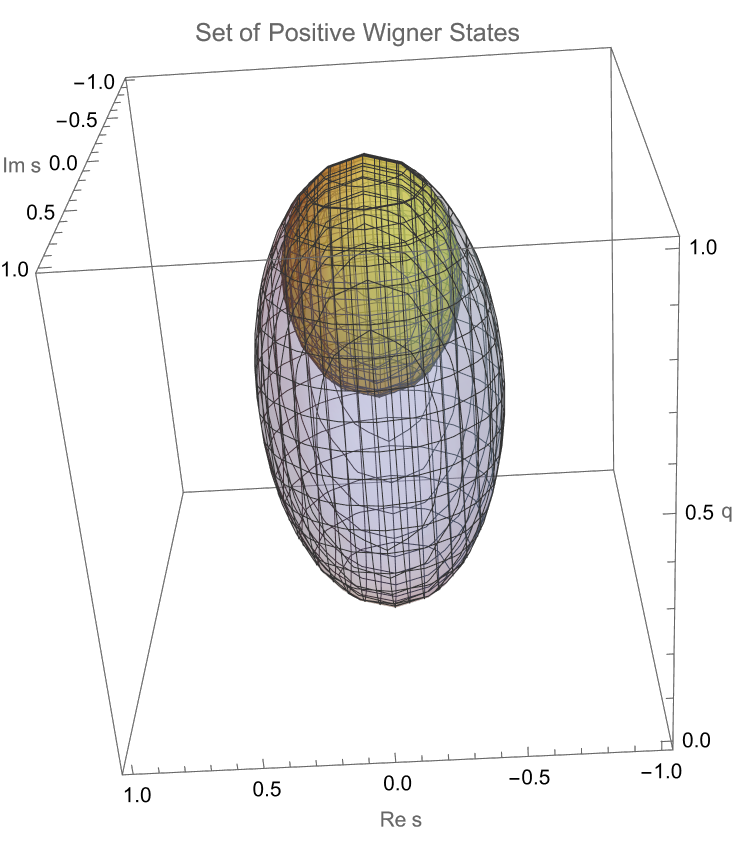} 
    \caption{\small The ambient space is the affine space $\A^1$. Each point $(q,{\rm Re}\, s,{\rm Im}\, s)$ represents a hermitian, trace-class operator $\rho =q|0\>\<0| +(1-q) |1\>\<1|+s|0\>\<1| + \ol{s} |1\>\<0|\in \A^1$. The outer ellipsoid represents the set $\D^1$, while the inner ellipsoid corresponds to $\D^1_+$. Moreover, the north pole is the state $|0\>\<0|$ and the south pole is $|1\>\<1|$. The Euclidean distance coincides with the trace distance.}
    \label{fig:example}
\end{figure}

\medskip

We address these questions in the paper. Here is a summary of the main results:

\medskip

\textbf{1. Finite-dimensional case ($\H\subseteq \H^N$):}

\begin{itemize}

\item If nonempty, the sets $\D_+(\H)$ are closed and compact (\cref{Compact_finite_dim}).

\medskip

\item A density matrix belongs to the interior of $\D_+(\H)$ (in the relative topology induced by the trace norm in $\A(\H)$) if and only if it is of maximal rank and has empty nodal set (\cref{ipos}). A WPS belongs to the (relative) boundary of $\D_+(\H)$ if and only if it has a nonempty nodal set or is a (relative) boundary point of $\D(\H)$ (\cref{boundary}).

\medskip

\item  If $|0\> \perp \H$ then $\D_+(\H)=\emptyset $. 
If $|0\> \in  \H$ then $\D_+(\H)$ has nonempty relative interior with respect to $\A(\H)$ (\cref{intWN}).

\medskip

\item With the (possible) exception of the Gaussian state $\rho_0 =|0\>\<0|$, all extreme points of $\D_+(\H)$ have nonempty nodal sets (\cref{Extreme_points}). If dim $\H \ge 2$, then $\D_+(\H)= \ol{\rm conv}(\D_{0+}^N)$ (\cref{GeneratorsVWPS}).

\medskip

\item For every closed set $S\subset \RE^2$, the set of density matrices $\F(S)=\{\rho \in \D_+(\H) \, |\, S \subseteq \N(\rho)  \}$, if nonempty, is a proper face of $\D_+(\H)$. Moreover, up to the possible exclusion of $\rho_0=|0\>\<0|$, we have $\E(\D_+(\H))=\bigcup_{S\subset \RE^2} \E(\F(S)) \cup \{\rho_0\}$ (\cref{Faces_Cor}).

\medskip

\item Let $|0\>\in \H$ and define $\rho_0=|0\>\<0|$. We determine a minimal set $X_+(\H)$ of relative boundary points of $\D_+(\H)$ such that the relative interior of $\D_+(\H)$ with respect to $\A(\H)$ is given by ${\rm int}_{\A(\H)} \,\D_+(\H)= \cup_{\rho \in X_+(\H)} (\rho_0, \rho)$, where $(\rho_0,\rho)$ denotes the open line segment connecting $\rho_0$ and $\rho$. We also construct the states $\rho \in X_+(\H)$ explicitly. This result is optimal, in the sense that it fails for any proper subset $X \subsetneq X_+(\H)$. A straightforward consequence of $\D_+(\H)$ being closed is that $\D_+(\H)=\ol{\cup_{\rho \in X_+(\H)} (\rho_0, \rho)}$ for the trace norm (\cref{C18}).

\medskip

\item Moreover, the set of WPS $X_+(\H)$ can be used to generate the relative interior of $\D(\H)$. More precisely, every density matrix $\rho \in {\rm int}_{\A(\H)} \,\D(\H)$ can be written as an affine combination of $\rho_0$ and a unique state $\rho' \in X_+(\H)$ (\cref{C19}).

\end{itemize}

\medskip

\textbf{2. Infinite-dimensional case ($\H=L^2(\RE)$):}

\begin{itemize}
    
    \item The set $\D_+$ is closed but non-compact (\cref{Compact_infinite_dim}), and has an empty relative interior with respect to the affine space $\A$ (\cref{intW_+}) and to the set of density matrices $\D$ (\cref{intW_+2}), for the relative topology induced by the trace norm in $\A$ and $\D$, respectively. 
 
 \medskip
    
    \item The set $\D_{0+}$ is a dense subset of $\D_+$, i.e $\D_+=\overline{\D_{0+}}$ (\cref{X}).
\medskip
    \item The Krein--Milman relation holds for $\D_+$, i.e.~$\D_+=\overline{\mbox{\rm conv}}(\E(\D_+))$, where $\E(\D_+)$ denotes the set of extreme points of $\D_+$, and the closure is in the trace norm topology (\cref{KMR}).
\medskip    
    \item Every nonempty set $\D_+(\H_0)$, $\H_0 \subseteq L^2(\RE)$ is a face of $\D_+$. Moreover $\E(\D_+(\H_0))=\E(\D_+) \cap \D_+(\H_0)$ (\cref{Faces+Extreme}).
    
  \medskip      
    \item If $\mathcal K$ is a dense subset of $\E(\D_+)$ (with respect to the trace norm topology), then $\D_+=\overline{\mbox{\rm conv}}(\mathcal K)$ (\cref{DenseX}).

\end{itemize}
\medskip

\begin{enumerate}

\item[\textbf 3.]\textbf{Preliminary side results.} We also prove a few interesting results concerning the properties of the nodal sets of Wigner functions, which are required for the proofs of some of our main theorems. These results may also be of independent interest. They are discussed in  section 2.1 of the Preliminaries. 

\end{enumerate}

\begin{itemize}

\item Let $\rho \in \D$. If $\N(\rho)$ is bounded then the negative set $\Omega_-(\rho)$ is compact (\cref{Theorem5}). This result is used several times throughout the paper (e.g.~proofs of \cref{SecondTheorem} and \cref{rho+}).
  
  \medskip   
	
\item Fix the Hilbert space $\H\subseteq L^2(\RE)$. The nodal set $\N(\rho)$ is bounded for all $\rho\in \D(\H)$ if and only if $\H$ is a global squeezing and displacement of some Hilbert space $\H_0\subseteq \H^N$ (\cref{SecondTheorem}). 
This extends to mixed states a similar result recently obtained for pure states \cite{Abreu}. It is used in the proof of some central results of the paper (e.g.~\cref{rho+} and \cref{KMfinite}). 
  
    \medskip   
     
\item Let $\H\subseteq \H^N$ or $\H= L^2(\RE)$. If $\D_+(\H) \not=\emptyset$ then 
$\bigcap_{\rho \in \D(\H)} \N(\rho)=\emptyset$ (\cref{zeros_inters}). This result is necessary for proving \cref{ipos}.

\end{itemize}

\noindent We refer to the following sections for additional results and detailed proofs.

\subsection{Notation}\label{Notation}

The notation $\ol{A}$ denotes, depending of the context, the complex conjugate of $A\in \CO$ or the closure of the set $A$. The adjoint of the operator $A$ is denoted $A^\dagger$, and $|A|$ denotes, also depending on the context, the modulus of $A\in \CO$, the Euclidean norm of $A\in \RE^2$ or the absolute value of a trace-class operator $A\in \B_1$ (see below). We write $A \perp B$ to indicate that the objects $A$ and $B$ are orthogonal.        

The Hilbert space $L^2(\RE^n)$ is endowed with the standard inner product (which we write using the physics convention): 
$$
\<\phi|\psi\>= \int_{\RE^n} \overline{\phi(x)}\psi(x) \, dx,
$$
and the corresponding induced norm $\|\psi \|=\<\psi|\psi\>^{1/2}$. If we need to distinguish between different norms, we may also use the notation $\|\,\,\|_{L^2}$ for the $L^2$-norm.

Let $\H\subseteq L^2(\RE)$ and consider the Banach space $\B_1(\H)$ of trace-class operators over $\H$. The {\it trace} is the functional:
$$
{\rm tr}: \B_1(\H) \to \CO; \quad {\rm tr} (A)= \sum_{n} \<\phi_n|A\phi_n\> 
$$
where $(\phi_n)_n$ is an orthonormal basis of $\H$. The sum converges absolutely and is basis-independent.
The trace norm in $\B_1(\H)$ is:
$$
\| A \|_1={\rm tr}|A|
$$ 
where $|A|=\sqrt{A^\dagger A}$ is the "absolute value" of $A$, i.e.~the unique positive semi-definite operator $|A|$ such that $|A|^2=A^\dagger A $.

For a normalized $\psi \in \H\subseteq L^2(\RE)$, we denote by $\Pi_\psi$ or $|\psi\>\!\<\psi|$ the projector onto span$\{\psi\}\subseteq \H$:
$$ 
\Pi_\psi:\H\to \H; \quad \Pi_\psi \xi =\<\psi|\xi\> \psi.
$$ 
More generally, for $\phi,\psi\in \H$ the operator $|\phi\>\!\<\psi|$ is:
$$
|\phi\>\!\<\psi|:\H\to \H; \quad  |\phi\>\!\<\psi|\xi = 
\<\psi|\xi\> \phi.
$$

\section{Preliminaries}\label{Preliminaries}

\subsection{Wigner functions and their zeros}

In this section we review the definition and some of the main properties of the Wigner function, and prove a few results concerning the properties of their nodal set, which will be used in the sequel.
 
Let us introduce the general setting. We identify $\RE^2$ with the symplectic phase space $\RE_x\times \RE_\xi$ equipped with the standard symplectic form $\sigma(z,z')=Jz \cdot z'=x'\xi-x\xi'$, where $J=\left[ \begin{array}{cc} 0 & 1 \\ - 1 & 0 \end{array}\right]$ and $z=(x,\xi), \, z'=(x',\xi') \in \RE^2$. 

A matrix $S \in$ Gl$(2;\RE)$ is symplectic if $\sigma (Sz,Sz')=\sigma (z,z')$ for all $z,z' \in \RE^2$. We denote by Sp$(2)$ the group of all $2\times 2$ symplectic matrices. 
 To every $S\in $ Sp$(2)$ one can associated two {\it metaplectic} operators  
$\mu(S),\, -\mu(S)\in$ Mp$(2)$, where Mp$(2)$ denotes the metaplectic group of unitary operators on $L^2(\RE)$ \cite{GossonBook1,Shale,Weil}.

For $\psi,\phi \in L^2(\RE)$, the cross-Wigner function $W(\psi,\phi)(z)$ is a $L^2(\RE^2)$-function with support on the phase space: \cite{GossonBook1,Lions,Wigner}:
\begin{equation}\label{CWF}
W(\psi,\phi)(x,\xi)=\frac{1}{2\pi}\int_{\RE} e^{-i \xi y}\, \psi(x+\tfrac{y}{2})\overline{\phi(x-\tfrac{y}{2})}\,dy,
\end{equation}
where we have set $\hbar=1$. The cross-Wigner function is (up to a normalization factor) the Weyl symbol of the operator $|\psi\>\<\phi|$. If $\psi=\phi$ we obtain the standard Wigner function, which is a real and normalized quasi-distribution:
 \begin{equation}\label{WF}
W\psi(x,\xi)\equiv W(\psi,\psi)(z)=\frac{1}{2\pi}\int_{\RE} e^{-i \xi y}\, \psi(x+\tfrac{y}{2})\overline{\psi(x-\tfrac{y}{2})}\,dy.
\end{equation}
This definition extends naturally to mixed quantum states. Every density matrix $\rho\in \D$ is a Hilbert--Schmidt operator and therefore admits  
a kernel representation:
\begin{equation}
\rho\, \psi (x) = \int_{\RE} K_\rho(x,y)\,\psi(y)\,dy,
\end{equation}
where $K_\rho\in L^2(\RE^2)$ is the Hilbert--Schmidt kernel of $\rho$. The Wigner function $W\rho$ is  (up to a normalisation constant) the Weyl symbol of $\rho$:
\begin{equation}
W\rho(x,\xi)=\frac{1}{2\pi}\int_{\RE} e^{-i \xi y}\, K_\rho\Bigl(x+\frac{y}{2},x-\frac{y}{2}\Bigr)\,dy.
\end{equation}
A key property of Wigner functions is that the Hilbert-Schmidt inner product of two density matrices is, up to a normalization constant, the $L^2$ inner product of their Wigner functions. For every $\rho,\eta \in \D$ we have:
\begin{equation}\label{HS-inner}
{\rm tr}\, (\rho \, \eta)= 2\pi \<W\rho \, | \, W\eta \> . 
\end{equation}
Density matrices admit a spectral decomposition of the form:
\begin{equation}\label{DM1}
\rho = \sum_i \lambda_i \,\Pi_{\psi_i}, \quad {\rm where } \quad 0\le \lambda_i \le 1 , \quad \sum_i\lambda_i=1
\end{equation}
and $\Pi_{\psi_i}=|\psi_i \rangle \langle \psi_i|$ is the projector onto the ray generated by $\psi_i \in L^2(\RE)$. The first series converges in the trace norm.
Likewise, Wigner functions of mixed states can be written as 
$$
W\rho = \sum_i \lambda_i \, W\psi_i,
$$
with the series converging absolutely.

One of the most important properties of Wigner functions is their covariance under phase-space displacements and metaplectic transformations \cite{DiasRMI,GossonBook1}. For $z_0=(x_0,\xi_0) \in \RE^2$, let us define the Heisenberg-Weyl  displacement operator:
\begin{equation}\label{D(z)}
D(z_0)\psi(x)\coloneqq e^{i\xi_0(x-x_0/2)}\psi(x-x_0),
\end{equation}
which satisfies $D^\dagger(z_0)=D^{-1}(z_0)=D(-z_0)$. Then, for every $z_0\in \RE^2$ and $S \in$ Sp$(2)$, we have: 
$$
W\big(D(z_0)\mu(S^{-1})\psi\big)(z)=W\psi\big(S(z-z_0)\big)
$$
and also for mixed states:
$$
W\big(D(z_0)\mu(S^{-1})\, \rho \,  \mu(S) D^{-1}(z_0)\big)(z)= W\rho\big(S(z-z_0)\big),
$$
where $\mu(S)\in $ Mp$(2)$ is one of the two metaplectic operators associated to $S\in $ Sp$(2)$ \cite{GossonBook1,Shale,Weil}.

Recently, some of us have proved the following result, which will play a central role in this paper \cite{Abreu}:

\begin{theorem}\label{FirstTheorem}
Let $W\psi$ be the Wigner function of $\psi \in L^2(\RE)$. The nodal set $\N(W\psi)$ is bounded if and only if $\psi$ is of the form\footnote{An empty set is considered bounded.}:
$$
\psi= \sum_{n=0}^N b_n D(z_0)\mu(S^{-1}) h_n
$$
for some $N\in \IN_0$, $b_n\in \CO$, $z_0\in \RE^2$ and $S\in $ Sp$(2)$.	 Here $h_n$ is the Hermite function given by Eq.~(\ref{HermiteF}).

In other words: 
$$
\psi =D(z_0)\mu(S^{-1}) \phi \quad \mbox{ for some } \phi\in \H^N .
$$
Equivalently, $\N(W\psi)$ is bounded if and only if $W\psi$ is of the form:
$$
W\psi(z)= W\phi (S(z-z_0))
$$
for some $\phi \in \H^N$.
\end{theorem}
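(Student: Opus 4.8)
The plan is to treat the two implications separately, reducing the ``if'' direction to an explicit computation and attacking the ``only if'' direction through the Bargmann transform and the theory of entire functions of finite order. Throughout I identify the phase space $\RE^2$ with $\CO$ via $z=(x,\xi)\leftrightarrow x+i\xi$, so that $|z|^2=x^2+\xi^2$.

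For the \emph{if} direction, the covariance identity $W\big(D(z_0)\mu(S^{-1})\phi\big)(z)=W\phi\big(S(z-z_0)\big)$ recorded above shows that $z\mapsto S(z-z_0)$ is an affine bijection of $\RE^2$, so that $\N(W\psi)=z_0+S^{-1}\,\N(W\phi)$; hence $\N(W\psi)$ is bounded if and only if $\N(W\phi)$ is, and it suffices to prove that $\N(W\phi)$ is bounded for every nonzero $\phi=\sum_{n=0}^N c_n h_n\in\H^N$. Here I would invoke the standard Laguerre formula for the Wigner functions of Hermite functions, $W h_n(z)=\tfrac{(-1)^n}{\pi}L_n\big(2|z|^2\big)e^{-|z|^2}$, together with the analogous expressions for the cross terms $W(h_m,h_n)$, which are polynomials in $(x,\xi)$ of total degree $m+n$ times the common Gaussian $e^{-|z|^2}$. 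Writing $N$ for the largest index with $c_N\neq0$, one gets $W\phi(z)=P(x,\xi)\,e^{-|z|^2}$ with $P$ a real polynomial whose unique top-degree part comes from the diagonal term $(m,n)=(N,N)$ and equals the \emph{radially symmetric, strictly positive} monomial $\tfrac{2^N|c_N|^2}{\pi N!}\,|z|^{2N}$. All remaining contributions have degree at most $2N-1$, so $P(z)=\tfrac{2^N|c_N|^2}{\pi N!}|z|^{2N}\big(1+O(|z|^{-1})\big)>0$ for $|z|$ large; thus $W\phi$ cannot vanish outside a disk and $\N(W\phi)$ is bounded.

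The \emph{only if} direction is the substantial part. My plan is to encode $\psi$ by its Bargmann transform $f=B\psi$, an entire function lying in the Fock space of entire functions square integrable against $e^{-|z|^2}\,dA(z)$; recall that $B$ is unitary with $Bh_n(z)=z^n/\sqrt{n!}$, so that $f$ is a polynomial of degree $\le N$ precisely when $\psi\in\H^N$, and $f(z)=Q(z)\,e^{az^2+bz+c}$ with $\deg Q\le N$ precisely when $\psi=D(z_0)\mu(S^{-1})\phi$ for some $\phi\in\H^N$ (the Gaussian factor implementing the displacement $D(z_0)$ and the squeezing $\mu(S^{-1})$ at the level of Fock space, the constraint $|a|<\tfrac12$ being exactly the condition that this Gaussian lie in Fock space). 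Granting this dictionary, the theorem reduces to the purely function-theoretic statement
\[
\N(W\psi)\ \text{bounded}\quad\Longrightarrow\quad f=B\psi\ \text{has only finitely many zeros.}
\]
Indeed, every Fock-space function has order at most $2$, so once $f$ has finitely many zeros the Hadamard factorisation theorem forces $f(z)=Q(z)e^{az^2+bz+c}$ with $Q$ the polynomial product over the finite zero set, which is exactly the desired form.

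The bridge between the two zero sets, and the step I expect to be the main obstacle, is the passage from the nodal set of $W\psi$ to the zeros of $f$. The natural device is the Husimi function $Q\psi=W\psi*G$, the Gaussian smoothing of $W\psi$ by the vacuum Wigner function, which satisfies $Q\psi\ge0$ everywhere and vanishes exactly on the zero set of $f$ (up to complex conjugation). If $\N(W\psi)$ is bounded then $W\psi$ is continuous, tends to $0$ at infinity, and has constant sign outside some disk $D_R$; comparing with $\int_{\RE^2}W\psi=1$ and using $Q\psi\ge0$ first pins this sign as positive, after which one aims to show $Q\psi>0$ outside a larger disk, i.e.\ that $f$ has no zeros far out. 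The delicate point is a tail competition: for $z_*$ far from the origin the contribution to $Q\psi(z_*)=\int W\psi(z)\,G(z-z_*)\,dz$ from the region $|z|\le R$, where $W\psi$ may be negative, must be shown to be dominated by the positive contribution from $|z|>R$, and both are of Gaussian size in $|z_*|$; controlling this competition, equivalently establishing the decay rate of $W\psi$ forced by a bounded nodal set, is where the real work lies and is the heart of the argument in \cite{Abreu}. Once finiteness of the zero set of $f$ is secured, the Hadamard factorisation and the Bargmann dictionary above complete the proof.
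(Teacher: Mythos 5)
First, a point of comparison: the paper does not prove this theorem at all --- it is imported verbatim from \cite{Abreu} (``Recently, some of us have proved the following result\dots''), so there is no in-paper argument to measure your proposal against. Judged on its own terms, your \emph{if} direction is complete and correct: metaplectic covariance reduces the claim to $\phi\in\H^N$, and the Laguerre formula (Eq.~(\ref{Wignermn})) gives $W\phi=P\,e^{-|z|^2}$ with the unique top-degree part of $P$ equal to the positive radial monomial $\tfrac{2^N|c_N|^2}{\pi N!}|z|^{2N}$ (the cross terms $W(h_m,h_n)$ having degree $m+n<2N$), so $W\phi>0$ outside a disk. Your Bargmann-space framework for the converse is also the right one, and the dictionary you state --- finitely many zeros of $f=B\psi$, order at most $2$, Hadamard factorisation, and the Fock-space growth constraint $|a|<\tfrac12$ together give exactly $\psi=D(z_0)\mu(S^{-1})\phi$ with $\phi\in\H^N$ --- is correct; it is the content of the stellar hierarchy of \cite{chabaud2020stellar}.

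The genuine gap is the bridge you yourself flag: bounded $\N(W\psi)$ implies $B\psi$ has finitely many zeros. As written, neither of your two sub-steps closes. First, ``comparing with $\int_{\RE^2}W\psi=1$'' does not pin the sign of $W\psi$ outside $D_R$ as positive: all the positive mass could sit inside $D_R$ while $W\psi<0$ outside, with no contradiction from normalisation alone; the paper itself must invoke the Hardy-type bound of \cite{GossonLuef07} for precisely this point in the proof of \cref{Theorem5}. Second, and more seriously, the ``tail competition'' for $Q\psi(z_*)$ is genuinely borderline: the negative contribution from $D_R$ is $O\bigl(e^{-(|z_*|-R)^2}\bigr)$, but to beat it you need a \emph{lower} bound on the positive contribution near $z_*$, i.e.\ a lower bound on $W\psi$ there, and $W\psi$ may itself decay at the same Gaussian rate $e^{-|z_*|^2}$ (coherent states are the extremal case), so no elementary comparison of the two tails decides the sign of $Q\psi(z_*)$. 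Establishing the decay and sign structure forced on $W\psi$ by a bounded nodal set is exactly the analytic content of \cite{Abreu}; deferring it means the hard direction is asserted rather than proved. Your proposal is therefore a correct easy direction plus a correct reduction of the hard direction to an unproved (and non-trivial) lemma, not a proof of the statement.
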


For $z_0 \in \RE^2$ and $S \in $ Sp$(2)$, define the Hilbert spaces 
\begin{equation}\label{Hilbert_zS}
\H_{z_0,S}^N
= D(z_0)\mu(S^{-1})[\H^N]\coloneqq\{D(z_0)\mu(S^{-1}) \phi \, |\; \phi\in \H^N \}.
\end{equation}
We can then restate Theorem \ref{FirstTheorem} in the form: 
\begin{remark}\label{FirstRemark}
Let $\psi \in L^2(\RE)$. The nodal set $\N(W\psi)$ is bounded if and only if $\psi \in \H_{z_0,S}^N$ for some $N \in \IN_0$, $z_0\in \RE^2$ and $S\in $ Sp$(2)$.
\end{remark}

\medskip

The next result relates the nodal set and the {negative set } $
 \Omega_-(\rho)\coloneqq\left\{z\in \RE^2: W\rho(z) \le 0 \right\}
 $ of $\rho \in \D$.
We have:
 
\begin{theorem}\label{Theorem5}
Let $\rho \in \D$. If the nodal set $\N(\rho)$ is bounded, then $\Omega_-(\rho)$ is also bounded and closed, hence compact.
\end{theorem}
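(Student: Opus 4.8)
The plan is to separate the two claims: closedness is immediate, while boundedness is the substance. First I would record that $W\rho$ is continuous: writing the spectral decomposition $\rho=\sum_i\lambda_i\Pi_{\psi_i}$ gives $W\rho=\sum_i\lambda_i W\psi_i$, and since each $W\psi_i$ is continuous and bounded by $\tfrac1\pi$ while $\sum_i\lambda_i=1<\infty$, the series converges uniformly (Weierstrass $M$-test), so $W\rho$ is continuous. Consequently $\Omega_-(\rho)=(W\rho)^{-1}\big((-\infty,0]\big)$ is closed, being the preimage of a closed set under a continuous map. By Heine--Borel it then suffices to prove that $\Omega_-(\rho)$ is bounded, after which closed plus bounded yields compactness.

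For boundedness I would exploit that a bounded nodal set forces $W\rho$ to keep a constant sign far from the origin. Since $\N(\rho)$ is bounded, fix $R>0$ with $\N(\rho)\subseteq \ol{B_R}$ and set $U=\RE^2\setminus \ol{B_R}=\{z:|z|>R\}$. The set $U$ is open and connected, while $W\rho$ is continuous and nowhere zero on $U$; hence $W\rho$ has a constant sign throughout $U$. If that sign is positive, then $W\rho(z)>0$ for all $|z|>R$, so $\Omega_-(\rho)\subseteq \ol{B_R}$ and we are done. Thus everything reduces to excluding the possibility that $W\rho<0$ on all of $U$.

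To rule this out I would invoke the nonnegativity of the position marginal, which is where the hypothesis $\rho\ge0$ enters. The position probability density $n_\rho(x)=\langle x|\rho|x\rangle=\sum_i\lambda_i|\psi_i(x)|^2$ is nonnegative, lies in $L^1(\RE)$, and is the marginal of the Wigner function, $\int_{\RE}W\rho(x,\xi)\,d\xi=n_\rho(x)$. Suppose, for contradiction, that $W\rho<0$ everywhere on $U$. For every $x$ with $|x|>R$ the whole line $\{x\}\times\RE$ lies in $U$, so $\xi\mapsto W\rho(x,\xi)$ is continuous and strictly negative, and its integral is $<0$, contradicting $n_\rho(x)\ge0$. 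I expect the only delicate point to be the rigorous handling of this marginal identity, since $W\rho(x,\cdot)$ need not be in $L^1$ for every $x$; I would circumvent this by testing against an arbitrary nonnegative $\phi\in C_c^\infty(\RE)$ supported in $\{|x|>R\}$ and using the weak marginal relation $\iint \phi(x)\,W\rho(x,\xi)\,dx\,d\xi=\int\phi(x)\,n_\rho(x)\,dx$. The right-hand side is $\ge0$, whereas on the support of $\phi$ one has $W\rho\le0$, forcing the left-hand side to be $\le0$; hence both vanish for every such $\phi$, giving $W\rho=0$ a.e.\ on $U$ and contradicting the continuity and strict negativity of $W\rho$ there. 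This eliminates the negative case, so $W\rho>0$ on $U$, whence $\Omega_-(\rho)\subseteq\ol{B_R}$ is bounded; combined with closedness, $\Omega_-(\rho)$ is compact.
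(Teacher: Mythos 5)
Your proof is correct, and for the substantive step---boundedness of $\Omega_-(\rho)$---it takes a genuinely different route from the paper. Both arguments begin identically: continuity of $W\rho$ (which you justify explicitly via the uniform bound $|W\psi_i|\le 1/\pi$ and the Weierstrass $M$-test, whereas the paper takes it for granted) plus boundedness of $\N(\rho)$ force $W\rho$ to have a constant sign on the connected set $\{|z|>R\}$, and closedness is in both cases the preimage of a closed set under a continuous map. The divergence is in how the eventually-negative case is excluded: the paper invokes an external result of de Gosson and Luef (\cite[Theorem 7]{GossonLuef07}, a Hardy-type statement that no Wigner function can be asymptotically dominated above by certain Gaussians), while you use only the elementary fact that the position marginal $\int W\rho(x,\cdot)\,d\xi=n_\rho(x)\ge 0$, combined with the geometric observation that for $|x|>R$ the entire vertical line $\{x\}\times\RE$ lies outside $\ol{B_R}$. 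Your route is more self-contained and, usefully, independent of decay rates (a Husimi-function test against a distant coherent state, for instance, would run into exactly the decay-comparison issue that forces the paper to cite \cite{GossonLuef07}). The one point that must be written out carefully is the weak marginal identity you flag yourself: since $W\rho(x,\cdot)$ need not be in $L^1(\RE_\xi)$, the double integral $\iint\phi(x)W\rho(x,\xi)\,dx\,d\xi$ has to be given a meaning---here the sign-definiteness of the integrand on $\operatorname{supp}\phi\times\RE$ lets Tonelli define it in $[-\infty,0]$, and a Gaussian regularisation $e^{-\epsilon\xi^2}$ together with Fubini on the kernel representation of $W\rho$ identifies the limit with $\int\phi\, n_\rho\ge 0$, yielding the contradiction. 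With that standard regularisation spelled out, your argument is complete.
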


\begin{proof}
We first prove that if $\N(\rho)$ is bounded then so is $\Omega_-(\rho)$. Since $W\rho$ is continuous and its nodal set $\N(\rho)$ is bounded, there exists $R >0$ such that for all $|z | >R$, $W\rho(z)$ is either strictly positive or strictly negative .

It follows from \cite[Theorem 7]{GossonLuef07} that  no Wigner function (pure or mixed) can be asymptotically bounded above by certain Gaussians. Hence, $W\rho(z)$ cannot be negative for all $|z | >R$, and so must be positive. We conclude that $\Omega_-(\rho)$ is bounded.

We now prove that $\Omega_-(\rho)$ is closed. Define a compact set $\Sigma \supseteq \Omega_-(\rho)$. Since $W\rho(z)$ is continuous, it attains a minimum at some point $z_0 \in \Sigma \cap \Omega_-(\rho)$. Thus $\Omega_- (\rho)$ is the preimage of the closed interval $[W\rho(z_0),0]$, and is therefore closed. Since it is also bounded, it is compact.
\end{proof}

\medskip

Let us extend Theorem \ref{FirstTheorem} to mixed quantum states.

\begin{theorem}\label{SecondTheorem}
Let $\H \subseteq L^2(\RE)$. The nodal sets $\N(\rho)$ are bounded for all $\rho \in \D(\H)$ if and only if $\H \subseteq \H_{z_0,S}^N$ for some $N \in \IN_0$, $z_0 \in \RE^2$ and $S \in $ \emph{Sp}$(2)$.
	\end{theorem}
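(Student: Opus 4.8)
The plan is to prove both implications, with the hard direction being the forward one. The reverse implication is easy: if $\H\subseteq \H_{z_0,S}^N$, then every $\rho\in \D(\H)$ has a spectral decomposition $\rho=\sum_i \lambda_i \Pi_{\psi_i}$ with each $\psi_i\in \H\subseteq \H_{z_0,S}^N$, so by \cref{FirstRemark} each $\N(W\psi_i)$ is bounded, indeed all contained in a common bounded set (a ball of radius depending only on $N$, $z_0$, $S$, since the covariance formula shows all these Wigner functions are a fixed symplectic-and-displacement image of Wigner functions of elements of $\H^N$). The point is that boundedness of the individual nodal sets must be leveraged to control $\N(\rho)$; here I would use \cref{Theorem5} to pass from bounded nodal sets to compact negative sets, and argue that outside a sufficiently large ball every $W\psi_i$ is strictly positive, forcing $W\rho=\sum_i\lambda_i W\psi_i$ to be strictly positive there as well, whence $\N(\rho)$ is bounded.

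For the forward implication, suppose $\N(\rho)$ is bounded for all $\rho\in \D(\H)$. First I would note that this applies in particular to every pure state $\Pi_\psi$ with $\psi\in \H$ normalised, so by \cref{FirstTheorem} each such $\psi$ lies in \emph{some} $\H_{z_0,S}^N$ — but a priori the parameters $N$, $z_0$, $S$ could depend on $\psi$. The entire content of the theorem is to upgrade this pointwise statement to a \emph{uniform} one: a single triple $(N,z_0,S)$ works for all of $\H$. My approach would be to first bound the dimension. If $\H$ were infinite-dimensional, or finite-dimensional of large dimension, I would construct a state in $\D(\H)$ whose nodal set is unbounded, contradicting the hypothesis. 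The natural device is to take convex combinations (or limits) of pure states drawn from $\H$: if $\psi_1,\psi_2\in \H$ are such that the supports of $W\psi_1$ and $W\psi_2$ force their negative regions to spread out incompatibly, then $\rho=\tfrac12\Pi_{\psi_1}+\tfrac12\Pi_{\psi_2}$ can be arranged to have $W\rho$ vanishing on an unbounded set. This should show $\dim\H\le N+1$ for some finite $N$, so that $\H$ is finite-dimensional.

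Having reduced to the finite-dimensional case, the core difficulty is to show that a single symplectic frame simultaneously ``straightens'' all of $\H$. Here I expect the main obstacle: ruling out the possibility that different one-dimensional subspaces of $\H$ require genuinely different $(z_0,S)$. The key idea I would pursue is that if $\psi,\phi\in \H$ with $\psi\in \H^N_{z_0,S}$ and $\phi\in \H^N_{z_0',S'}$ for incompatible frames, then a suitable superposition $a\psi+b\phi\in \H$ (which is again in $\H$ by linearity) cannot lie in any $\H^{N'}_{z_0'',S''}$, so its Wigner function would have an unbounded nodal set by \cref{FirstTheorem}, contradicting the hypothesis applied to $\Pi_{a\psi+b\phi}$. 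Making this rigorous requires understanding how the class of states with bounded nodal set behaves under linear combination: an element of $\H^N_{z_0,S}$ is exactly $D(z_0)\mu(S^{-1})$ applied to a finite Hermite combination, i.e.\ a Gaussian times a polynomial after an affine symplectic change of variables. Two such functions with different Gaussian envelopes sum to something whose Wigner function is generically \emph{not} of Hudson--type form, hence has an unbounded nodal set. I would formalise this by arguing at the level of the Gaussian decay rate and the symplectic covariance matrix: all $\psi\in \H$ must share a common Gaussian envelope (equivalently, the same $S$ up to the stabiliser of $\H^N$) and the same displacement $z_0$, after which $\H\subseteq \H^N_{z_0,S}$ follows by applying the inverse transformation $\mu(S)D^{-1}(z_0)$ and checking the image lands in $\H^N$. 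The delicate bookkeeping lies in quantifying ``incompatible frames'' so that the superposition argument genuinely produces an unbounded nodal set, which is where \cref{FirstTheorem} is indispensable.
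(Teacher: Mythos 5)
Your reverse implication is essentially the paper's argument (spectral decomposition, \cref{Theorem5} to get strict positivity of each $W\psi_i$ outside a compact set, then sum), and it is correct. One parenthetical claim is false, though dispensable: there is \emph{no} ball of radius depending only on $(N,z_0,S)$ containing the nodal sets of all elements of $\H^N_{z_0,S}$ --- the paper explicitly remarks that these nodal sets are bounded but not uniformly bounded (e.g.~in $\H^1$ the zero circle of $W(\cos\theta\,|0\>+\sin\theta\,|1\>)$ escapes to infinity as $\theta\to 0$). You do not need uniformity: for a fixed $\rho$ the spectral decomposition has finitely many terms, so the finitely many bounded nodal sets have a bounded union.

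In the forward direction your frame-compatibility step is exactly the paper's key idea: if $\psi_1\in\H^{N_1}_{z_1,S_1}$ and $\psi_2\in\H^{N_2}_{z_2,S_2}$ with $(z_1,S_1)\neq(z_2,S_2)$, then the superposition $\psi_1+\psi_2\in\H$ lies in no $\H^{N}_{z,S}$, so \cref{FirstTheorem} gives an unbounded nodal set, a contradiction. But there are two genuine gaps. First, your proposed device for bounding $\dim\H$ --- a \emph{mixture} $\rho=\tfrac12\Pi_{\psi_1}+\tfrac12\Pi_{\psi_2}$ with unbounded nodal set --- cannot work: under the hypothesis every $\psi\in\H$ has bounded nodal set, so by \cref{Theorem5} each $W\psi_i$ is strictly positive outside a compact set, hence so is any convex combination, and every mixture of states of $\H$ automatically has bounded nodal set. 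Negativity can only be injected through coherent superpositions, i.e.~through new pure states of $\H$, not through mixing. Second, and more importantly, even after fixing a common frame $(z_0,S)$ you only obtain $\H\subseteq\bigcup_{N}\H^N_{z_0,S}$, and nothing in your outline rules out $\H$ being an infinite-dimensional subspace of this union with $N$ unbounded over $\H$. The paper closes this with a completeness argument: $\bigcup_N\H^N_{z_0,S}$ has a countable Hamel basis, whereas an infinite-dimensional Hilbert space (being complete) has uncountable algebraic dimension; hence $\H$ is finite-dimensional and sits inside a single $\H^N_{z_0,S}$. Some argument of this type (Hamel-basis cardinality, or Baire category applied to the increasing chain of closed subspaces $\H\cap\H^N_{z_0,S}$) is indispensable and is missing from your proposal; note also that it must come \emph{after} the frame has been fixed, not before as in your ordering.
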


\begin{proof}

Let us first prove sufficiency. If $\H \subseteq \H_{z_0,S}^N$ then $\N(W\psi)$ is bounded for all $\psi \in \H$ (Remark \ref{FirstRemark}). 
Let then $\rho \in \D(\H)$ be a mixed state. We have
$$
\rho =\sum_{i=0}^N \lambda_i \rho_i 
$$
for some $0\le\lambda_i\le 1$, $\sum_i \lambda_i=1$ 
and some pure states $\rho_i=|\psi_i \>\<\psi_i|$, $\psi_i\in \H$. 
Since each set $\N(W\psi_i)$ is bounded, there exists a compact $\Omega \subset \RE^2$ with $\N(W\psi_i) \subset \Omega$ for all $i=0,..,N$. By Theorem \ref{Theorem5}, $W\psi_i(z \not \in \Omega) >0$ for all $i=0,..,N$. Hence
$$
W\rho(z)= \sum_{i=0}^N \lambda_i W\rho_i (z) >0 \quad , \quad \forall z \not \in \Omega
$$
and so $\N(\rho) \subset \Omega$ is also bounded.

\medskip

Conversely, assume that $\N(W\psi)$ is bounded for every $\psi \in \H$. By Theorem \ref{FirstTheorem}, each $\psi$ belongs to some space $\H_{z_0,S}^N$ with  parameters $(N,z_0,S)$. Suppose that exist $\psi_1,\psi_2 \in \H$ such that $\psi_1 \in \H_{z_1,S_1}^{N_1}$ and $\psi_2 \in \H_{z_2,S_2}^{N_2}$ with $(z_1,S_1) \not=(z_2,S_2)$. Then $\psi =\psi_1+\psi_2 \in \H$, but $\psi \not\in \H_{z_0,S}^N$ whatever the choice of $(z_0,S,N)$. By Theorem \ref{FirstTheorem} this would imply that $\N(W\psi)$ is unbounded, which is a contradiction. Hence, we must have $(z_1,S_1) =(z_2,S_2)$ and so $\H \subseteq \oplus_{N=0}^\infty \H^N_{z_0,S}$ for some fixed $(z_0,S)$.  

Let $V=\oplus_{N=0}^\infty \H^N_{z_0,S}$. This is only a pre-Hilbert space, because it is not complete. It has an infinite countable Hamel basis $B\subseteq \cup_{N=0}^\infty B^N$ (where $B^N$ is a basis of the $(N+1)$-dimensional Hilbert space $\H^N$). Since $\H \subseteq V$, we conclude that $\H$  also has a finite, or an infinite countable Hamel basis. This, however, is only possible if $\H$ is finite-dimensional because infinite-dimensional Hilbert spaces (being complete) have uncountable algebraic dimension, i.e.~only have infinite {\it uncountable} Hamel basis \cite[Lemma 3.6.4]{Simon2015}]. Hence, we necessarily have $\H\subseteq \H^N_{z_0,S}$ for some $N\in \IN_0$, concluding the proof.
\end{proof}

\begin{remark} Observe that $\cup_{z_0,S}\H^N_{z_0,S}$ is the set of pure states of stellar rank less than or equal to $N$~\cite{chabaud2020stellar}. 
 
Moreover, the nodal sets of the states $\rho\in\D(\H^N_{z_0,S})$ are always bounded but not necessarily uniformly bounded.

\end{remark}

The next result will be used in the proof of \cref{ipos}.

\begin{theorem}\label{zeros_inters}
Let $\H\subseteq \H^N$ or $\H= L^2(\RE)$ be such that $\D_+(\H) \not=\emptyset$. Then: 
\begin{equation}\label{mathp}
\bigcap_{\rho \in \D(\H)} \N(\rho)=\emptyset.
\end{equation}
%and also $\bigcap_{\rho \in \D_+(\H)} \N(\rho)=\emptyset$.
\end{theorem}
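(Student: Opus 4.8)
The plan is to handle the two regimes separately, with the infinite-dimensional one being essentially immediate. If $\H=L^2(\RE)$, the vacuum $\rho_0=|0\>\<0|\in\D_+$ has the strictly positive Gaussian Wigner function $W\rho_0(z)=\tfrac1\pi e^{-|z|^2}$, so $\N(\rho_0)=\emptyset$ and hence $\bigcap_{\rho\in\D}\N(\rho)\subseteq\N(\rho_0)=\emptyset$. The real content is the finite-dimensional case $\H\subseteq\H^N$, where every WPS may have zeros; here I would argue by contradiction, assuming there is a common zero $z_0\in\bigcap_{\rho\in\D(\H)}\N(\rho)$.

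First I would translate this pointwise hypothesis into an orthogonality statement about $\H$. Recall the Grossmann--Royer reflection operator $R_{z_0}=D(z_0)\,\mathsf P\,D(z_0)^{-1}$, where $\mathsf P\psi(x)=\psi(-x)$ is the parity; it is unitary and self-adjoint with $R_{z_0}^2=I$, and satisfies $W\rho(z_0)=\tfrac1\pi\,{\rm tr}(\rho\,R_{z_0})$ \cite{GossonBook1}. Applied to pure states $\rho=|\psi\>\<\psi|$ with $\psi\in\H$, the assumption $W\psi(z_0)=0$ gives $\<\psi|R_{z_0}\psi\>=0$ for every $\psi\in\H$ (by the homogeneity $W(c\psi)=|c|^2W\psi$ this holds for all, not just normalized, $\psi$). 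Since a Hermitian quadratic form vanishing on all vectors of a complex subspace is identically zero, this forces $P_\H R_{z_0}P_\H=0$, i.e. $R_{z_0}\H\perp\H$, where $P_\H$ is the orthogonal projector onto $\H$.

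To reach a contradiction I would then bring in a state $\rho_+\in\D_+(\H)$, which exists by hypothesis. Its reflected copy $\sigma:=R_{z_0}\rho_+R_{z_0}$ is again a density matrix, with ${\rm range}(\sigma)=R_{z_0}\,{\rm range}(\rho_+)\subseteq R_{z_0}\H$, hence orthogonal to ${\rm range}(\rho_+)\subseteq\H$. For positive operators with orthogonal ranges $\rho_+\sigma=0$, so ${\rm tr}(\rho_+\sigma)=0$. On the other hand $R_{z_0}$ reflects phase space through $z_0$, so $W\sigma(z)=W\rho_+(2z_0-z)$, and the overlap formula (\ref{HS-inner}) gives
\begin{equation*}
{\rm tr}(\rho_+\sigma)=2\pi\<W\rho_+\,|\,W\sigma\>=2\pi\int_{\RE^2}W\rho_+(z)\,W\rho_+(2z_0-z)\,dz .
\end{equation*}
This is the crux: the integrand is a product of two non-negative functions, so the integral can vanish only if that product vanishes almost everywhere.

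Finally I would exclude this using the special structure of finite-dimensional Wigner functions. For $\rho_+\in\D(\H)$ with $\H\subseteq\H^N$, $W\rho_+$ is a real polynomial times the Gaussian $e^{-|z|^2}$, so it is nonzero (as $\int W\rho_+=1$) and, being Wigner-positive, strictly positive off the zero set of that polynomial, a proper algebraic curve of measure zero. Both factors $W\rho_+(z)$ and $W\rho_+(2z_0-z)$ are therefore positive on full-measure sets, the integrand is positive on a set of full measure, and $\int_{\RE^2}W\rho_+(z)W\rho_+(2z_0-z)\,dz>0$, contradicting ${\rm tr}(\rho_+\sigma)=0$. Hence no common zero can exist. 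I expect the main obstacle to be precisely the two non-routine links: the passage from the pointwise hypothesis to the subspace orthogonality $R_{z_0}\H\perp\H$, and the strict-positivity step — once the reflected state $\sigma$ is introduced, the polynomial-times-Gaussian form of finite-dimensional Wigner functions is what closes the argument.
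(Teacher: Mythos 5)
Your argument is correct, and its skeleton coincides with the paper's: assume a common zero $z_0$, polarise the vanishing of the quadratic form $\psi\mapsto\<\psi|R_{z_0}\psi\>$ to get $\H\perp R_{z_0}\H$ (the paper phrases this as $\H_{z_0}\perp\Pi\H_{z_0}$ after undoing the displacement, which is the same statement), and then derive a contradiction from $\int W\rho_+(z)\,W\rho_+(2z_0-z)\,dz=0$ for some $\rho_+\in\D_+(\H)$. The only genuine divergence is in how that last integral is shown to be positive. The paper argues that the vanishing of the integral of a product of two non-negative continuous functions forces $\N(\rho_1)\cup\N(\rho_2)=\RE^2$, and then invokes the boundedness of nodal sets of states in $\D(\H^N_{z_0,S})$ (\cref{FirstTheorem}, \cref{SecondTheorem}); you instead observe that for $\rho_+\in\D(\H^N)$ the Wigner function is a nonzero real polynomial times a Gaussian (Eq.~(\ref{Wignermn})), so its zero set has Lebesgue measure zero and the integrand is positive almost everywhere. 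Your closing step is more elementary and self-contained -- it does not rely on the inverse-problem machinery of \cite{Abreu} -- at the mild cost of using the explicit Laguerre--Gaussian form of the finite-dimensional Wigner functions; the paper's version is the one that generalises naturally to any Hilbert space whose states have bounded nodal sets. Your direct disposal of the case $\H=L^2(\RE)$ via $\N(|0\>\<0|)=\emptyset$ is also cleaner than strictly necessary (the paper reduces it to the finite-dimensional case by monotonicity of the intersection), and both are valid.
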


\begin{proof}
We may assume that $\H\subseteq \H^N$. Note that if the result is true for some $\H\subseteq \H^N$ such that $\D_+(\H) \not=\emptyset$, it will necessarily be true in any other case $\D(\H') \supset \D(\H)$. 

Assume that Eq.~(\ref{mathp}) is not true. Then there exists $z_0\in \RE^2$ such that: 
\begin{equation}\label{W=0}
W\psi (z_0)=0 \, , \quad \forall \psi \in \H. 
\end{equation}
Consider two arbitrary pure states $\psi_1,\psi_2 \in \H$ and construct $\psi=c_1\psi_1+c_2\psi_2$ for some $c_1,c_2 \in \CO$. From Eq.~(\ref{W=0}) we have:
$$
0=W\psi(z_0)=|c_1|^2W\psi_1(z_0)+|c_2|^2W\psi_2(z_0)+
2{\rm Re} \left(\overline{c_1}c_2W(\psi_1,\psi_2)(z_0)\right)
$$
$$
\Longrightarrow \, {\rm Re} \left(\overline{c_1}c_2W(\psi_1,\psi_2)(z_0)\right)=0 .
$$
Since this is true for all $c_1,c_2\in \CO$, we can easily conclude that:
$$
W(\psi_1,\psi_2)(z_0)=0.
$$
This equation holds for all $\psi_1,\psi_2\in \H$ and is equivalent to
\begin{equation}\label{eq_ort}
\<\psi_1| \Pi(z_0) \psi_2\>=0 \, \Longleftrightarrow \, 
\<\psi_1| D(z_0)\Pi D^\dagger(z_0) \psi_2\>=0
\end{equation}
where $\Pi(z_0)= D(z_0)\Pi D^\dagger (z_0)$ is the displaced parity operator, and $\Pi=\Pi(0)$ is the parity operator: $\Pi\psi(x)=\psi(-x)$. 
Let: 
$$
\H_{z_0}= D^\dagger (z_0)[\H]= \{D^\dagger (z_0) \psi\,|\, \psi\in \H \}.
$$ 
Equation (\ref{eq_ort}) then implies that:
$$
\<\phi_1| \Pi \phi_2\>=0 \, ,\,\, \forall \phi_1,\phi_2 \in \H_{z_0} \Longleftrightarrow \H_{z_0} \perp \Pi\H_{z_0} \Longleftrightarrow \D(\H_{z_0}) \perp \D(\Pi\H_{z_0}),
$$
where $\Pi\H_{z_0}= \{\Pi \psi\,|\, \psi\in \H_{z_0} \}$. It follows that (cf.~Eq.~(\ref{HS-inner}): 
$$
\int  W\rho_1(z) W\rho_2(z) dz=0 \, , \quad \forall \rho_1\in \D(\H_{z_0})\, ,\, \rho_2\in \D(\Pi\H_{z_0}).
$$
Now assume that $\rho \in \D_+(\H)$. Then $\rho_1 = D^\dagger(z_0) \rho D(z_0) \in \D_+(\H_{z_0})$, and  $\rho_2=\Pi \rho_1 \Pi \in D_+(\Pi\H_{z_0})$. We then have (using the continuity of the Wigner function):   
$$
\int W\rho_1(z) W\rho_2(z) dz=0 \quad \Longrightarrow 
\quad \N(\rho_1) \, \cup \, \N(\rho_2) =\RE^2,
$$
which is not possible since both nodal sets are bounded ($\N(\rho_2)=-\N(\rho_1)\coloneqq\{-x \,|\, x\in \N(\rho_1)\}$ and $\H_{z_0} \subseteq \H_{z_0}^N \Longrightarrow \rho_1 \in \D(\H_{z_0}^N) \Longrightarrow  \N(\rho_1)$ is bounded). We conclude that Eq.~(\ref{mathp}) must hold. 
\end{proof}

\subsection{Convex analysis}

Our notation and results follow closely the presentation of \cite[Chapter 8]{Simon2011}, \cite[Chapter 2]{Webster} and \cite{Border2,Border5}. In this subsection  and the next, $V$ is a real topological vector space (TVS). Recall the basic definitions:

\begin{definition}\label{Aff+Conv}
Let $S\subseteq V$ be a subset of a real TVS $V$. Then $S$ is {\bf(A)} affine, {\bf(B)} convex if and only if $S$ is nonempty, and for all $u,v \in S$:
\begin{enumerate}
	\item[{\bf(A)}] $tu+(1-t)v \in S$, $\forall t \in \RE$
	\smallskip
	\item[{\bf(B)}] $tu+(1-t)v \in S$, $\forall t \in [0,1]$
\end{enumerate}
\smallskip
The affine hull, convex hull and closed convex hull of a subset $X \subseteq V$, denoted ${\rm Aff} (X)$, ${\rm conv} (X)$ and $\overline{\rm conv} (X)$ respectively, are defined as the smallest affine, convex and closed convex subsets of $V$ that contain $X$. 

Equivalently, ${\rm Aff} (X)$ and ${\rm conv} (X)$ can be defined as the sets of all affine, respectively convex, combinations of the elements of $X$ (cf.~\cite[Theorem 5.2]{Simon2011}):
\begin{equation}\label{AffineHull}
{\rm Aff}(X)= \left\{ \sum_i t_i x_i : \, x_i \in X,\, t_i \in \RE, \, \sum_i t_i =1 \right\}
\end{equation}
\begin{equation}\label{ConvexHull}
{\rm conv}(X)= \left\{ \sum_i t_i x_i : \, x_i \in X,\, t_i \in [0,1], \, \sum_i t_i =1 \right\}
\end{equation}
where the sums are finite. We also always have $\overline{\rm conv} (X)=\overline{{\rm conv} (X)}$ \cite[Corollary 5.1.2]{Border5}.

Finally, for $u,v \in V$, the line segment from $u$ to $v$ is written: 
$$
[u,v]={\rm conv}(\{u,v\})=\{(1-t)u+tv: t\in [0,1] \}
$$
and $(u,v)=[u,v]\,\backslash \, \{u,v\}$, $(u,v]=[u,v]\,\backslash \, \{u\}$, $[u,v)=[u,v]\,\backslash \, \{v\}$.
\end{definition}

Note that if $V$ is finite-dimensional and $X$ is compact then ${\rm conv} (X)$ is also compact; and so $\overline{\rm conv} (X)={\rm conv} (X)$ \cite[Corollary 2.2.4]{Border2}.  

Two other key definitions in convex analysis are those of {\it face} and {\it extreme point} \cite[p.162]{Rockafellar}. 

\begin{definition}\label{extreme}
Let $X\subseteq V$ be a convex subset of a vector space $V$. 
\begin{itemize}
\item A convex set $\F \subseteq X$ is a face of $X$ if and only its points cannot be written as a proper convex combination of elements of $X \backslash \F$, i.e
$$
w\in \F \quad \mbox{and} \quad w=(1-t)u+tv , ~ 0<t<1 , ~  u,v \in X \,\Longrightarrow \, u,v \in \F
$$
If $\F \not= X$ then $\F$ is a proper face.

\item Extreme points are faces with a single point, i.e.~$w \in \E(X) $ if and only if
$$
w=(1-t)u+tv , ~ 0<t<1 , ~  u,v \in X \,\Longrightarrow \, u=v=w
$$	
\end{itemize}

\end{definition}

Note that if $\F$ is a face of a convex set $X$ then \cite[Proposition 8.6]{Simon2011}:
\begin{equation}\label{ExtFaces}
\E(\F)=\E(X)\cap \F
\end{equation}

Faces can be determined as extreme sets of continuous functionals. Let $f: V \longrightarrow \RE$ be a linear functional that is not constant on the convex  set $X \subset V$. If $a=$ sup $\{f(x): x\in X\}$ then
\begin{equation}\label{Faces}
\F= \{x\in X : f(x)=a \}
\end{equation}
if nonempty, is a proper face of $X$ \cite[Theorem 8.3]{Simon2011}.

One of the main theorems in convex analysis is the Krein--Milman Theorem \cite{Krein,Simon2011} that we have already stated in the Introduction (\cref{Krein--Milman}). In the finite-dimensional case, if $S \subset V$ is convex and compact, the Krein--Milman Theorem simplifies to \cite[Theorem 2.6.16]{Webster}:
\begin{equation}\label{KM-eq-finite}
S={\rm conv}(\E(S) )		
\end{equation}
We shall discuss the infinite-dimensional case in more detail in \cref{sec:geom_infinite}.

\subsection{Relative topology}

In relative topology the notions of interior and boundary are dependent of the ambient space. In this section we provide some standard definitions and a few basic results. The reader should refer to \cite{Border2,Border5,Rockafellar,Simon2011,Webster} for more details.

\begin{definition}
Let $X$ be a subset of a real TVS $V$. Let $S \subset X$.
Then:

\begin{enumerate}

\item[{\bf(A)}] The relative interior of $S$ with respect to $X$, denoted ${\rm int}_X S$, is defined by: 
$$
x\in \, {\rm int}_X S \, \Longleftrightarrow \,\mbox{there exists an open set }\, U\subset V \mbox{ with }x\in U  \cap X \subseteq S
$$

\item[{\bf(B)}]  The relative boundary of $S$ with respect to $X$, denoted $\partial_X S$, is defined as:
$$
\partial_X S =X\setminus ({\rm int}_X S \, \cup \, {\rm int}_X (X\backslash S))
$$

\item[{\bf(C)}] If $X={\rm Aff}(S)$ then ${\rm int}_X S={\rm iint} S$ and $\partial_X S =\partial^i S$ (the intrinsic interior and intrinsic boundary, respectively). 
\smallskip
\item[{\bf(D)}] Let $X$ be an affine subspace of $V$ of dimension $\ge 1$. The relative algebraic interior of $S$ with respect to $X$, denoted ${\rm aint}_X\, S$, is defined by (cf.~\cite{Weis1} and \cite[pag.2]{Zalinescu}):
$$
x\in \, {\rm aint}_X S \, \Longleftrightarrow \, \forall_{y\in X\backslash \{x\}} \, \exists_{z\in (x,y]}: [x,z] \subset S~.
$$

\end{enumerate}

\end{definition}

\begin{notation} The above definitions of relative interior and relative boundary are the definitions of interior and boundary in the relative (induced) topology in $X$. In convex analysis these terms are often used to denote the intrinsic interior and boundary (see, for example, \cite{Border5,Rockafellar,Webster}). However, we need the more general definitions given in {(A)} and {(B)} above, hence the more explicit notation. In (C) we use the notation of \cite{Simon2011}. The definition given in (D) is more general than the one in \cite{Weis1} (where $X={\rm Aff}(S)$). An equivalent definition of relative algebraic interior is given in \cite[Definition 5.2.5]{Border5}.  
\end{notation}

%\medskip
 
 In the sequel we will also use the following Theorem:

\begin{theorem}\label{ai}
Let $X$ be an affine subspace of a finite-dimensional TVS $V$ and let ${\rm dim}\, X \ge 1$. Let $S\subset X$ be a convex set. We have:
\begin{equation}\label{AI}
{\rm int}_X S = {\rm aint}_{X} S
\end{equation}
\end{theorem}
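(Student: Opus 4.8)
The plan is to establish the two inclusions separately. The inclusion ${\rm int}_X S \subseteq {\rm aint}_X S$ holds in any TVS and uses no finite-dimensionality: if $x \in {\rm int}_X S$, fix an open set $U \subseteq V$ with $x \in U$ and $U \cap X \subseteq S$. For any $y \in X \setminus \{x\}$ the map $t \mapsto x + t(y-x)$ is continuous, lies in $X$ by affineness, and equals $x \in U$ at $t=0$; hence there is $\delta \in (0,1]$ with $x + t(y-x) \in U \cap X \subseteq S$ for all $t \in [0,\delta]$. Taking $z = x + \delta(y-x) \in (x,y]$ gives $[x,z] \subseteq S$, so $x \in {\rm aint}_X S$.

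For the reverse inclusion ${\rm aint}_X S \subseteq {\rm int}_X S$, which is where finite-dimensionality enters, I would first translate so that a fixed point $x \in {\rm aint}_X S$ becomes the origin; since translation is an affine homeomorphism it preserves both notions, and afterwards $X$ is a linear subspace of $V$ of dimension $d \ge 1$. Choose a basis $e_1,\dots,e_d$ of $X$. Applying the defining property of ${\rm aint}_X S$ to each of the $2d$ points $y = \pm e_i \in X$ yields scalars $\alpha_i,\beta_i \in (0,1]$ with $[0,\alpha_i e_i] \subseteq S$ and $[0,-\beta_i e_i] \subseteq S$. Setting $\gamma_i = \min(\alpha_i,\beta_i) > 0$, all $2d$ points $\pm \gamma_i e_i$ lie in $S$, and by convexity their hull $C = {\rm conv}\{\pm \gamma_i e_i : 1 \le i \le d\} \subseteq S$. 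In the coordinates induced by $(e_i)$ one checks that $C$ is the cross-polytope $\{\, v = \sum_i v_i e_i : \sum_i |v_i|/\gamma_i \le 1 \,\}$, which is a neighborhood of the origin in $X$. Hence $0 \in {\rm int}_X C \subseteq {\rm int}_X S$, completing the inclusion.

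The step that genuinely requires finite dimension — and the main obstacle — is the assertion that this cross-polytope $C$ is a neighborhood of the origin in $X$. This rests on the fact that a finite-dimensional Hausdorff TVS carries a unique topology, equivalent to the Euclidean one, under which a convex set whose finitely many spanning vertices surround the origin necessarily has nonempty interior. In infinite dimensions this fails: a convex set may have nonempty algebraic interior (core) yet empty topological interior, which is exactly the obstruction forcing the separate, more delicate treatment in \cref{sec:geom_infinite}. I would also remark that when ${\rm aint}_X S = \emptyset$ the inclusion is vacuous, and that the construction above incidentally shows that ${\rm aint}_X S \neq \emptyset$ forces $\mathrm{Aff}(S) = X$, consistent with both relative interiors being empty whenever $\mathrm{Aff}(S) \subsetneq X$.
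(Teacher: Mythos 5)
Your proof is correct, but it takes a genuinely different and more self-contained route than the paper. The paper does not prove the two inclusions directly: it cites the known equality for the cases $X=V$ \cite[Corollary 6.4.1]{Rockafellar} and $X={\rm Aff}(S)$ \cite[Corollary 3.3]{Weis2}, and then only supplies an argument for the residual case ${\rm Aff}(S)\subsetneq X$, where it shows both sides are empty (using \cref{Simon}-A for ${\rm int}_X S=\emptyset$, and the existence of $y\in X\setminus{\rm Aff}(S)$ with $(x,y)\cap S=\emptyset$ for ${\rm aint}_X S=\emptyset$). You instead give a uniform first-principles proof: the inclusion ${\rm int}_X S\subseteq{\rm aint}_X S$ via continuity of $t\mapsto x+t(y-x)$, valid in any TVS, and the reverse inclusion via the cross-polytope ${\rm conv}\{\pm\gamma_i e_i\}$, which simultaneously handles the case ${\rm Aff}(S)\subsetneq X$ (as you note, a point of ${\rm aint}_X S$ forces $S$ to affinely span $X$), so no case split is needed. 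What your approach buys is independence from the cited literature and a transparent localisation of where finite-dimensionality enters (the cross-polytope being a neighbourhood of the origin); what the paper's approach buys is brevity. One small point worth making explicit in either version: the identification of the subspace topology on $X$ with the Euclidean one, which your cross-polytope step and the results cited by the paper both rely on, requires $V$ to be Hausdorff — you invoke this correctly, but it is an implicit hypothesis in the paper's standing convention that $V$ is a real TVS.
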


\begin{proof}
This Theorem is well-known for the case $X=V$ \cite[Corollary 6.4.1]{Rockafellar} and also for $X={\rm Aff}(S)$ \cite[Corollary 3.3]{Weis2}, (see also \cite[Proposition 5.2.7]{Border5}).

We have not found our slightly more general version in the literature, so we give a simple proof:
Since $S\subset X$, and $X$ is an affine space, we necessarily have $ {\rm Aff}(S)\subseteq X $ (recall that ${\rm Aff}(S)$ is the smallest affine space that contains $S$). Hence, the only case that we still have to prove is ${\rm Aff}(S)\subsetneq X$. It follows from \cref{Simon}-A that in this case ${\rm int}_X S=\emptyset$ (see below). On the other hand, since ${\rm Aff}(S)\subsetneq X$ there exists $y \in X\backslash {\rm Aff}(S)$, and so:
$$   
(x,y) \cap S =\emptyset \, , \quad \forall x \in S,
$$ 
(otherwise, $y \in {\rm Aff}(S)$). This, in turn, implies that $x \notin {\rm aint}_{X} S$, $\forall x\in S$ and thus ${\rm aint}_{X} S=\emptyset$. Therefore Eq.~(\ref{AI}) also holds trivially in the case ${\rm Aff}(S)\subsetneq X$, concluding the proof.
\end{proof}

We will also need the following basic results from convex analysis:
 
 \begin{lemma}\label{Simon} 
 Let $X$ be an affine subspace of a finite-dimensional TVS $V$. Let $S\subset X$ be a nonempty convex set. In {\bf (B, C, D)} assume that dim $X\ge 1$.  We have:
\begin{enumerate}

\item[{\bf (A)}] ${\rm int}_X S \not= \emptyset $ if and only if $X={\rm Aff}(S)$.
\smallskip

\item[{\bf (B)}] If $x \in {\rm iint}\, S$ and $y \in S\backslash \{x\}$ then $[x,y) \subset {\rm iint} \,S$.
\smallskip
\item[{\bf (C)}] Let $x \in {\rm iint}\, S$. Then $\forall_{y \in S \backslash\{x\}} ~\exists_{z\in S}:~ x\in (y,z)$.
\smallskip
\item[{\bf (D)}] If $S$ is compact and $x \in {\rm iint}\, S$, we have: 
$$  \forall_{y \in S \backslash\{x\}} \,\, \exists^1_{z\in ~{\partial^i} S}:~ x\in (y,z).
$$
\smallskip
\end{enumerate}
 	
\end{lemma}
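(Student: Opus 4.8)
The plan is to treat (A)--(D) as the classical facts about the intrinsic (relative) interior of a convex set in finite dimensions, assembling them from the standard theory (as in \cite[Chapter 6]{Rockafellar}, \cite{Webster}, \cite{Border5}) while identifying ${\rm iint}\,S = {\rm int}_{{\rm Aff}(S)} S$ with what Rockafellar calls the relative interior. The one input I would record at the outset is that a nonempty convex set $S$ in a finite-dimensional space has nonempty intrinsic interior, ${\rm iint}\,S \neq \emptyset$ \cite[Theorem 6.2]{Rockafellar} (concretely, choosing $d+1$ affinely independent points of $S$ with $d=\dim {\rm Aff}(S)$, the barycentre of the resulting simplex lies in ${\rm iint}\,S$). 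From this, (A) follows by a dimension count: if $X={\rm Aff}(S)$ then ${\rm int}_X S={\rm iint}\,S\neq\emptyset$; conversely, since $S\subset X$ and $X$ is affine we always have ${\rm Aff}(S)\subseteq X$, and if this inclusion were strict then ${\rm Aff}(S)$ would be a proper affine subspace of $X$, so no relatively open subset of $X$ could sit inside ${\rm Aff}(S)\supseteq S$; hence ${\rm int}_X S=\emptyset$, and contrapositively ${\rm int}_X S\neq\emptyset$ forces $X={\rm Aff}(S)$.

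Parts (B) and (C) are restatements of two classical results. Part (B) is the line-segment principle \cite[Theorem 6.1]{Rockafellar}: for $x$ in the relative interior and $y$ in the closure of $S$ (in particular $y\in S$), the half-open segment $[x,y)$ lies in ${\rm iint}\,S$. Part (C) is the prolongation characterisation of the relative interior \cite[Theorem 6.4]{Rockafellar}: $x\in {\rm iint}\,S$ if and only if for every $y\in S$ there is $\mu>1$ with $(1-\mu)y+\mu x\in S$; writing $z=(1-\mu)y+\mu x$ one has $x=(1-1/\mu)y+(1/\mu)z\in (y,z)$, which is exactly the stated assertion. I would spell out only this final translation.

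The substantive part is (D), where compactness is used to isolate a \emph{unique} intrinsic boundary point. Here I would fix $y\in S\setminus\{x\}$, let $\ell\subseteq {\rm Aff}(S)$ be the line through $y$ and $x$ (it lies in ${\rm Aff}(S)$ since both points do), parametrise it by $r(\mu)=(1-\mu)y+\mu x$, and study $M=\{\mu\in\RE: r(\mu)\in S\}$. As $r$ is an injective affine map and $S$ is convex, closed and bounded, $M$ is a compact interval $[\alpha,\beta]$ with $0,1\in M$, and part (C) gives $\beta>1$. Setting $z=r(\beta)$, closedness yields $z\in S$, while $\beta>1>0$ gives $x=r(1)\in(y,z)$. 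To see $z\in\partial^i S$, note that any relative neighbourhood of $z$ in ${\rm Aff}(S)$ contains points $r(\beta+\varepsilon)\notin S$, so $z\notin {\rm iint}\,S$; since $S$ is closed, $\partial^i S\subseteq S$, and $z\in S\setminus {\rm iint}\,S$ puts $z$ on $\partial^i S$ by definition. For uniqueness, any $z'$ with $x\in(y,z')$ lies on $\ell$ strictly beyond $x$ from $y$, so $z'=r(\mu')$ with $\mu'>1$: if $\mu'<\beta$ then $z'\in[x,r(\beta))\subset {\rm iint}\,S$ by (B), hence $z'\notin\partial^i S$, and if $\mu'>\beta$ then $z'\notin S\supseteq\partial^i S$; thus $\mu'=\beta$ and $z'=z$. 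The main obstacle I anticipate is not analytical but bookkeeping: reconciling the ambient-space definition of $\partial^i S$ used in the text (a set difference inside ${\rm Aff}(S)$) with the ``endpoint of the segment $S\cap\ell$'' picture, which is precisely the step where I invoke $\ell\subseteq {\rm Aff}(S)$ together with closedness to conclude $z\in\partial^i S$ and uniqueness.
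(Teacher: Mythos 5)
Your proposal is correct and follows essentially the same route as the paper: parts (A)--(C) are assembled from the standard finite-dimensional facts (nonemptiness of the intrinsic interior, the line-segment principle, and the prolongation/algebraic-interior characterisation), and for (D) you intersect $S$ with the line through $x$ and $y$ to obtain a compact segment, take the endpoint lying beyond $x$ as the boundary point, and derive uniqueness from (B). The only cosmetic difference is that you justify $z\in\partial^i S$ directly from maximality of the parameter plus closedness, whereas the paper deduces that both endpoints of the segment are intrinsic boundary points via (C); the content is the same.
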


\begin{proof} 

{\bf (A)} From \cite[Theorem 8.8]{Simon2011} there is a unique affine space $X \supset S$ such that ${\rm int}_X S \not= \emptyset $. In \cite[Theorem 6.2]{Rockafellar} is proved that this space is $X={\rm Aff}(S)$.

\medskip

{\bf (B)} Proved in \cite[Lemma 2.3.3]{Webster}.

\medskip

{\bf (C)} A simple proof follows from Theorem \ref{ai} with $X={\rm Aff}(S)$.  Let $y \in S\backslash \{x\}$ and define $u=2x-y$; then $u\in X$. Moreover, $x \in   {\rm aint}_{X} S$ and thus  exists $z\in (x,u]$ such that $z\in S$. Let $z=(1-p) x + p u$ for some $0 < p \le 1$, and consider the interval:
$$
[y,z]= \{(1-t)y +t z , \quad t \in [0,1] \}.
$$
Setting $t=\tfrac{1}{1+p}$ we show that $x\in (y,z)$.
 
\medskip

{\bf (D)} Consider the affine subspace: ${\rm Aff}(\{x,y\})$. This is a straight line, hence convex and closed. Since $S$ is compact, we have ${\rm Aff}(\{x,y\}) \,\cap \, S =[u,v]$ for some $u,v \in S$. 

From (C) we also conclude that $u,v \in \partial^i S$  (indeed, if say $v \in {\rm iint} \,S$, then there exists $z\in S$ such that $[u,v] \subsetneq [u,z) \subsetneq  {\rm Aff}(\{x,y\})\cap S=[u,v]$, a contradiction).   

Since $x\in {\rm iint}\, S$, we must have $x\in (u,v)$. Note also that $y\in [u,v]$. The two cases $y\in [u,x)$ and $y\in (x,v]$ are symmetric, so assume $y\in [u,x)$. If we take $z=v\in \partial^iS$ then $x\in (y,z)$, which proves existence.

Conversely, if $z\in S$ and $x\in (y,z)$ then we necessarily have $z\in {\rm Aff}(\{x,y\}) \,\cap \, S =[u,v]$. Since $y \in [u,x)$ we conclude that $z\in (x,v]$. By (B) the only boundary point of $S$ in $(x,v]$ is $v$, so $z=v$, proving uniqueness.   
\end{proof}

\section{Topology}
\label{sec:topo}

We consider the chain of sets: $\D_+(\H) \subset \D(\H) \subset \A(\H) \subset \B_1(\H) $ and investigate the topological properties of the sets of WPS. For the rest of this section the topology is always the relative topology induced by the trace norm in the affine hull of $\D(\H)$.

\subsection{The set of quantum states}

In this section, we give results on the properties of $\D(\H)$ which we use in the following sections. Most of these results are well-known, but we provide proofs for completeness. We consider both the finite-dimensional and infinite-dimensional cases, i.e.~$\H \subseteq \H^N$ and $\H=L^2(\RE)$.
Unless otherwise stated, the results are valid for both cases. 

Our first result concerns the affine hull of $\D(\H)$.

\begin{theorem}
Let $\H \subseteq L^2(\RE)$ and consider the affine space  
$$
 \A(\H)=\left\{\rho \in \B_1(\H):\, \rho \mbox{ is self-adjoint } , \; {\rm tr}\,\rho=1 \right\}
$$
We have: $\A(\H)={\rm Aff}(\D(\H))$. Moreover, $\A(\H)$ is closed.
\end{theorem}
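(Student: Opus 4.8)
The plan is to prove the equality $\A(\H)={\rm Aff}(\D(\H))$ by establishing the two inclusions separately, and then to obtain closedness by realising $\A(\H)$ as the intersection of level sets of continuous maps.

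First I would note that $\A(\H)$ is itself an affine subspace of $\B_1(\H)$: for $\rho_1,\rho_2\in\A(\H)$ and $t\in\RE$, the operator $t\rho_1+(1-t)\rho_2$ is again self-adjoint and trace-class, with trace $t+(1-t)=1$. Since $\D(\H)\subseteq\A(\H)$ and ${\rm Aff}(\D(\H))$ is by definition the smallest affine set containing $\D(\H)$, this immediately yields ${\rm Aff}(\D(\H))\subseteq\A(\H)$.

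The nontrivial inclusion is $\A(\H)\subseteq{\rm Aff}(\D(\H))$, and the point I expect to be the main obstacle is that affine hulls are defined through \emph{finite} affine combinations (cf.~Eq.~(\ref{AffineHull})), whereas the spectral decomposition of a self-adjoint trace-class operator is in general an infinite series; that decomposition alone therefore does not exhibit $\rho$ as an element of ${\rm Aff}(\D(\H))$. To circumvent this I would use the Jordan decomposition $\rho=\rho_+-\rho_-$ into positive and negative parts: since $0\le\rho_\pm\le|\rho|$ and $|\rho|$ is trace-class, both $\rho_\pm$ are positive trace-class operators on $\H$. Writing $a={\rm tr}\,\rho_+$ and $b={\rm tr}\,\rho_-$, one has $a-b={\rm tr}\,\rho=1$, and since a negative semi-definite operator cannot have trace $1$ (if $\rho_+=0$ then $\rho=-\rho_-\le0$), necessarily $a>0$. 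If $b=0$ then $\rho=\rho_+\in\D(\H)$ lies trivially in the affine hull; otherwise $b>0$, both $\sigma_+=\rho_+/a$ and $\sigma_-=\rho_-/b$ are density matrices in $\D(\H)$, and
$$
\rho=a\,\sigma_+ + (-b)\,\sigma_-, \qquad a+(-b)=a-b=1,
$$
exhibits $\rho$ as a finite affine combination of two elements of $\D(\H)$. This gives $\A(\H)\subseteq{\rm Aff}(\D(\H))$ and hence the desired equality.

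Finally, for closedness I would present $\A(\H)$ as the intersection of the set of self-adjoint trace-class operators with the level set ${\rm tr}^{-1}(\{1\})$. The adjoint map $\rho\mapsto\rho^\dagger$ is an isometry for the trace norm, so $\rho\mapsto\rho-\rho^\dagger$ is continuous and the self-adjoint operators form its closed kernel; likewise the trace is continuous since $|{\rm tr}\,\rho|\le\|\rho\|_1$, so ${\rm tr}^{-1}(\{1\})$ is closed. As an intersection of two closed sets, $\A(\H)$ is closed in the trace norm. The whole argument is uniform across the cases $\H\subseteq\H^N$ and $\H=L^2(\RE)$, the only role of the dimension being the finiteness issue resolved by the Jordan decomposition above.
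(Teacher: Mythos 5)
Your proof is correct and follows essentially the same route as the paper: the easy inclusion from $\A(\H)$ being affine, the Jordan decomposition into positive and negative parts normalised to two density matrices whose affine combination recovers $\rho$, and closedness from continuity of the adjoint and the trace in the trace norm. Your explicit check that ${\rm tr}\,\rho_+>0$ is a small refinement the paper glosses over, but the argument is the same.
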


\begin{proof}
It is obvious that $\D(\H) \subset \A(\H)$, and since $\A(\H)$ is an affine space we also have {\rm Aff}$(\D(\H))\subseteq \A(\H)$ (since {\rm Aff}$(\D(\H))$ is a subset of all affine spaces containing $\D(\H)$).

Conversely, let $\rho \in \A(\H)$. Since $\rho$ is self-adjoint and trace-class (hence compact) we can write it in the form $\rho=\eta_-+\eta_+$ where $\eta_-,\eta_+$ are both self-adjoint, trace-class and $\eta_-\le 0$, $\eta_+ \ge 0$. Let $a_{\pm}={\rm tr}$ $ \eta_{\pm}$ and define $\rho_+=\eta_+/ a_+$ and $\rho_-=\eta_-/ a_-$ (assuming that $a_{-} \not=0$). Then
$$
\rho = \eta_-+ \eta_+ = a_-\rho_- + a_+ \rho_+  
$$
where $\rho_{\pm} \ge 0$ and ${\rm tr}$ $\rho_{\pm}=1$. Moreover:
$$
{\rm tr}\, \rho =1 \, \Longrightarrow \, a_-+a_+=1
$$
Hence, $\rho  $ is an affine combination of $\rho_-,\rho_+ \in \D(\H)$ and so $\rho \in $ Aff$(\D(\H))$. If $a_{-} =0$ then $\rho=\eta_{+} = \rho_{+} \in \D(\H) \subset$ Aff$(\D(\H))$, and we conclude that $\A(\H)={\rm Aff}(\D(\H))$.

To prove that $\A(\H)$ is closed, note that $\A(\H) \subset \B_1(\H)$ and that $\B_1(\H)$ is a Banach space, hence closed with respect to the trace norm. Let $(\rho_n)_{n\in \IN} \subset \A(\H)$ be such that $\rho_n \xrightarrow[]{\|\, \, \|_1} \rho \in \B_1(\H)$. Since $\|\rho_n-\rho \|_1=\|\rho_n^\dagger-\rho^\dagger \|_1$, we also have $\rho_n^\dagger \xrightarrow[]{\|\, \, \|_1} \rho^\dagger $. Moreover, $\rho_n^\dagger=\rho_n$ for all $n\in \IN$, and so $\rho^\dagger =\rho$. Hence, $\rho$ is self-adjoit. Since the trace is continuous with respect to the trace norm, we also conclude that tr$\,\rho=1$. It follows that $\rho \in \A(\H)$ and so $\A(\H)$ is closed.   
\end{proof}

\begin{theorem}\label{Compact}
Let $\H\subseteq L^2(\mathbb R)$. The set $\D(\H)$ is convex, closed and bounded. It is  compact if and only if $\H$ is finite-dimensional.
\end{theorem}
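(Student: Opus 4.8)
The plan is to verify the three elementary properties (convexity, closedness, boundedness) directly from the definitions, and then to settle the compactness equivalence by separating the finite- and infinite-dimensional cases.

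For convexity, I would simply note that if $\rho,\eta\in\D(\H)$ and $t\in[0,1]$, then $t\rho+(1-t)\eta$ is again self-adjoint and trace-class, positive as a convex combination of positive operators, and has trace $t+(1-t)=1$; hence it lies in $\D(\H)$. For boundedness, the key observation is that every $\rho\in\D(\H)$ is positive, so $|\rho|=\rho$ and $\|\rho\|_1={\rm tr}\,|\rho|={\rm tr}\,\rho=1$; thus $\D(\H)$ is contained in the unit sphere of $\B_1(\H)$, and in particular is bounded.

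For closedness, I would invoke the preceding theorem, which gives that $\A(\H)={\rm Aff}(\D(\H))$ is closed. Taking a sequence $(\rho_n)_{n\in\IN}\subset\D(\H)$ with $\rho_n\to\rho$ in trace norm, the limit $\rho$ is automatically self-adjoint with unit trace, so it remains only to check that $\rho\ge 0$. Since $\|\cdot\|\le\|\cdot\|_1$, trace-norm convergence implies $\langle\psi|\rho_n\psi\rangle\to\langle\psi|\rho\psi\rangle$ for each fixed $\psi\in\H$; as every term of the sequence is non-negative, so is the limit, whence $\rho\ge 0$ and $\rho\in\D(\H)$. This shows $\D(\H)$ is closed.

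For the compactness equivalence, the easy direction is that if $\H$ is finite-dimensional then $\B_1(\H)$ is finite-dimensional, so a closed and bounded subset is compact by the Heine--Borel theorem. The hard (and only substantive) part is the converse: if $\H$ is infinite-dimensional, I would exhibit a sequence in $\D(\H)$ with no convergent subsequence. Choosing an orthonormal sequence $(\phi_n)_{n\in\IN}\subset\H$ and setting $\rho_n=|\phi_n\rangle\langle\phi_n|\in\D(\H)$, the operator $\rho_n-\rho_m$ (for $m\ne n$) has eigenvalues $\pm 1$ on $\phi_n,\phi_m$ and $0$ elsewhere, so $\|\rho_n-\rho_m\|_1=2$. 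The pairwise distances being uniformly bounded below by a positive constant rules out any Cauchy subsequence; since $\B_1(\H)$ is a metric space, sequential compactness fails and $\D(\H)$ is not compact. The only point requiring genuine care is the computation of the trace distance between two orthogonal rank-one projectors, which follows from diagonalising $\rho_n-\rho_m$ on the two-dimensional span of $\{\phi_n,\phi_m\}$.
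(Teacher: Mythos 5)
Your proposal is correct and follows essentially the same route as the paper: convexity and boundedness are immediate, closedness is obtained by passing to the limit in $\langle\psi|\rho_n\psi\rangle\ge 0$ using the closedness of $\A(\H)$, finite-dimensional compactness follows from Heine--Borel, and non-compactness in infinite dimension is shown via a sequence of mutually orthogonal rank-one projectors at pairwise trace distance $2$. The only (harmless) difference is that you take a general orthonormal sequence in $\H$ rather than the Fock states, and you bound $|\langle\psi|(\rho_n-\rho)\psi\rangle|$ via $\|\cdot\|\le\|\cdot\|_1$ where the paper argues through $\rho_n\psi\to\rho\psi$ in $L^2$.
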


\begin{proof} 
 It is trivial that $\D(\H)$ is convex.

Let us prove that $\D(\H)$ is closed. Recall from the previous theorem that $\A(\H)$ is closed, and let $(\rho_n)_{n\in \IN}$ be a sequence in $\D(\H)$ converging to $\rho \in \A(\H)$ in the trace norm.

Let $v \in \H$. Then $\<v| \rho_n v\> \ge 0$ for all $n$, and since the inner product is continuous in the $L^2$-norm, and $\rho_n \xrightarrow[]{\|\, \, \|_1} \rho$ implies $\rho_n v \xrightarrow[]{L^2} \rho v$ for all $v\in \H$, we also have $\<v| \rho v\>  \ge 0$.

Hence $\rho\in \A(\H)$ is positive semi-definite and so $\rho \in \D(\H)$. We conclude that $\D(\H)$ is closed. In particular, $\D(\H)$ is a complete metric space since $\B_1(\H)$ is a Banach space.

It is also immediate that $\D(\H)$ is bounded, so in the case where $\H$ is finite-dimensional, we conclude that $\D(\H)$ is compact using the Heine--Borel theorem.

On the other hand, in the infinite-dimensional case, consider the sequence of density operators defined by $\rho_n\coloneqq\ket n\!\bra n$, $n\in \IN$.
For all $m\neq n$ we have $\|\rho_m-\rho_n\|_1=2$, and thus $(\rho_n)_{n\in\IN}$ is a sequence in $\D(\H)$ with no Cauchy subsequence and therefore no converging subsequence. Hence, $\D(\H)$ is not compact since it is complete.
\end{proof}
Another interesting example of a sequence of density operators with no Cauchy subsequence can be defined via a recurrence relation: 
$$
    \rho_0\coloneqq\ket0\!\bra0,\quad\rho_{n+1}\coloneqq\frac12\ket{n+1}\!\bra{n+1}+\frac12\rho_n , \quad n \in \IN_0. 
$$
We now investigate the properties of the interior points of $\D(\H)$:
\begin{theorem} \label{intQS}
\begin{enumerate}
\item[{\bf(A)}] If $\H \subseteq \H^N$ then $\rho \in {\rm iint} \, \D(\H)$ if and only if $\rho$ is of maximal rank.  
\item[{\bf(B)}] If $\H=L^2(\RE)$ then ${\rm iint} \, \D(\H)=\emptyset$.
\end{enumerate}

\end{theorem}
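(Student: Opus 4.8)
The plan is to work throughout in the affine hull, using the identity $\A(\H)={\rm Aff}(\D(\H))$ established above, so that ${\rm iint}\,\D(\H)={\rm int}_{\A(\H)}\,\D(\H)$: a state $\rho$ is an intrinsic interior point exactly when some trace-norm ball about it, intersected with $\A(\H)$, stays inside $\D(\H)$. The single analytic fact I would lean on is that for self-adjoint trace-class operators the operator norm is dominated by the trace norm, so Weyl's perturbation inequality gives $|\lambda_{\min}(\sigma)-\lambda_{\min}(\rho)|\le\|\sigma-\rho\|_{\mathrm{op}}\le\|\sigma-\rho\|_1$. Positivity of $\rho$ is therefore stable under small trace-norm perturbations precisely when $\rho$ has a strictly positive gap at the bottom of its spectrum, and the whole theorem reduces to deciding when such a gap exists.

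For part (A), if $\rho\in\D(\H)$ has maximal rank then $\lambda_{\min}(\rho)>0$, and any $\sigma\in\A(\H)$ with $\|\sigma-\rho\|_1<\lambda_{\min}(\rho)$ is self-adjoint, of unit trace, and has $\lambda_{\min}(\sigma)>0$ by the inequality above; hence $\sigma\in\D(\H)$ and $\rho\in{\rm iint}\,\D(\H)$. Conversely I would argue by contraposition: if $\rho$ is not of maximal rank, then $\dim\H\ge 2$ and there is a unit vector $v\in\ker\rho$. Choosing any unit $w\perp v$ in $\H$, the traceless self-adjoint perturbations $\sigma_\epsilon=\rho+\epsilon(|w\rangle\langle w|-|v\rangle\langle v|)$ lie in $\A(\H)$, converge to $\rho$ in trace norm, yet satisfy $\langle v|\sigma_\epsilon v\rangle=-\epsilon<0$, so $\sigma_\epsilon\notin\D(\H)$ and $\rho\notin{\rm iint}\,\D(\H)$. (When $\dim\H=1$ the set $\D(\H)$ is a single point equal to its own intrinsic interior, whose unique element has maximal rank, so the equivalence holds trivially.)

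Part (B) uses the very same perturbation, the point being that in infinite dimensions the obstruction from the $(\Rightarrow)$ direction is now unavoidable. Every $\rho\in\D$ is compact, so $0$ lies in its spectrum, and since ${\rm tr}\,\rho=1$ its eigenvalues accumulate only at $0$; thus for each $n$ there is a unit vector $v_n\in L^2(\RE)$ with $\langle v_n|\rho v_n\rangle<1/(2n)$ (a kernel vector if $\ker\rho\neq\{0\}$, otherwise an eigenvector for a sufficiently small positive eigenvalue). Picking $w_n\perp v_n$ and setting $\sigma_n=\rho+\tfrac1n(|w_n\rangle\langle w_n|-|v_n\rangle\langle v_n|)$ produces a sequence in $\A$ with $\|\sigma_n-\rho\|_1=2/n\to 0$ but $\langle v_n|\sigma_n v_n\rangle<1/(2n)-1/n<0$, so $\sigma_n\notin\D$. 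As $\rho$ was arbitrary, ${\rm iint}\,\D=\emptyset$.

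The routine verifications — that each perturbation is self-adjoint with unit trace and that its trace-norm distance to $\rho$ is $2\epsilon$ (resp.\ $2/n$), both immediate from its rank-two, eigenvalue-$\pm\epsilon$ structure — I would leave implicit. The one genuinely load-bearing step, and the crux of the finite/infinite dichotomy, is the spectral input: in finite dimensions maximal rank is exactly the condition producing a positive spectral gap, whereas in infinite dimensions compactness forces $0$ into the spectrum of every state, so no state has such a gap and the intrinsic interior must be empty.
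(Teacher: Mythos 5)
Your proof is correct. The forward direction of (A) is essentially the paper's argument: the paper bounds $\|\rho-\sigma\|_1$ from below by $\|(\rho-\lambda_-(\sigma))u\|_{L^2}>\lambda_{\min}(\rho)$ using an eigenvector $u$ of a negative eigenvalue of $\sigma$, which is the same spectral-gap stability you obtain more cleanly from Weyl's inequality together with $\|\cdot\|_{\mathrm{op}}\le\|\cdot\|_1$. Where you genuinely diverge is in the negative perturbations. For the converse of (A) the paper uses $\sigma_t=(1+t)\rho-t\Pi_v$ with $\Pi_v$ a projector orthogonal to $\rho$, whereas you use the traceless rank-two perturbation $\epsilon(|w\rangle\!\langle w|-|v\rangle\!\langle v|)$; both work equally well. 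The more substantial difference is in (B): the paper splits into the finite-rank case (reduced to the argument of (A)) and the infinite-rank case, for which it constructs $\sigma_k$ by flipping the sign of the tail $\sum_{i>k}\lambda_i\rho_i$ of the spectral decomposition and renormalising by $n_k=\sum_{i\le k}\lambda_i-\sum_{i>k}\lambda_i$. Your argument replaces this with a single uniform construction: since every state is a compact positive operator on an infinite-dimensional space, $0$ lies in its spectrum, so one can always find a unit vector $v_n$ with $\langle v_n|\rho v_n\rangle<1/(2n)$ and apply the same rank-two traceless perturbation as in (A). This avoids the case split and the bookkeeping with $n_k$ entirely, and makes transparent that the finite/infinite dichotomy rests on exactly one spectral fact (existence versus non-existence of a gap above $0$). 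The paper's tail-flip construction has the minor virtue of exhibiting perturbations that are "large" in rank, but for the purpose of this theorem your route is shorter and arguably more illuminating.
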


\begin{proof}

{\bf(A)} If ${\rm dim}\, \H=1$ the result follows trivially from the fact that $\D(\H)=\A(\H)={\rm iint}\, \D(\H)= \{|v\>\<v| \}$ for some normalized $v\in \H$, and that $|v\>\<v|$ is of maximal rank in $\H$. 

Let then dim$\,\H=n >1$, and 
let $\rho \in \D(\H)$ be of maximal rank. Then $\rho$ can be written in the form
\begin{equation}\label{eigenexp}
\rho =\sum_{i=1}^n \lambda_i \rho_i
\end{equation}
where $\lambda_i >0$ for all $i=1,...,n$, and $\sum_{i=1}^n \lambda_i=1$. Moreover, $\rho_i=|v_i\>\<v_i|$ with $v_i\in \H$ an orthonormal basis of eigenvectors of $\rho$.

We want to show that there exists $C>0$ such that 
\begin{equation}\label{minimaldistance}
\| \rho - \sigma\|_1 >C \quad , \quad \forall 
\sigma \in \A(\H)  \backslash \D(\H). 
\end{equation}
We have for all $u \in \H,\, \|u\|_{L^2}=1$
$$
\|\rho-\sigma\|_1 \ge \| (\rho-\sigma) u\|_{L^2}~.
$$
Let $\lambda_-(\sigma) <0$ be a negative eigenvalue of $\sigma$, and let us set $u$ to be the corresponding (normalised) eigenvector: $\sigma u =\lambda_-(\sigma) u$. We have 
$$
\| (\rho-\sigma) u\|_{L^2} = \|(\rho -\lambda_-(\sigma)) u\|_{L^2} = \| \sum_{i=1}^n(\lambda_i-\lambda_-(\sigma)) k_i v_i\|_{L^2} 
$$
$$
=\left( \sum_{i=1}^n(\lambda_i-\lambda_-(\sigma))^2 |k_i|^2 \right) ^{1/2} > \lambda_{\rm min}(\rho)
$$
where $k_i= \<v_i|u\>$ and $\lambda_{\rm min}(\rho)>0$ is the smallest eigenvalue of $\rho$. Hence, (\ref{minimaldistance}) holds for all $0< C \le \lambda_{\rm min}(\rho)$.  
Finally, if we set $\delta =C$, we get
$$
B_\delta(\rho) \cap \A(\H) = B_\delta(\rho) \cap \D(\H) \subset \D(\H) \Longrightarrow \rho \in {\rm iint} (\D(\H)) \, ,
$$
where $B_\delta(\rho)=\{\eta \in \B_1: \|\eta -\rho \|_1 < \delta \}$. This completes the first part of the proof.

We now assume that $\rho$ is not of maximal rank, and define:
$$
\sigma_t=(1+t)\rho - t \Pi_v \quad , \quad t>0
$$ 
where $\Pi_v$ is a projector which is orthogonal to $\rho$, i.e.~tr$\, (\rho \Pi_v)=0$ (which always exists because $\rho$ is not of maximal rank). It follows that $\sigma_t \in \A(\H)  \backslash \D(\H)$, for all $t>0$, but:
$$
\|\sigma_t - \rho\|_1 =|t| \,\|\rho-\Pi_v\|_1 \to 0
$$   
as $t\to 0$. Hence, for all $\delta>0$, $B_\delta(\rho) \cap \A(\H)  \not\subset \D(\H)$, and so $\rho \notin {\rm iint}\D(\H)$.\\

{\bf(B)} Let now $\H=L^2(\RE)$, and let $\rho \in \D$. Then $\rho=\sum_{i=1}^n \lambda_i\rho_i$ for some orthogonal pure state density matrices $\rho_i$, and some nonnegative sequence $(\lambda_i)_{ i\in \IN}$ such that $\sum _{i=1}^n \lambda_i=1$. If $n <\infty$ then $\rho \not\in {\rm iint}\,\D$ (the proof follows exactly the steps of the finite-dimensional case).

If $n=\infty$, consider the sequence:
$$
\sigma_k=\frac{1}{n_k}\sum_{i=1}^k\lambda_i\rho_i - \frac{1}{n_k}\sum_{i=k+1}^\infty   
\lambda_i\rho_i
$$
where $n_k=\sum_{i=1}^k\lambda_i - \sum_{i=k+1}^\infty\lambda_i$ is strictly positive for sufficiently large $k$.
 Then $\sigma_k\in \B_1$, $\sigma_k$ is self-adjoint and ${\rm tr}\, \sigma_k=1$, but $\sigma_k \notin \D$. Therefore $\sigma_k\in \A  \backslash \D$. Moreover:
\begin{eqnarray*}
\|\rho-\sigma_k\|_1 &\le &|1-\tfrac{1}{n_k}|\left|\left|\sum_{i=1}^k\lambda_i\rho_i\right|\right|_1 + |1+\tfrac{1}{n_k}|\left|\left|\sum_{i=k+1}^\infty\lambda_i\rho_i\right|\right|_1 \\
& < & |1-\tfrac{1}{n_k}| \sum_{i=1}^k\lambda_i + |1+\tfrac{1}{n_k}| \sum_{i=k+1}^\infty\lambda_i  \to 0
\end{eqnarray*} 
as $k\to \infty$ (note that $n_k \to 1$ as $k\to \infty$). Hence $\rho \notin {\rm iint}\,\D$, concluding the proof.
\end{proof}

\subsection{Topology of Wigner-positive states in finite dimension}

In this section, we focus on the finite-dimensional case.

\begin{theorem}\label{Compact_finite_dim}
If $\H$ is finite-dimensional, the set $\D_+(\H)$, if nonempty, is convex and compact.
\end{theorem}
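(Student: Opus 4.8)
The plan is to deal with convexity and compactness separately, reducing the latter to closedness by invoking the compactness of $\D(\H)$ established in \cref{Compact}. Convexity is immediate from the linearity of the Wigner transform: if $\rho,\eta\in\D_+(\H)$ and $t\in[0,1]$, then $t\rho+(1-t)\eta\in\D(\H)$ since $\D(\H)$ is convex, and
$$
W\big(t\rho+(1-t)\eta\big)=t\,W\rho+(1-t)\,W\eta\ge 0,
$$
so the convex combination again lies in $\D_+(\H)$.

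For compactness, note that $\D(\H)$ is compact in finite dimension by \cref{Compact} and $\D_+(\H)\subseteq\D(\H)$; since a closed subset of a compact set is compact, it suffices to show $\D_+(\H)$ is closed. The key point is that evaluation of the Wigner function at a fixed phase-space point is a trace-norm-continuous functional, uniformly in the point. I would use the displaced-parity representation already implicit in the proof of \cref{zeros_inters}, namely $W\rho(z)=\tfrac1\pi\,{\rm tr}\big(\rho\,\Pi(z)\big)$ with $\Pi(z)=D(z)\Pi D^\dagger(z)$ the unitary displaced parity operator. Since $\|\Pi(z)\|=1$ for all $z\in\RE^2$, this gives the uniform estimate
$$
\sup_{z\in\RE^2}\big|W\rho(z)-W\eta(z)\big|\le \tfrac1\pi\,\|\rho-\eta\|_1,\qquad \rho,\eta\in\B_1(\H).
$$

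With this in hand, closedness is routine: if $(\rho_n)_n\subset\D_+(\H)$ converges to $\rho$ in trace norm, then $\rho\in\D(\H)$ because $\D(\H)$ is closed, and the estimate shows $W\rho_n\to W\rho$ uniformly, whence $W\rho(z)=\lim_n W\rho_n(z)\ge 0$ for every $z$. Thus $\rho\in\D_+(\H)$, so $\D_+(\H)$ is a closed subset of the compact set $\D(\H)$ and is therefore compact.

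I expect the only genuine analytic content to be the continuity of pointwise Wigner evaluation under trace-norm convergence; everything else follows formally from the convexity and compactness of $\D(\H)$. If one prefers to avoid the parity representation, the same bound $\|W\rho-W\eta\|_\infty\le C\,\|\rho-\eta\|_1$ can be read off from the integral definition of $W\rho$ via the Hilbert--Schmidt kernel, but the parity formula makes both the constant and the uniformity transparent.
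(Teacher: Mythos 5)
Your proof is correct, and the overall skeleton (convexity is formal; compactness reduces to closedness inside the compact set $\D(\H)$) matches the paper's. The difference lies in how closedness is obtained. The paper passes from trace-norm to Hilbert--Schmidt convergence, deduces $W\rho_n\to W\rho$ only in $L^2(\RE^2)$, and then argues by contradiction: if $W\rho<0$ on some open set $\Omega$, the $L^2$ distance $\|W\rho_n-W\rho\|_2$ stays bounded below by $\bigl(\int_\Omega|W\rho|^2\bigr)^{1/2}$ because $W\rho_n\ge0\ge W\rho$ on $\Omega$. You instead use the displaced-parity identity $W\rho(z)=\tfrac1\pi\,{\rm tr}\bigl(\rho\,\Pi(z)\bigr)$ together with the duality bound $|{\rm tr}(AB)|\le\|A\|_1\|B\|$ to get the uniform Lipschitz estimate $\sup_z|W\rho(z)-W\eta(z)|\le\tfrac1\pi\|\rho-\eta\|_1$, so trace-norm convergence gives uniform convergence of the Wigner functions and pointwise non-negativity passes to the limit directly, with no contradiction argument and no appeal to the continuity of $W\rho$. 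Your route is more quantitative and arguably cleaner; it also makes transparent why the same closedness argument works verbatim in the infinite-dimensional case (as the paper later asserts in \cref{Compact_infinite_dim}). The only thing to be careful about is that the identity $W\rho(z)=\tfrac1\pi\,{\rm tr}(\rho\,\Pi(z))$ is only implicit in the paper (in the proof of \cref{zeros_inters}), so in a self-contained write-up you should state and briefly justify it, including the normalisation constant $\tfrac1\pi$ consistent with Eq.~(\ref{WF}).
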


\begin{proof} 

It is obvious that $\D_+(\H)$ is convex. It is also bounded since $\D_+(\H) \subset \D(\H)$.

Let us now prove that $\D_+(\H)$ is a closed subset of $\D(\H)$, and thus is itself closed since $\D(\H)$ is closed (see \cref{Compact}). Let $\rho_n \in \D_+(\H)$, $n\in \IN$ be a sequence converging to $\rho \in \D(\H)$ in the trace norm. Then $\rho_n \to \rho$ in the Hilbert--Schmidt norm and thus $W\rho_n \to W \rho$ in $L^2$. Suppose $\rho \notin \D_+(\H)$. Since $W\rho$ is continuous, there exists an open set $\Omega \subset \RE^2$ such that $W\rho(z \in \Omega) <0$ and thus:
$$
\int_{\Omega} |W\rho(z)|^2 d^2z \ge  C 
$$
for some $C>0$. It follows that:
$$
\|W\rho_n - W\rho\|_2^2 \ge \int_{\Omega} |W\rho_n(z)-W\rho(z)|^2 d^2z
\ge  \int_{\Omega} |W\rho(z)|^2 d^2z \ge  C 
$$
for all $n$, and so $W\rho_n \not\to W\rho $ in $L^2$. Hence, we must have $\rho \in \D_+(\H)$, and so $\D_+(\H)$ is closed.

We conclude that $\D_+(\H)$ is bounded and closed. Since it is also finite-dimensional, it is compact.
\end{proof}

%The topology is the relative topology induced by the trace norm in $\A(\H)$.

The next theorem characterises the interior points of $\D_+(\H)$ relative to $\A(\H)$ for $\H\subseteq\H^N$. First we prove the following preparatory lemma.

\begin{lemma}\label{rho+}
For $\H\subseteq \H^N$, let $\rho_0 \in \D_+(\H)$ be such that $\N(\rho_0) = \emptyset$\footnote{We do not assume that $\rho_0=|0\>\<0|$.}. Let $\rho_1 \in \D(\H)\backslash \D_+(\H)$. 
Then there is one, and only one, $\rho \in (\rho_0,\rho_1)\cap \D_{0+}(\H)$.
The state $\rho$ is given explicitly by:
\begin{equation}\label{eqrho+}
\rho = (1-t_0) \rho_0+t_0\rho_1, \quad {\rm where} \quad
t_0 =\frac{1}{1-k_0}, \quad k_0= {\rm min}\left\{ \tfrac{W\rho_1(z)}{W\rho_0(z)} \, , \, z\in \RE^2\right\}
\end{equation}
Moreover:
\begin{enumerate}
\item[{\bf  (A)}] $[\rho_0,\rho) \subset \D_+(\H) \backslash \D_{0+}(\H)$
\smallskip
\item[{\bf  (B)}] $(\rho,\rho_1] \subset \D(\H) \backslash \D_+(\H)$ 
\smallskip
\item[{\bf  (C)}] $t_0=\frac{\|\rho-\rho_0\|_1}{\|\rho_1-\rho_0\|_1}$

\end{enumerate}

\end{lemma}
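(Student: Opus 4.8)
The plan is to parametrise the open segment by $\rho_t = (1-t)\rho_0 + t\rho_1$, $t \in (0,1)$, and to reduce the whole statement to the behaviour of a single scalar function. Since $\D(\H)$ is convex (\cref{Compact}), $\rho_t \in \D(\H)$ for every $t \in [0,1]$, so throughout I only need to decide, as a function of $t$, whether $W\rho_t \geq 0$ and whether it vanishes somewhere. Because $\rho_0 \in \D_+(\H)$ with $\N(\rho_0) = \emptyset$, its Wigner function satisfies $W\rho_0(z) > 0$ for all $z \in \RE^2$; hence I can define the continuous function $g(z) = W\rho_1(z)/W\rho_0(z)$ on all of $\RE^2$ and write
$$
W\rho_t(z) = (1-t)W\rho_0(z) + t\, W\rho_1(z) = W\rho_0(z)\bigl[(1-t) + t\,g(z)\bigr].
$$
Everything then hinges on the scalar quantity $k_0 = \inf_{z} g(z)$.

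The main step, and the part I expect to be the real obstacle, is to show that this infimum is attained and is strictly negative, so that $k_0 = \min_z g(z) < 0$ is well defined as in \eqref{eqrho+}. Strict negativity is easy: since $\rho_1 \notin \D_+(\H)$ there is a point $z^-$ with $W\rho_1(z^-) < 0$, whence $g(z^-) < 0$. Attainment is the delicate point, because $\RE^2$ is noncompact and a priori $g$ could approach its infimum only at infinity. Here I would invoke finite-dimensionality: since $\H \subseteq \H^N$, \cref{SecondTheorem} guarantees that $\N(\rho)$ is bounded for every $\rho \in \D(\H)$, in particular $\N(\rho_1)$; then \cref{Theorem5} shows that $\Omega_-(\rho_1)$ is compact. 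Consequently there is a radius $R$ with $\Omega_-(\rho_1) \subset B_R$, so $W\rho_1(z) > 0$, and therefore $g(z) > 0$, for all $z \notin B_R$. On the compact ball $\overline{B_R}$ the continuous function $g$ attains a minimum $m$, and since $z^- \in \Omega_-(\rho_1) \subset B_R$ gives $m \leq g(z^-) < 0$, this $m$ is also the global minimum: outside $B_R$ one has $g > 0 > m$. Thus $k_0 = m$ is attained at some $z^* \in B_R$ and $k_0 < 0$.

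With $k_0$ in hand the sign analysis is routine. For $t \geq 0$ the bound $g(z) \geq k_0$ yields $(1-t) + t\,g(z) \geq 1 - t(1-k_0)$, with equality exactly at $z = z^*$; hence
$$
W\rho_t(z) \geq W\rho_0(z)\bigl[1 - t(1-k_0)\bigr], \qquad W\rho_t(z^*) = W\rho_0(z^*)\bigl[1 - t(1-k_0)\bigr].
$$
Setting $t_0 = 1/(1-k_0)$, which lies in $(0,1)$ because $k_0 < 0$, I would distinguish three regimes. For $t < t_0$ the bracket is strictly positive, so $W\rho_t > 0$ everywhere and $\rho_t \in \D_+(\H)\setminus\D_{0+}(\H)$; this proves \textbf{(A)}. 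At $t = t_0$ the Wigner function is nonnegative and vanishes at $z^*$, so $\rho = \rho_{t_0} \in \D_{0+}(\H)$, the claimed state. For $t_0 < t \leq 1$ one has $W\rho_t(z^*) < 0$, so $\rho_t \in \D(\H)\setminus\D_+(\H)$, proving \textbf{(B)}. Since only $t = t_0$ lands in $\D_{0+}(\H)$, this simultaneously gives existence and uniqueness of $\rho$ on $(\rho_0,\rho_1)$.

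Finally, \textbf{(C)} is immediate: from $\rho - \rho_0 = t_0(\rho_1 - \rho_0)$ and $t_0 > 0$ one gets $\|\rho - \rho_0\|_1 = t_0 \|\rho_1 - \rho_0\|_1$, that is, $t_0 = \|\rho - \rho_0\|_1 / \|\rho_1 - \rho_0\|_1$.
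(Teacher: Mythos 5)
Your proposal is correct and follows essentially the same route as the paper: both arguments reduce the problem to minimising the ratio $W\rho_1/W\rho_0$ (the paper phrases this via the zero $t(z)=1/(1-k(z))$ of the profile function $f_z(t)=W\rho_t(z)$, which it then rewrites in terms of $k(z)=W\rho_1(z)/W\rho_0(z)$), and both secure attainment of the minimum through the compactness of $\Omega_-(\rho_1)$ obtained from \cref{SecondTheorem} and \cref{Theorem5}. The only cosmetic difference is that you minimise $g$ over all of $\RE^2$ after showing it is positive outside a ball, whereas the paper minimises $t(z)$ directly over the compact set $\Omega_-(\rho_1)$; the substance is identical.
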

\smallskip
\begin{proof}
By construction $\rho\in (\rho_0,\rho_1)$. Let us show that $\rho \in \D_{0+}(\H)$. 

Recall that the set $\Omega_-(\rho_1)=\{z\in \RE^2: W\rho_1(z) \le 0 \}$ is compact (cf.~Theorem \ref{Theorem5} and \cref{SecondTheorem}). 
Consider the convex combinations:
$$
\rho_t=(1-t)\rho_0 + t \rho_1\, , \quad t\in [0,1]
$$
and for each $z\in \Omega_-(\rho_1)$ define the profile function $t \to f_z(t)=W\rho_t(z)$. This function is strictly decreasing and continuous. It has a single zero at:
\begin{equation}\label{t(z)}
f_z(t)=0 \Longleftrightarrow t=\frac{W\rho_0(z)}{W\rho_0(z)-W\rho_1(z)}. 
\end{equation}
Let us denote the right-hand side of the previous equation by $t(z)$. Since $t(z)$ is defined on the compact set $\Omega_-(\rho_1)$ and is continuous, it attains a minimum at some $z_0 \in \Omega_-(\rho_1)$:
$$
t_0= t(z_0)= \min \{t(z)| z\in \Omega_-(\rho_1) \}\, .
$$
Thus $t_0$ is the smallest value of $t$ for which $W\rho_t$ has a zero. For $z\not=z_0$ we have $t(z)\ge t_0$ and so $W\rho_{t_0}(z) \ge 0$ for $z\in \RE^2$. Moreover, $W\rho_{t_0}(z_0)=0$ and so $\rho_{t_0} \in \D_{0+}(\H)$. It also follows from (1) and (2) below that there is no other $\rho \in [\rho_0,\rho_1]\cap \D_{0+}(\H)$. 

Let us go back to Eq.~(\ref{t(z)}). We easily conclude that $t(z)=1/(1-k(z))$ for $k(z)= \frac{W\rho_1(z)}{W\rho_0(z)}$, and that the minimum of $t(z)$ is reached at the minimum of $k(z)$, $z\in \RE^2$. Denote the latter minimum by $k_0$. Then $k_0 <0$ and $t_0=1/(1-k_0)$, proving Eq.~(\ref{eqrho+}).   

Since the functions $f_z(t)$ are strictly decreasing, we also conclude that:

\smallskip

\noindent
{\bf(A)} For $t<t_0$, we have $f_z(t) =W\rho_t(z) >0,\,\forall z$ and so $\rho_t \in \D_+(\H)$ and $\N(\rho_t)=\emptyset$.

\smallskip

\noindent
{\bf(B)} For $t>t_0$, we have $W\rho_{t}(z_0)<0$ and so $\rho_{t} \notin  \D_+(\H)$.

\smallskip

\noindent
{\bf(C)} Finally, taking norms in $\rho=(1-t_0)\rho_0+t_0\rho_1$ easily leads to (C). 
\end{proof}

We then have:

\begin{theorem}\label{ipos}
 Let $\H\subseteq \H^N$, and let $\rho \in \D(\H)$. Then $\rho \in {\rm int}_{\A(\H)}\D_+(\H)$ if and only if $\rho$ is of maximal rank and $\N(\rho) = \emptyset$.
\end{theorem}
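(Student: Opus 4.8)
The plan is to prove both implications, working throughout in the ambient space $\B_1(\H)$, which is finite-dimensional because $\H$ is. Consequently $\A(\H)={\rm Aff}(\D(\H))$, and by \cref{ai} applied with $X=\A(\H)$ the relative topological interior and the relative algebraic interior of any convex subset of $\A(\H)$ coincide; I will use this identification freely in both directions, since it lets me replace an interior statement in trace norm by a direction-by-direction check.

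\emph{Necessity.} Suppose $\rho\in{\rm int}_{\A(\H)}\D_+(\H)$. Since $\D_+(\H)\subseteq\D(\H)$, any relative neighbourhood of $\rho$ contained in $\D_+(\H)$ is also contained in $\D(\H)$, so $\rho\in{\rm int}_{\A(\H)}\D(\H)={\rm iint}\,\D(\H)$; by \cref{intQS}(A) this forces $\rho$ to be of maximal rank. To obtain $\N(\rho)=\emptyset$ I argue by contraposition: if there is $z_0$ with $W\rho(z_0)=0$, then \cref{zeros_inters} (applicable since $\D_+(\H)\neq\emptyset$) yields some $\eta\in\D(\H)$ with $W\eta(z_0)\neq0$. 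Setting $\rho_s=(1+s)\rho-s\eta\in\A(\H)$ gives $W\rho_s(z_0)=-s\,W\eta(z_0)$; choosing the sign of $s$ so that this value is negative and $|s|$ small, I get $\rho_s\notin\D_+(\H)$ with $\|\rho_s-\rho\|_1=|s|\,\|\rho-\eta\|_1\to0$. Hence no relative neighbourhood of $\rho$ lies in $\D_+(\H)$, contradicting $\rho\in{\rm int}_{\A(\H)}\D_+(\H)$.

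\emph{Sufficiency.} Assume $\rho$ is of maximal rank and $\N(\rho)=\emptyset$; I will show $\rho\in{\rm aint}_{\A(\H)}\D_+(\H)$, which by \cref{ai} yields $\rho\in{\rm int}_{\A(\H)}\D_+(\H)$. Fix an arbitrary $\sigma\in\A(\H)\setminus\{\rho\}$; the task is to produce $\tau\in(\rho,\sigma]$ with $[\rho,\tau]\subseteq\D_+(\H)$. Because $\rho$ is of maximal rank, \cref{intQS}(A) together with \cref{ai} give $\rho\in{\rm aint}_{\A(\H)}\D(\H)$, so there is $\tau_1\in(\rho,\sigma]$ with $[\rho,\tau_1]\subseteq\D(\H)$. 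If $\tau_1\in\D_+(\H)$, convexity of $\D_+(\H)$ gives $[\rho,\tau_1]\subseteq\D_+(\H)$ and I take $\tau=\tau_1$. Otherwise $\tau_1\in\D(\H)\setminus\D_+(\H)$, and since $\rho\in\D_+(\H)$ has empty nodal set, \cref{rho+} (with $\rho_0=\rho$ and $\rho_1=\tau_1$) produces $\rho'\in(\rho,\tau_1)\cap\D_{0+}(\H)$ with $[\rho,\rho')\subseteq\D_+(\H)$; then $[\rho,\rho']\subseteq\D_+(\H)$ and $\tau=\rho'\in(\rho,\sigma]$ works. In either case $\rho\in{\rm aint}_{\A(\H)}\D_+(\H)$, which completes the argument.

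\emph{Main obstacle.} The delicate direction is sufficiency, and the crux is controlling arbitrary affine directions $\sigma$, including those that immediately leave $\D(\H)$. The two-step reduction is what makes this go through: maximal rank guarantees a first short step that keeps us inside $\D(\H)$, and only then does the condition $\N(\rho)=\emptyset$, via \cref{rho+}, guarantee a (possibly shorter) step that also stays Wigner-positive. Reducing the topological relative interior to the algebraic one through \cref{ai} is essential, as it turns the problem into a purely one-dimensional check along each direction. A minor point to dispatch separately is the degenerate case ${\rm dim}\,\H=1$, where $\A(\H)$ is a single point and both stated conditions hold trivially whenever $\D_+(\H)\neq\emptyset$.
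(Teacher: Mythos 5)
Your proposal is correct, and the sufficiency direction together with the rank part of necessity follow the paper's proof essentially verbatim (reduce to the algebraic interior via \cref{ai}, take a first step into $\D(\H)$ using \cref{intQS}, then a second, possibly shorter step into $\D_+(\H)$ using \cref{rho+}). Where you genuinely diverge is in proving that an interior point must have empty nodal set. The paper goes through \cref{Simon}-C: an intrinsic interior point $\rho$ lies on an open segment $(\eta,\eta')$ for suitable $\eta'\in\D_+(\H)$, whence $\N(\rho)=\N(\eta)\cap\N(\eta')\subseteq\N(\eta)$ for every $\eta\in\D_+(\H)$; a common zero then propagates to all of $\D(\H)\subset{\rm Aff}(\D_+(\H))$ by affine combination, contradicting \cref{zeros_inters}. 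You instead use \cref{zeros_inters} directly to extract a single $\eta\in\D(\H)$ with $W\eta(z_0)\neq0$ and perturb along the line through $\rho$ and $\eta$ in the direction that makes $W\rho_s(z_0)=-s\,W\eta(z_0)$ negative; since $\rho_s\in\A(\H)$ and $\rho_s\to\rho$ in trace norm, $\rho$ cannot be an interior point relative to $\A(\H)$. Both arguments hinge on \cref{zeros_inters}, but yours is more elementary: it avoids \cref{Simon}-C and the intermediate claim about nodal sets of proper convex combinations, and it exploits the fact that the interior is taken relative to $\A(\H)$ (so the perturbation is allowed to leave $\D(\H)$). The paper's route has the side benefit of exhibiting the inclusion $\N(\rho)\subseteq\bigcap_{\eta\in\D_+(\H)}\N(\eta)$ for interior points, but for the theorem as stated your shortcut is perfectly adequate. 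Your handling of the one-dimensional case is slightly glib (concluding $\N(\rho)=\emptyset$ when $\D_+(\H)\neq\emptyset$ implicitly invokes Hudson's theorem, as the paper does explicitly), but this is a cosmetic point.
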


\begin{proof}

We consider first the trivial case ${\rm dim}\,\H=1$. If $\H=\H^0$ then $\D_+(\H)=\A(\H)=\{\rho_0\}$, where $\rho_0=|0\>\<0|$, and so ${\rm int}_{\A(\H)}\D_+(\H)=\{\rho_0\}$. Since $\rho_0 : \H^0 \to \H^0$ is of maximal rank and $\N(\rho_0) =\emptyset $, the result holds. Moreover, if $\H={\rm span}\{v\}$ for some $v\in \H^N$, $v\not=|0\>$ then, from Hudson's Theorem, $\N(|v\>\<v|)\not=\emptyset$ and since $\D_+(\H)\subseteq \D(\H)=\{|v\>\<v|\}$, we have ${\rm int}_{\A(\H)}\D_+(\H)=\D_+(\H)=\emptyset$, and thus the result also holds.    

Let then ${\rm dim} \, \H \ge 2$. Assume that $\rho$ is of maximal rank. From \cref{ai} and \cref{intQS}:
$$
\rho \in {\rm iint} \, \D(\H)= {\rm aint}_{\A(\H) } \, \D(\H)
$$
and thus
$$
\forall_{\sigma \in \A(\H) } ~ \exists_{\rho_1\in (\rho,\sigma]}:\, [\rho,\rho_1] \subset \D(\H) \, .
$$
Now, suppose that $\N(\rho)= \emptyset$. From \cref{rho+} there exists $\rho_2 \in (\rho,\rho_1]$ such that $[\rho,\rho_2] \subset \D_+(\H)$ (note that \cref{rho+} applies only to the case $\rho_1 \in \D(\H) \backslash \D_+(\H)$; but if $\rho_1\in \D_+(\H)$ we have trivially $\rho_2=\rho_1$). 

Putting the two results together, we conclude that:
$$
\forall_{\sigma \in \A(\H) } \, \exists_{\rho_2\in (\rho,\sigma]}:\, [\rho,\rho_2] \subset \D_+(\H) \Longleftrightarrow \rho \in  {\rm aint}_{\A(\H) } \, \D_+(\H) ={\rm int}_{\A(\H) } \, \D_+(\H)  
$$
where the last equality again follows from Theorem \ref{ai} and the fact that $\D_+(\H)$ is convex. 

Conversely, if the rank of $\rho$ is not maximal then, from Theorem \ref{intQS}, $\rho \notin {\rm iint} \, \D(\H) \supset {\rm int}_{\A(\H) } \, \D_+(\H)$.

Finally, suppose that $\rho \in  {\rm int}_{\A(\H) } \, \D_+(\H)$. Then ${\rm int}_{\A(\H)}\D_+(\H) \not=\emptyset$ and so (cf.~Lemma \ref{Simon}-A) $\A(\H)={\rm Aff}(\D_+(\H))$ and $\rho \in {\rm iint} \, \D_+(\H)$. If we choose an arbitrary $\eta \in \D_+(\H) \backslash \{\rho\}$, from Lemma \ref{Simon}-C there exists $\eta' \in \D_+(\H)$ such that $\rho \in (\eta,\eta')$. It follows that $\N(\rho)= \N(\eta) \cap \N(\eta')$, and thus $\N(\rho) \subseteq \N(\eta)\, , \forall \eta \in \D_+(\H)$. Since $\D(\H) \subset$ Aff$(\D_+(\H))$, if $z_0 \in \N(\rho) = \bigcap_{\eta \in \D_+(\H)} \N(\eta)$ then $z_0 \in \N(\eta)$ for all $\eta \in \D(\H)$ (every affine combination of two elements with a zero at $z_0$ will have a zero at $z_0$), which contradicts \cref{zeros_inters}. We therefore conclude that $\N(\rho)=\emptyset$, completing the proof. 
\end{proof}

\medskip

The following is a trivial corollary of the previous theorem. It provides a characterisation of the relative boundary of $\D_+(\H)$.

\begin{corollary}\label{boundary}
Let $\H \subseteq \H^N$. The relative boundary of $\D_+(\H)$ with respect to $\A(\H) $ is given by:
$$
\partial_{\A(\H)} \D_+(\H) = \D_{0+}(\H) \cup \left(\D_+(\H) \cap \partial^i\D(\H) \right) 
$$
More precisely:
\begin{enumerate}

\item[{\bf(A)}] $\rho\in \partial_{\A(\H)} \D_+(\H) \cap {\rm iint}\, \D(\H) \Longrightarrow \rho \in \D_{0+}(\H) $
\smallskip
\item[{\bf(B)}] $\rho\in \partial_{\A(\H)} \D_+(\H) \cap \partial^i\D(\H) \Longleftrightarrow \rho \in \D_+(\H) \cap \partial^i\D(\H)$.

\end{enumerate} 

\end{corollary}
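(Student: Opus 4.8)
The plan is to derive the corollary directly from the interior characterisation in \cref{ipos}, together with the fact that $\D_+(\H)$ is closed. First I would record that, since $\D_+(\H)$ is closed in the trace norm (\cref{Compact_finite_dim}) and contained in $\A(\H)$, its complement $\A(\H)\setminus \D_+(\H)$ is relatively open in $\A(\H)$; hence ${\rm int}_{\A(\H)}\bigl(\A(\H)\setminus \D_+(\H)\bigr)=\A(\H)\setminus \D_+(\H)$, and the relative boundary collapses to
$$
\partial_{\A(\H)}\D_+(\H)=\D_+(\H)\setminus {\rm int}_{\A(\H)}\D_+(\H).
$$
This reduces the problem to describing which WPS fail to be relative interior points.

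Next I would invoke \cref{ipos}, which states that $\rho\in {\rm int}_{\A(\H)}\D_+(\H)$ if and only if $\rho$ is of maximal rank and $\N(\rho)=\emptyset$. Consequently a WPS lies on the relative boundary precisely when it fails one of these two conditions, i.e.\ when $\N(\rho)\neq\emptyset$ or $\rho$ is not of maximal rank. The first alternative is by definition the condition $\rho\in \D_{0+}(\H)$. For the second, I would use that $\D(\H)$ is closed (\cref{Compact}) with ${\rm Aff}(\D(\H))=\A(\H)$, so that $\partial^i\D(\H)=\D(\H)\setminus {\rm iint}\,\D(\H)$; by \cref{intQS}(A) the intrinsic interior ${\rm iint}\,\D(\H)$ is exactly the set of maximal-rank states, whence ``$\rho$ not of maximal rank'' is equivalent to $\rho\in \partial^i\D(\H)$. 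Writing the boundary as the union of the two failure sets then yields
$$
\partial_{\A(\H)}\D_+(\H)=\D_{0+}(\H)\cup\bigl(\D_+(\H)\cap \partial^i\D(\H)\bigr),
$$
which is the main claim.

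Finally, parts {\bf(A)} and {\bf(B)} would follow as immediate specialisations. For {\bf(A)}, if a boundary point $\rho$ is additionally in ${\rm iint}\,\D(\H)$ then it is of maximal rank (\cref{intQS}(A)), so the only way it can fail to be interior to $\D_+(\H)$ is $\N(\rho)\neq\emptyset$, i.e.\ $\rho\in \D_{0+}(\H)$. For {\bf(B)}, the forward implication is immediate since $\partial_{\A(\H)}\D_+(\H)\subseteq \D_+(\H)$; the reverse uses that $\rho\in \partial^i\D(\H)$ forces $\rho$ to be non-maximal-rank, hence $\rho\notin {\rm int}_{\A(\H)}\D_+(\H)$ by \cref{ipos}, so any such $\rho\in \D_+(\H)$ is a relative boundary point. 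I do not expect a genuine obstacle here: the content is entirely carried by \cref{ipos}, and the only care required is the relative-topology bookkeeping—specifically verifying the boundary simplification above and correctly matching ``non-maximal rank'' with the intrinsic boundary $\partial^i\D(\H)$, which coincides with the present ambient notion only because ${\rm Aff}(\D(\H))=\A(\H)$.
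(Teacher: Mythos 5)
Your proof is correct and follows exactly the route the paper intends: the paper states this as a ``trivial corollary'' of \cref{ipos} without writing out the details, and your argument---reducing $\partial_{\A(\H)}\D_+(\H)$ to $\D_+(\H)\setminus{\rm int}_{\A(\H)}\D_+(\H)$ via closedness, then translating the two failure modes of \cref{ipos} into $\D_{0+}(\H)$ and $\D_+(\H)\cap\partial^i\D(\H)$ using \cref{intQS}(A)---is precisely the intended derivation, with the relative-topology bookkeeping handled correctly.
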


We now determine the Hilbert spaces for which the sets $\D_+(\H)$ are nonempty or have a nonempty interior. 

\begin{theorem}\label{intWN}

Let $\H \subseteq \H^N$. Then:
\begin{enumerate}

\item[{\bf(A)}] If $|0\rangle  \perp \H$ then $\D_+(\H) = \emptyset$.
    \smallskip
\item[{\bf(B)}] If $|0\rangle \in \H$ then ${\rm int}_{\A(\H) } \D_+(\H)$ is not empty. Consequently, ${\rm Aff}(\D_+(\H))={\rm Aff}(\D(\H))=\A(\H) $, and ${\rm int}_{\A(\H) }\, \D_+(\H) = {\rm iint}\, \D_+(\H)$.
\smallskip
\item[{\bf(C)}] If $|0\rangle  \not\in \H$ but $|0\rangle \not\perp \H$, then the two previous cases are possible, i.e.~we can have  $\D_+(\H) = \emptyset$ or ${\rm int}_{\A(\H)}\D_+(\H) \not= \emptyset$.

\end{enumerate}	
\end{theorem}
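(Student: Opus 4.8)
The plan is to handle the three parts with different tools: the overlap formula \eqref{HS-inner} for \textbf{(A)}, the interior characterisation \cref{ipos} for \textbf{(B)}, and explicit examples for \textbf{(C)}. For \textbf{(A)} the guiding fact is that the vacuum has a strictly positive (Gaussian) Wigner function, $W|0\>(z)=\tfrac{1}{\pi}e^{-|z|^2}>0$ for all $z$. I would argue by contradiction: if some $\rho\in\D_+(\H)$ existed, then $W\rho\ge0$ with $W\rho\not\equiv0$ (since $\int W\rho={\rm tr}\,\rho=1$), so \eqref{HS-inner} forces $\<0|\rho|0\>={\rm tr}(\rho\,|0\>\<0|)=2\pi\int W\rho\,W|0\>\,dz>0$. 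But $|0\>\perp\H$ gives $\<0|\rho|0\>=\sum_i\lambda_i|\<0|\psi_i\>|^2=0$ from the spectral decomposition $\rho=\sum_i\lambda_i|\psi_i\>\<\psi_i|$ with $\psi_i\in\H$ --- a contradiction, so $\D_+(\H)=\emptyset$.

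\textbf{Part (B).} By \cref{ipos} it suffices to produce a single maximal-rank $\rho\in\D(\H)$ with $\N(\rho)=\emptyset$. Since $|0\>\in\H$, I would take $\rho_\epsilon=(1-\epsilon)|0\>\<0|+\epsilon\,\sigma$ with $\sigma=\tfrac{1}{\dim\H}P_\H$ the maximally mixed state on $\H$ (here $P_\H$ is the orthogonal projector onto $\H$); this has maximal rank for every $\epsilon>0$, since its range contains ${\rm ran}\,\sigma=\H$. The remaining task is strict Wigner-positivity of $W\rho_\epsilon=(1-\epsilon)W|0\>+\epsilon W\sigma$ for small $\epsilon$. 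Because $\H\subseteq\H^N=\H^N_{0,I}$, \cref{SecondTheorem} makes $\N(\sigma)$ bounded, whence by \cref{Theorem5} the negative set $\Omega_-(\sigma)$ is compact; thus $W\sigma>0$ for $|z|>R$, where $W\rho_\epsilon>0$ holds for free. On the ball $|z|\le R$ the continuous positive function $W|0\>$ attains a minimum $m>0$, while the uniform bound $W\sigma\ge-1/\pi$ gives $W\rho_\epsilon\ge(1-\epsilon)m-\epsilon/\pi>0$ once $\epsilon<\pi m/(\pi m+1)$. Then \cref{ipos} yields $\rho_\epsilon\in{\rm int}_{\A(\H)}\D_+(\H)\ne\emptyset$. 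The ``consequently'' assertions are then formal: nonempty relative interior forces $\A(\H)={\rm Aff}(\D_+(\H))$ by \cref{Simon}-A, which with ${\rm Aff}(\D(\H))=\A(\H)$ gives the chain of affine hulls and hence ${\rm int}_{\A(\H)}\D_+(\H)={\rm iint}\,\D_+(\H)$ by definition of the intrinsic interior.

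\textbf{Part (C).} I would exhibit, among subspaces with $|0\>\notin\H$ and $|0\>\not\perp\H$, one instance of each behaviour. For emptiness, the one-dimensional $\H={\rm span}\{a|0\>+b|1\>\}$ with $a,b\ne0$ works: $\D(\H)$ is the single pure state $\pi^{-1/4}(a+\sqrt2\,bx)e^{-x^2/2}$, a Gaussian times a nonconstant polynomial, hence not a generalised Gaussian, so by Hudson's theorem its Wigner function is somewhere negative and $\D_+(\H)=\emptyset$; here $|0\>\notin\H$ while $\<0|(a|0\>+b|1\>)=a\ne0$. For nonempty interior, the displaced Fock space $\H=\H^1_{z_0,I}=D(z_0)[\H^1]$ with $z_0\ne0$ works: displacement covariance of the Wigner function makes $\rho\mapsto D(z_0)\rho D^\dagger(z_0)$ an affine, trace-norm isometric bijection $\A^1\to\A(\H)$ carrying $\D_+^1$ onto $\D_+(\H)$, so ${\rm int}_{\A(\H)}\D_+(\H)$ is the image of ${\rm int}_{\A^1}\D_+^1$, nonempty by \textbf{(B)} applied to $\H^1\ni|0\>$; here $|0\>\notin\H$ but $\<0|D(z_0)|0\>=e^{-|z_0|^2/4}\ne0$.

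\textbf{Main obstacle.} Parts \textbf{(A)} and \textbf{(C)} are essentially a matter of choosing the right objects; the real work is the nodal-set control in \textbf{(B)}. The only danger is at infinity, where $W|0\>$ decays and the perturbation $\epsilon W\sigma$ could in principle dominate. The estimate succeeds precisely because the finite-dimensional hypothesis, through \cref{SecondTheorem} and \cref{Theorem5}, forces $W\sigma$ itself to be strictly positive outside a compact set, confining the delicate competition between the two terms to a ball, where a crude uniform bound on $W\sigma$ closes the argument.
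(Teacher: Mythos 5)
Parts \textbf{(A)} and \textbf{(B)} of your proposal are correct. For \textbf{(A)} you use the same overlap argument as the paper. For \textbf{(B)} you take a genuinely different route: the paper starts from $\rho_0=|0\>\<0|$ and moves along a segment towards an arbitrary full-rank state, invoking \cref{rho+} to guarantee that an initial portion of that segment consists of WPS with empty nodal set; you instead exhibit the explicit state $\rho_\epsilon=(1-\epsilon)|0\>\<0|+\epsilon\sigma$ and verify strict positivity of its Wigner function directly, splitting $\RE^2$ into a compact region (where the uniform bound $W\sigma\ge-1/\pi$ and the positive minimum of $W|0\>$ suffice) and its complement (where $W\sigma>0$ by \cref{SecondTheorem} and \cref{Theorem5}). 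Both arguments funnel through \cref{ipos}; yours is more constructive and avoids \cref{rho+} entirely, at the price of the explicit $\epsilon$-estimate. Your first example in \textbf{(C)} (a one-dimensional span of $a|0\>+b|1\>$ with $a,b\neq0$, killed by Hudson's theorem) is also correct and simpler than the paper's.

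There is, however, a genuine gap in the second half of \textbf{(C)}. The theorem is stated for $\H\subseteq\H^N$, so the required example of a subspace with $|0\>\notin\H$, $|0\>\not\perp\H$ and ${\rm int}_{\A(\H)}\D_+(\H)\neq\emptyset$ must itself be a subspace of some $\H^N$. Your candidate $\H=D(z_0)[\H^1]$ with $z_0\neq0$ is not: the displaced vacuum $D(z_0)|0\>$ is a coherent state with nonzero components on every Fock state, so $D(z_0)[\H^1]\not\subseteq\H^N$ for any finite $N$. (Your covariance argument correctly shows that this displaced space has a Wigner-positive set with nonempty relative interior, but that establishes a statement outside the scope of the theorem.) This is precisely the point where the paper has to work harder: it exhibits an explicit two-dimensional subspace $\H_1={\rm span}\{|1\>,\tfrac{1}{\sqrt{10}}(3|0\>+|2\>)\}\subset\H^2$, shows (numerically) that $\D_+(\H_1)$ contains a state with empty nodal set, and then runs the argument of part \textbf{(B)} with that state in place of $\rho_0$. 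Your proof of \textbf{(C)} would need to be repaired by producing such an example inside some $\H^N$; no displacement or squeezing shortcut is available, because those operations always leave the class of finite Fock-truncated spaces.
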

\smallskip
\begin{proof}
{\bf(A)} Let $\rho \in \D_+(\H)$. Then
$$
{\rm tr}(|0\>\<0|\rho)=\frac{1}{\pi} \int_{\RE^2} e^{-z^2} W\rho (z) dz =0
$$
since ${\rm Ran}\, \rho \subset \H \perp |0\>$. This, however, is not possible if $W\rho(z) \ge 0$, $\forall z\in \RE^2$. Hence, $\D_+(\H) =\emptyset$.

\medskip

{\bf(B)} If $|0\rangle  \in \H$ then $\rho_0= |0\rangle \langle 0|  \in \D_+(\H)$ and $\N(\rho_0)=\emptyset$. 

In the trivial case $\H=\H^0$, it follows from \cref{ipos} that $\rho_0 \in {\rm int}_{\A(\H) } \D_+(\H)$, and thus ${\rm int}_{\A(\H) } \D_+(\H)$ is nonempty. 

We then consider the case ${\rm dim}\, \H \ge 2$. Let  $\rho_1 \in \D(\H)$ be of full rank. Then $\rho_1\not=\rho_0$ and all states $\eta \in (\rho_0,\rho_1]$ are also of maximal rank (because they belong to the intrinsic interior of $\D(\H)$, cf.~Lemma \ref{Simon}-B and \cref{intQS}-A). We then have two possibilities. If $\rho_1 \in \D(\H) \backslash \D_+(\H)$ then from Lemma \ref{rho+}-A there exists $\rho \in (\rho_0,\rho_1) \cap \D_{0+}(\H)$ such that, for all $\eta \in [\rho_0,\rho)$, we have $\N(\eta)=\emptyset$. If $\rho_1 \in \D_+(\H)$ then all $\eta \in (\rho_0,\rho_1)$ satisfy $\N(\eta)=\emptyset$ (which follows trivially from $\N(\rho_0)=\emptyset$). Hence, in both cases, we conclude from Theorem \ref{ipos} that the states $\eta$ above belong to ${\rm int}_{\A(\H)} \D_+(\H)$, and so ${\rm int}_{\A(\H)} \D_+(\H)$ is nonempty. 

Finally, from Lemma \ref{Simon}-A we conclude that ${\rm Aff}(\D_+(\H))=\A(\H)$.

\medskip

{\bf(C)} Consider the Hilbert spaces $\H_1=$ span$\{|1\rangle,\tfrac{\sqrt{10}}{10}(3|0\rangle+|2\rangle)\}$ and 
$\H_2=$ span$\{|3\rangle,\tfrac{\sqrt{10}}{10}(|0\rangle-3|2\rangle)\}$. Since $\H_1 \perp \H_2$, we also have $\D(\H_1) \perp \D(\H_2)$ and so it is not possible for both sets $\D_+(\H_1)$ and $\D_+(\H_2)$ to be nonempty. 

We can show numerically that there exists $\xi \in\D_+(\H_1)$ with empty nodal set (the diagonal elements of $\D_+(\H_1)$ correspond to the hypotenuse of the triangle in \cref{fig:combined}(right) with $c=3$, see below). Following the steps of (B) above (but using $\xi$ instead of $\rho_0$, and choosing $\rho_1 \not=\xi$) we conclude that ${\rm int}_{\A(\H_1)}\D_+(\H_1) \not=\emptyset$.   
\end{proof}

\subsection{Topology of Wigner-positive states in infinite dimension}

In this section, we turn to the infinite-dimensional case.

\begin{theorem}\label{Compact_infinite_dim}
The set $\D_+$ is convex, bounded and closed, but not compact.
\end{theorem}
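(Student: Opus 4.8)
The plan is to dispatch convexity, boundedness and closedness quickly, and to concentrate the real work on disproving compactness. Convexity is immediate from the linearity of $\rho\mapsto W\rho$: if $\rho,\eta\in\D_+$ and $t\in[0,1]$, then $t\rho+(1-t)\eta\in\D$ and $W\big(t\rho+(1-t)\eta\big)=tW\rho+(1-t)W\eta\ge 0$. Boundedness is free from the inclusion $\D_+\subseteq\D$ together with \cref{Compact}. For closedness I would simply transcribe the argument of \cref{Compact_finite_dim}, which never used finite-dimensionality: if $\rho_n\in\D_+$ converges to $\rho$ in trace norm, then $\rho_n\to\rho$ in Hilbert--Schmidt norm and hence $W\rho_n\to W\rho$ in $L^2(\RE^2)$; were $W\rho$ strictly negative on some open set $\Omega$, continuity would give $\int_\Omega|W\rho|^2\,d^2z\ge C>0$, and since $W\rho_n\ge 0$ on $\Omega$ we would get $\|W\rho_n-W\rho\|_2^2\ge\int_\Omega|W\rho|^2\,d^2z\ge C$ for all $n$, contradicting $L^2$-convergence. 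Hence $\rho\in\D_+$, so $\D_+$ is closed; being a closed subset of the Banach space $\B_1$, it is complete.

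The nontrivial claim is non-compactness, and here the main obstacle is that the Fock sequence $|n\>\<n|$ used in \cref{Compact} to defeat compactness of $\D$ does \emph{not} lie in $\D_+$, since Fock states are not Wigner-positive. The resolution is to replace it by a family of states that are manifestly Wigner-positive yet mutually far apart. Coherent states are the natural choice: setting $z_n=(n,0)\in\RE^2$ and $\rho_n\coloneqq D(z_n)\,\rho_0\,D^\dagger(z_n)$ with $\rho_0=|0\>\<0|$, the covariance of the Wigner function gives $W\rho_n(z)=W\rho_0(z-z_n)\ge 0$, so that $\rho_n\in\D_+$ for every $n$.

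It then remains to show that $(\rho_n)_n$ has no Cauchy subsequence. Using Eq.~(\ref{HS-inner}) and the explicit Gaussian $W\rho_0(z)=\tfrac1\pi e^{-|z|^2}$, a one-line Gaussian integral yields the overlap ${\rm tr}(\rho_m\rho_n)=e^{-|z_m-z_n|^2/2}$. Since each $\rho_n$ is a pure state, the trace-norm distance is $\|\rho_m-\rho_n\|_1=2\sqrt{1-{\rm tr}(\rho_m\rho_n)}$, and because $|z_m-z_n|^2=(m-n)^2\ge 1$ for $m\neq n$ we obtain the uniform lower bound $\|\rho_m-\rho_n\|_1\ge 2\sqrt{1-e^{-1/2}}>0$. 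Thus no subsequence of $(\rho_n)_n$ can be Cauchy, and as $\D_+$ is complete it has no convergent subsequence; being a metric space, $\D_+$ is therefore not compact. The only delicate point in the whole argument is selecting a Wigner-positive sequence whose pairwise distances stay bounded away from zero, and displaced vacua (coherent states) supply it cleanly, with the rest reducing to the elementary pure-state trace-distance formula.
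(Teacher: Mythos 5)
Your proof is correct and follows essentially the same route as the paper: convexity and boundedness are immediate, closedness is the verbatim finite-dimensional argument, and non-compactness is defeated by the very same sequence of displaced vacua with pairwise trace distances bounded below (the paper uses $z_n=(na,0)$ and states $\|\rho_m-\rho_n\|_1=2\sqrt{1-e^{-a^2(m-n)^2}}$, differing from your overlap only by an irrelevant convention-dependent factor in the Gaussian exponent). Nothing further is needed.
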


\begin{proof} 
	
It is obvious that $\D_+$ is convex and bounded. The proof that $\D_+$ is closed is identical to the finite-dimensional case (see \cref{Compact_finite_dim}). Since $\B_1$ is a Banach space, this implies that $\D_+$ is a complete metric space.

To prove that $\D_+$ is not compact, we let $a>0$ and consider the sequence of coherent states, where $D(z)$ is the displacement operator given in Eq.~(\ref{D(z)}):
\begin{equation}
    \rho_n(a)\coloneqq D(na,0)\ket{0}\!\bra{0}(D(na,0))^{\dagger},
\end{equation}
for all $n\in\mathbb N$. For all $m\neq n$ we have
\begin{equation}
    \|\rho_m(a)-\rho_n(a)\|_1=2\sqrt{1-e^{-a^2(m-n)^2}}\ge2\sqrt{1-e^{-a^2}},
\end{equation}
so $(\rho_n(a))_{n\in\mathbb N}$ is a sequence of points in $\D_+$ with no Cauchy subsequence and therefore no converging subsequence, so $\D_+$ is not compact since it is complete.
\end{proof}

When $\H$ is infinite-dimensional, the following result is a straightforward consequence of \cref{intQS}:

\begin{corollary} \label{intW_+}We have
$$
{\rm int}_{\A} \D_+ \subseteq {\rm iint}\, \D =\emptyset.
$$
%Note that we still do not know if Aff $(\D_+)=\A$.
\end{corollary}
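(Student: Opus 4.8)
The plan is to reduce the statement to \cref{intQS}-B by two elementary observations about relative interiors. The displayed equality ${\rm iint}\,\D = \emptyset$ is exactly the content of \cref{intQS}-B in the infinite-dimensional case $\H = L^2(\RE)$, so the only thing left to establish is the inclusion ${\rm int}_{\A}\D_+ \subseteq {\rm iint}\,\D$.

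First I would invoke the monotonicity of the relative interior under set inclusion. Since $\D_+ \subseteq \D \subseteq \A$, any $\rho \in {\rm int}_{\A}\D_+$ comes with an open set $U \subseteq \B_1$ satisfying $\rho \in U \cap \A \subseteq \D_+ \subseteq \D$; this same $U$ then witnesses $\rho \in {\rm int}_{\A}\D$, so ${\rm int}_{\A}\D_+ \subseteq {\rm int}_{\A}\D$. Next I would identify ${\rm int}_{\A}\D$ with the intrinsic interior ${\rm iint}\,\D$: recall from the start of \cref{sec:topo} that ${\rm Aff}(\D) = \A$, so the ambient affine space appearing in the definition of the intrinsic interior is precisely $\A$, whence ${\rm iint}\,\D = {\rm int}_{{\rm Aff}(\D)}\D = {\rm int}_{\A}\D$. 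Chaining the two steps gives ${\rm int}_{\A}\D_+ \subseteq {\rm int}_{\A}\D = {\rm iint}\,\D$, and \cref{intQS}-B finishes the argument with ${\rm iint}\,\D = \emptyset$, forcing ${\rm int}_{\A}\D_+ = \emptyset$ as well.

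There is no genuine obstacle here: the result is purely formal once monotonicity of the relative interior and the identity ${\rm Aff}(\D) = \A$ are in place. The only point demanding a little care is bookkeeping of the ambient space for each interior, so that the chain ${\rm int}_{\A}\D_+ \subseteq {\rm int}_{\A}\D = {\rm iint}\,\D$ is read consistently; everything rests on the substantive work already done in \cref{intQS}-B.
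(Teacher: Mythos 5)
Your proof is correct and fills in exactly the routine details the paper leaves implicit when it calls the corollary a ``straightforward consequence'' of \cref{intQS}: monotonicity of the relative interior in the ambient space $\A$, the identification ${\rm int}_{\A}\D = {\rm iint}\,\D$ via ${\rm Aff}(\D)=\A$, and then \cref{intQS}-B. No issues.
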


We now prove a stronger result:

\begin{theorem} \label{intW_+2}
The set $\D_+$ has empty relative interior with respect to $\D$ for the trace norm, i.e.~${\rm int}_{\D} \D_+ = \emptyset$.
	
\end{theorem}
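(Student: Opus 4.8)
The plan is to show that no element of $\D_+$ is a relative interior point, by producing, for an arbitrary $\rho\in\D_+$, a sequence of \emph{genuine} density matrices $\sigma_k\in\D\setminus\D_+$ with $\sigma_k\to\rho$ in the trace norm. Since ${\rm int}_{\D}\D_+\subseteq\D_+$, this suffices. The decisive difference from \cref{intW_+} is that the approximating operators must themselves be states (positive semi-definite, unit trace), so I cannot escape $\D_+$ along the indefinite directions used in \cref{intQS}; instead I will perturb \emph{within} $\D$ by admixing a small amount of a state carrying localised negative Wigner weight.

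First I would record the key analytic fact that $\inf_{z\in\RE^2}W\rho(z)=0$ for every $\rho\in\D_+$. Indeed $W\rho$ is continuous and non-negative with $\int_{\RE^2}W\rho(z)\,dz={\rm tr}\,\rho=1$; were $W\rho(z)\ge m$ for some $m>0$ and all $z$, the integral over the (infinite-measure) plane would diverge, a contradiction. Hence for every $\eta>0$ there exists a point at which $W\rho$ is smaller than $\eta$.

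Next, fix $\epsilon_k=1/k$ and, using the previous fact, choose $z_k\in\RE^2$ with $W\rho(z_k)<\tfrac{1}{\pi(k-1)}$. Let $\tau_k=D(z_k)\ket1\!\bra1 D^\dagger(z_k)$ be the single-photon state displaced to $z_k$, and set $\sigma_k=(1-\epsilon_k)\rho+\epsilon_k\tau_k$. Being a convex combination of density matrices, $\sigma_k\in\D$, and $\|\sigma_k-\rho\|_1=\epsilon_k\|\tau_k-\rho\|_1\le 2\epsilon_k\to0$. By the covariance of the Wigner function (the displacement formula with $S=I$) together with $W(\ket1\!\bra1)(0)=-1/\pi<0$, one has $W\tau_k(z_k)=-1/\pi$, so
$$
W\sigma_k(z_k)=(1-\epsilon_k)W\rho(z_k)-\frac{\epsilon_k}{\pi}<\Bigl(1-\tfrac1k\Bigr)\frac{1}{\pi(k-1)}-\frac{1}{\pi k}=0 ,
$$
whence $\sigma_k\notin\D_+$. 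Thus every trace-norm neighbourhood of $\rho$ meets $\D\setminus\D_+$, so $\rho\notin{\rm int}_{\D}\D_+$; as $\rho$ was arbitrary in $\D_+$, we conclude ${\rm int}_{\D}\D_+=\emptyset$.

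The step that needs the most care is the placement of the ``defect'': the negativity injected by $\tau_k$ is concentrated near $z_k$ and has fixed depth $\epsilon_k/\pi$, so it persists only where $W\rho$ lies below that depth. This is precisely why I establish $\inf_z W\rho(z)=0$ first and then select $z_k$ (necessarily away from the bulk of $W\rho$) \emph{after} fixing $\epsilon_k$. The remaining verifications --- that $\sigma_k$ is a bona fide state and that the trace distance is $O(\epsilon_k)$ --- I expect to be routine; the only external inputs are the continuity of $W\rho$ and the covariance identity, both already available in the Preliminaries.
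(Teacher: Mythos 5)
Your proposal is correct and is essentially the paper's own argument: both perturb $\rho$ within $\D$ by admixing a weight-$1/k$ displaced single-photon state at a point $z_k$ where $W\rho$ falls below an explicit $O(1/k)$ threshold, forcing a strict negativity of $W\sigma_k$ at $z_k$ while $\|\sigma_k-\rho\|_1\to 0$. The only cosmetic differences are your choice of threshold $\tfrac{1}{\pi(k-1)}$ versus the paper's $\tfrac{1}{2\pi n}$ (so start your sequence at $k\ge 2$) and your justification of $\inf_z W\rho=0$ via $\int W\rho=1$ rather than via $W\rho\in L^2(\RE^2)$.
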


\begin{proof}
Let $\rho\in\D_+$. For each $n\in \IN$, let $z_n\in \RE^2$ be such that $W\rho(z_n)<\frac1{2\pi n}$. Note that such $z_n$ exists for all $n\in \IN$; otherwise $W\rho(z)\ge\frac1{2\pi n}$ for some $n$ and all $z\in \RE^2$, and thus $W\rho \notin L^2(\RE^2)$. Define
\begin{equation}\label{sigma_n}
    \sigma_n\coloneqq\left(1-\frac1n\right)\rho+\frac1n D(z_n)|1\rangle\langle1|(D(z_n))^{\dagger},
\end{equation}
for all $n>0$. By construction, $W{\sigma_n}(z_n)\le-\frac1{2\pi n}$ (note that $Wh_1(0)=-\tfrac{1}{\pi}$). Hence, $\sigma_n\in\D\setminus\D_+$ and $\sigma_n\rightarrow\rho$ in the trace norm when $n\rightarrow+\infty$.
\end{proof}

To proceed we note that \cref{rho+} extends trivially to the  infinite-dimensional case: 

\begin{corollary}\label{CorollaryX}
Let $\rho_0 \in \D_+$ be such that $\N(\rho_0)=\emptyset$\footnote{We do not assume that $\rho_0=|0\>\<0|$.}, and let $\rho_1\in \D \backslash \D_+$ have a bounded nodal set $\N(\rho_1)$. Then:
 $$
 \exists^1 \rho \in (\rho_0, \rho_1) \, \cap \, \D_{0+}
 $$
and $\|\rho-\rho_0 \|_1 < \|\rho_1-\rho_0 \|_1$.
\end{corollary}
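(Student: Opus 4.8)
The plan is to reproduce the proof of \cref{rho+} essentially verbatim, since the only place where the hypothesis $\H\subseteq\H^N$ entered there was to guarantee, via \cref{SecondTheorem}, that the nodal set $\N(\rho_1)$ is bounded, and hence --- through \cref{Theorem5} --- that the negative set $\Omega_-(\rho_1)$ is compact. In the present infinite-dimensional statement the boundedness of $\N(\rho_1)$ is assumed outright, so \cref{Theorem5} still delivers the compactness of $\Omega_-(\rho_1)$, which is the one ingredient the argument genuinely needs. Everything downstream of that compactness is dimension-free, which is precisely why this is a corollary rather than a fresh theorem.

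Concretely, I would set $\rho_t=(1-t)\rho_0+t\rho_1$ for $t\in[0,1]$; these lie in $\D$ by convexity, using $\rho_1\in\D$. Since $\rho_0\in\D_+$ and $\N(\rho_0)=\emptyset$, we have $W\rho_0(z)>0$ for every $z\in\RE^2$. For each fixed $z$ the profile $f_z(t)=W\rho_t(z)=(1-t)W\rho_0(z)+tW\rho_1(z)$ is affine in $t$, and on $\Omega_-(\rho_1)$ (where $W\rho_1(z)\le 0<W\rho_0(z)$) it is strictly decreasing with unique root $t(z)=W\rho_0(z)/(W\rho_0(z)-W\rho_1(z))=1/(1-k(z))$, where $k(z)=W\rho_1(z)/W\rho_0(z)$. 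Minimising the continuous function $t(z)$ over the compact set $\Omega_-(\rho_1)$ produces $t_0=t(z_0)$ attained at some $z_0$; then $W\rho_{t_0}(z)\ge 0$ for all $z$ while $W\rho_{t_0}(z_0)=0$, so $\rho\coloneqq\rho_{t_0}\in\D_{0+}$.

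The small point requiring care, and the only genuine (if minor) obstacle, is that the minimum of $k(z)$, and hence of $t(z)$, is genuinely attained over all of $\RE^2$: outside $\Omega_-(\rho_1)$ one has $k(z)>0$, whereas $\rho_1\notin\D_+$ forces $W\rho_1<0$ somewhere in $\Omega_-(\rho_1)$, so $k_0\coloneqq\min_z k(z)=\min_{z\in\Omega_-(\rho_1)}k(z)<0$ and is attained on the compact set. This is exactly where the boundedness hypothesis does its work; without it the minimisation could fail.

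Finally, for the remaining claims I would argue as in parts (A)--(C) of \cref{rho+}: strict monotonicity of each $f_z$ on $\Omega_-(\rho_1)$, together with positivity of both $W\rho_0$ and $W\rho_1$ off $\Omega_-(\rho_1)$, gives $\rho_t\in\D_+\setminus\D_{0+}$ for $t<t_0$, and $W\rho_t(z_0)<0$, hence $\rho_t\notin\D_+$, for $t>t_0$, yielding uniqueness of $\rho$ in $(\rho_0,\rho_1)\cap\D_{0+}$. Taking norms in $\rho=(1-t_0)\rho_0+t_0\rho_1$ gives $\|\rho-\rho_0\|_1=t_0\|\rho_1-\rho_0\|_1$, and since $k_0<0$ we have $t_0=1/(1-k_0)\in(0,1)$, which yields the strict inequality $\|\rho-\rho_0\|_1<\|\rho_1-\rho_0\|_1$.
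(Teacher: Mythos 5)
Your proposal is correct and follows exactly the route the paper intends: the paper states this corollary as a trivial extension of \cref{rho+}, precisely because the only use of $\H\subseteq\H^N$ there was to secure (via \cref{SecondTheorem} and \cref{Theorem5}) the compactness of $\Omega_-(\rho_1)$, which your hypothesis of a bounded nodal set restores directly through \cref{Theorem5}. Your careful check that the minimum of $k(z)$ is attained on the compact set $\Omega_-(\rho_1)$ and is strictly negative, giving $t_0\in(0,1)$ and hence the strict norm inequality, is exactly the content the paper leaves implicit.
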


This result is used in the proof of the next theorem:

\begin{theorem}\label{X}
In the infinite-dimensional case $\H=L^2(\RE)$, we have:
\begin{equation}\label{DenseD0+}
\D_+=\overline{\D_{0+}}
\end{equation}
where the closure is with respect to the trace norm.
\end{theorem}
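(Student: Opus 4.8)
The inclusion $\overline{\D_{0+}} \subseteq \D_+$ is immediate and I would dispose of it first: since $\D_{0+} \subseteq \D_+$ and $\D_+$ is closed in the trace norm (\cref{Compact_infinite_dim}), we get $\overline{\D_{0+}} \subseteq \overline{\D_+} = \D_+$. So the entire content of the theorem is the reverse inclusion $\D_+ \subseteq \overline{\D_{0+}}$, namely that every WPS with \emph{empty} nodal set can be approximated in trace norm by WPS with nonempty nodal set. I would fix $\rho \in \D_+$; if $\N(\rho) \neq \emptyset$ then $\rho \in \D_{0+}$ and there is nothing to do, so the only interesting case is $\N(\rho) = \emptyset$, i.e.~$W\rho(z) > 0$ for all $z \in \RE^2$.

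The plan is to reuse the sequence already built in the proof of \cref{intW_+2} and push each of its terms back along a segment toward $\rho$ using \cref{CorollaryX}. Recall that there one sets
$$
\sigma_n = \left(1 - \tfrac1n\right)\rho + \tfrac1n\, \tau_n, \qquad \tau_n \coloneqq D(z_n)|1\rangle\!\langle 1|(D(z_n))^{\dagger},
$$
where $z_n$ is chosen so that $W\rho(z_n) < \frac{1}{2\pi n}$; these satisfy $\sigma_n \in \D \setminus \D_+$ and $\sigma_n \to \rho$ in the trace norm. To apply \cref{CorollaryX} with $\rho_0 = \rho$ (which has empty nodal set by assumption) and $\rho_1 = \sigma_n$, I must verify that $\N(\sigma_n)$ is bounded. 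This is the one step requiring genuine input: it hinges on the explicit location of the zeros of the displaced single-photon Wigner function. Since $W\tau_n(z) = Wh_1(z - z_n)$ vanishes only on the circle $|z - z_n| = 1/\sqrt{2}$ and is strictly positive outside it, and since $W\rho \ge 0$ everywhere, one has
$$
W\sigma_n(z) = \left(1 - \tfrac1n\right) W\rho(z) + \tfrac1n\, W\tau_n(z) > 0 \qquad \text{whenever } |z - z_n| > \tfrac{1}{\sqrt{2}},
$$
so that $\N(\sigma_n) \subseteq \overline{B(z_n, 1/\sqrt{2})}$ is bounded.

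With this in hand, \cref{CorollaryX} produces a unique state $\mu_n \in (\rho, \sigma_n) \cap \D_{0+}$ obeying $\|\mu_n - \rho\|_1 < \|\sigma_n - \rho\|_1$. Because $\sigma_n \to \rho$, the right-hand side tends to $0$, hence $\mu_n \to \rho$ with $\mu_n \in \D_{0+}$ for every $n$, and therefore $\rho \in \overline{\D_{0+}}$. Combining this with the trivial first inclusion gives $\D_+ = \overline{\D_{0+}}$.

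I expect the only real obstacle to be the boundedness of $\N(\sigma_n)$: everything else is bookkeeping once \cref{CorollaryX} is invoked, but that corollary genuinely requires $\rho_1$ to have a bounded nodal set, which is not automatic for an arbitrary mixture and must be extracted from the specific structure of $\tau_n$ as above. A minor point to flag in writing up is that the $z_n$ escape to infinity (since $W\rho > 0$ but decays), which is harmless: boundedness of each individual $\N(\sigma_n)$ is all that is needed, and the trace-norm convergence $\sigma_n \to \rho$ is unaffected.
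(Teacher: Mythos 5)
Your proof is correct and follows essentially the same route as the paper: the same sequence $\sigma_n$ from the proof of \cref{intW_+2}, the same reduction of boundedness of $\N(\sigma_n)$ to the boundedness of the negative set of the displaced single-photon state, and the same invocation of \cref{CorollaryX} to produce the approximating states in $\D_{0+}$. Your explicit identification of $\Omega_-(\tau_n)$ as the closed disk of radius $1/\sqrt{2}$ is just a more concrete rendering of the paper's inclusion $\Omega_-(\sigma_n)\subseteq\Omega_-(D(z_n)|1\rangle\!\langle 1|D(z_n)^\dagger)$.
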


\begin{proof}
Since $\D_+$ is closed, and $\D_{0+} \subset \D_+$, we have:
\begin{equation*}
\overline{\D_{0+}} \subseteq \D_+.
\end{equation*}
Let us prove the converse inclusion. Let $\rho \in \D_+$  and assume that $\rho \notin \D_{0+}$ (otherwise we already have $\rho \in \overline{\D_{0+}}$). Consider the sequence $(\sigma_n)_{n\in \IN} \subset \D\backslash \D_+ $ defined by Eq.~(\ref{sigma_n}). We have $\sigma_n \to \rho$ in the trace norm. 

Since 
$$
\Omega_-(\sigma_n) \subseteq \Omega_-(D(z_n)|1\rangle\langle1|(D(z_n))^{\dagger})\, , \quad \forall n\in \IN
$$  
we conclude that $\N(\sigma_n)$ is bounded for all $n\in \IN$. 
Hence, from \cref{CorollaryX} there exists $\eta_n\in (\rho,\sigma_n)\cap \D_{0+}$ which also satisfies
$$
\|\eta_n-\rho \|_1 < \|\sigma_n-\rho \|_1.
$$ 
Since $\sigma_n \to \rho$ in the trace norm, the sequence $(\eta_n)_{n\in \IN} \subset \D_{0+}$ also satisfies $\eta_n \xrightarrow[]{\| \, \|_1} \rho$ and thus $\D_+ \subseteq \overline{D_{0+}}$. 
\end{proof} 

Note from Theorems \ref{ipos} and \ref{intWN}-B that Eq.~(\ref{DenseD0+}) does not hold in the finite-dimensional case $\H \subseteq \H^N$. 
In the next section we will see that, in this case, we have instead: $\D_+(\H)= \overline{\rm conv} (\D_{0+}(\H))$. 

\section{Geometry}
\label{sec:geom}

Having characterised the topological properties of the sets of WPS, we turn to their geometrical properties.

\subsection{Geometry of Wigner-positive states in finite dimension}
\label{sec:geom_finite}

In this section we characterise the geometric structure of the convex sets $\D_+(\H)$ in the finite-dimensional case. 

\subsubsection{Characterisation via the extreme points}

One of the main problems we wish to address is that of identifying and constructing a set of WPS $Z\subset \D_+(\H)$ such that $\D_+(\H)=$ $\overline{\mbox{\rm conv}}$$(Z)$. Since $\D_+(\H)$ is convex, a natural candidate is the set of its extreme points. In this section we consider this problem in the finite-dimensional case $\H\subseteq \H^N$. 

Since the set $\D_+(\H)$ is compact and convex, the Krein--Milman Theorem readily applies (cf.~Theorem \ref{Krein--Milman} and Eq.~(\ref{KM-eq-finite})):

\begin{lemma} \label{KMfinite}
Let $\H\subseteq \H^N$, $N\in \IN_0$. If not empty, the set $\D_+(\H)$ is equal to the convex hull of its extreme points.
\end{lemma}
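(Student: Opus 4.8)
The plan is to apply the finite-dimensional specialisation of the Krein--Milman theorem directly: all the substantive work has already been carried out in the preceding sections, so this lemma merely repackages \cref{Compact_finite_dim} in the geometric language needed for what follows. First I would record that the ambient space is finite-dimensional. Since $\H \subseteq \H^N$, the Hilbert space $\H$ is finite-dimensional, and hence the self-adjoint trace-class operators on $\H$ form a finite-dimensional real vector space; in particular $\A(\H) = {\rm Aff}(\D(\H))$ is finite-dimensional, and a finite-dimensional TVS is automatically Hausdorff and locally convex, so the convex-analytic framework of \cref{Preliminaries} applies without any extra hypotheses.

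The second step is to invoke \cref{Compact_finite_dim}: since $\D_+(\H)$ is assumed nonempty and $\H$ is finite-dimensional, that theorem guarantees $\D_+(\H)$ is convex and compact. These are exactly the hypotheses of the finite-dimensional Krein--Milman relation recorded in Eq.~(\ref{KM-eq-finite}). Applying that relation with $S = \D_+(\H)$ yields $\D_+(\H) = {\rm conv}(\E(\D_+(\H)))$, which is the assertion.

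I do not expect any genuine obstacle here. The one point worth a sentence is why the statement carries no closure, in contrast to the general relation (\ref{KM1}): in a finite-dimensional space the convex hull of a compact set is again compact, hence closed (the remark following \cref{Aff+Conv}), so $\overline{\rm conv}(\E(\D_+(\H))) = {\rm conv}(\E(\D_+(\H)))$ and the closure may be dropped. This is precisely what is already built into the cited finite-dimensional form of Krein--Milman, so nothing further is required; the real content of the lemma lies upstream, in the compactness established in \cref{Compact_finite_dim}.
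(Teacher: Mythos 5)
Your proposal is correct and matches the paper's own argument exactly: the lemma is stated there as an immediate consequence of the compactness and convexity of $\D_+(\H)$ established in \cref{Compact_finite_dim}, combined with the finite-dimensional form of the Krein--Milman theorem, Eq.~(\ref{KM-eq-finite}). Your additional remark on why the closure may be dropped is a sensible clarification but does not change the route.
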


The main problem is then how to identify the extreme points. The next theorem provides a partial answer:

\begin{theorem} \label{Extreme_points}
Let $\H\subseteq \H^N$. We have two slightly different cases:
\begin{itemize}
\item If $|0\>\in \H$ then $\E(\D_+(\H)) \subseteq \D_{0+}(\H) \cup \{|0\>\<0|\}$.
\smallskip
\item If $|0\>\notin \H$ then $\E(\D_+(\H)) \subseteq \D_{0+}(\H) $.
\end{itemize}
\end{theorem}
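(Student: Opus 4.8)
The plan is to take an extreme point $\rho \in \E(\D_+(\H))$, assume it lies outside $\D_{0+}(\H)$ (equivalently $\N(\rho)=\emptyset$), and show that this forces $\rho=|0\>\<0|$, which in turn requires $|0\>\in\H$. The two displayed cases then follow immediately: if $|0\>\notin\H$ the alternative $\N(\rho)=\emptyset$ is impossible, so $\E(\D_+(\H))\subseteq\D_{0+}(\H)$; if $|0\>\in\H$ we get the extra possibility $|0\>\<0|$, yielding $\E(\D_+(\H))\subseteq\D_{0+}(\H)\cup\{|0\>\<0|\}$.

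First I would pass to the support subspace. Let $\H_\rho={\rm Ran}\,\rho\subseteq\H\subseteq\H^N$, so that $\rho\in\D_+(\H_\rho)$ and $\rho$ is of maximal rank as an element of $\D(\H_\rho)$. Since we are assuming $\N(\rho)=\emptyset$ as well, \cref{ipos} applied with the Hilbert space $\H_\rho$ gives $\rho\in{\rm int}_{\A(\H_\rho)}\,\D_+(\H_\rho)$. The point of restricting to $\H_\rho$ is that, although $\rho$ need not be of maximal rank in $\H$, it always is in its own support, so \cref{ipos} becomes applicable uniformly.

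Next comes a dimension-plus-extremality argument. Suppose $\dim\H_\rho\ge2$. Because $\rho$ lies in the relative interior of $\D_+(\H_\rho)$, \cref{Simon}-A gives ${\rm Aff}(\D_+(\H_\rho))=\A(\H_\rho)$, an affine space of real dimension $(\dim\H_\rho)^2-1\ge3>0$. Hence there is a nonzero direction $v$ in the linear space parallel to $\A(\H_\rho)$ and an $\varepsilon>0$ with $\rho\pm\varepsilon v\in\D_+(\H_\rho)\subseteq\D_+(\H)$. These two states are distinct and have $\rho$ as their midpoint, contradicting the extremality of $\rho$ in $\D_+(\H)$. Therefore $\dim\H_\rho=1$, i.e.~$\rho=|\psi\>\<\psi|$ is pure. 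Finally, since $\rho$ is pure with $W\rho\ge0$, Hudson's theorem forces $\psi$ to be a generalised Gaussian, $\psi\propto D(z_0)\mu(S)|0\>$ for some $z_0\in\RE^2$, $S\in{\rm Sp}(2)$; but a nonzero displacement or any genuine squeezing spreads $\psi$ over infinitely many Fock states, so the only such $\psi$ belonging to $\H^N$ is a phase multiple of $|0\>$. Thus $\rho=|0\>\<0|$, necessarily with $|0\>\in\H$.

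The main obstacle I expect is the last step: ruling out every non-vacuum generalised Gaussian as a member of $\H^N$. The cleanest route is Hudson's theorem together with the elementary fact that $D(z_0)\mu(S)|0\>$ has finite Fock support only when $z_0=0$ and $S$ is a rotation (so that $\psi\propto|0\>$); the reduction to the support subspace and the middle extremality argument are then routine once \cref{ipos} and \cref{Simon}-A are invoked.
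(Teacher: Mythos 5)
Your proof is correct, but it takes a genuinely different route from the paper's. The paper argues directly: it places a mixed extreme point $\rho=\rho_{t_0}$ on a chord $[\rho_0,\rho_1]$ that exits $\D_+(\H)$, observes that the negative sets $\Omega_-(\rho_t)$ for $t>t_0$ (or $t<t_0$) form a nested family of nonempty compacta (compactness coming from \cref{Theorem5} and \cref{SecondTheorem}), and extracts a common point $z_0$ via Cantor's intersection theorem; continuity of $t\mapsto W\rho_t(z_0)$ then forces $W\rho(z_0)=0$, so $\N(\rho)\neq\emptyset$. You instead argue by contraposition through \cref{ipos}: restricting to the support $\H_\rho$ makes $\rho$ full-rank there, so empty nodal set puts $\rho$ in ${\rm int}_{\A(\H_\rho)}\D_+(\H_\rho)$, and for $\dim\H_\rho\ge 2$ \cref{Simon}-A makes that interior an open piece of an affine space of dimension $\ge 3$, whence $\rho$ is a nontrivial midpoint of two WPS and cannot be extreme; the pure case is then finished by Hudson plus the fact that the only generalised Gaussian with finite Fock support is the vacuum (a fact the paper itself invokes elsewhere, e.g.\ in the proof of \cref{GeneratorsVWPS}). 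Your route is shorter and reuses machinery already established, avoids the paper's case split on whether the second endpoint $\rho_0$ of the chord is Wigner-positive, and actually yields the slightly stronger statement that every WPS with empty nodal set and support of dimension $\ge 2$ fails to be extreme. What the paper's argument buys in exchange is independence from \cref{ipos} and a more constructive picture: the zero of $W\rho$ is exhibited as a limit point of the shrinking negative regions along the chord. Both proofs rest on the same underlying compactness of negative sets (your dependence is hidden inside the proof of \cref{ipos} via \cref{rho+}), so neither is more general; the choice is essentially one of economy versus explicitness.
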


\begin{proof}
If $|0\>\in \H$ then, by Hudson's theorem, $\rho_0=|0\>\<0|$ is the unique pure state in $\D_+(\H)$, and $\rho_0 \in \E(\D_+(\H))$. If $|0\>\notin \H$ then
there are no pure states in $\D_+(\H)$. Hence, the proof of the theorem reduces (in both cases) to prove that if $\rho \in \E(\D_+(\H))$ is a mixed state then $\N(\rho) \not=\emptyset$.

Since $\rho$ is a mixed state, $\rho \in (\rho_0,\rho_1)$ for some $\rho_0,\rho_1 \in \D(\H) \backslash \{\rho \}$ and, at least, $\rho_1 \notin  \D_+(\H)$ (otherwise $\rho$ would not be an extreme point of $\D_+(\H)$).  

Let us then define: 
$$
\rho_t\coloneqq (1-t) \rho_0 + t \rho_1 \quad , \quad t \in [0,1]
$$
so that $\E(\D_+(\H)) \ni \rho=\rho_{t_0}$ for some $t_0\in (0,1)$.
Let also $F_t(z)\coloneqq W\rho_t(z)$.

Since $\rho_t\in \D(\H)$ and $\H \subseteq \H^N$, the sets $\Omega_-(\rho_t)=\{z\in \RE^2|\, F_t(z) \le 0\}$, if nonempty, are all compact (cf.~\cref{Theorem5} and \cref{SecondTheorem}).

Now note that we can have two seemingly different cases: $\rho_0\in \D_+(\H)\backslash\{\rho\}$ and $\rho_0 \in \D(\H) \backslash \D_+(\H) $. Let us consider first the case $\rho_0\in \D_+(\H)\backslash\{\rho\}$.   

Then, for every $z\in \Omega_- (\rho_1)$, the function $t \to F_t(z)$, $t\in (t_0,1]$ is decreasing (possibly constant if $F_1(z)=0$) and so the compact sets $\Omega_- (\rho_t)$ satisfy:
$$
t_0 < t \le  t' \le 1 \, \Longrightarrow \, \Omega_- (\rho_t) \subseteq \Omega_- (\rho_{t'}) ~.   
$$

Moreover, the sets $\Omega_- (\rho_t)$, $t_0<t\le 1$ are all nonempty (otherwise $\rho_t \in \D_+(\H)$ for some $t>t_0$ and $\rho_{t_0}\in (\rho_0,\rho_t)$, with $\rho_0,\rho_t \in \D_+(\H)$, could not be an extreme point of $\D_+(\H)$). Hence, $\left(\Omega_- (\rho_t)\right)_{t_0 < t \le 1}$ is a nested collection of nonempty compact sets, and thus has a nonempty intersection (Cantor's intersection theorem - adapted to the case of a continuous one-parameter family of nested sets). Let then:
$$
z_0 \in \bigcap_{t_0 < t \le 1} \Omega_- (\rho_t).
$$   
Since $F_t(z_0) \le 0$ for all $t \in (t_0,1]$, we have:
$$
\lim_{t\to t_0^+} F_t(z_0) \le 0
$$
and since $t \to F_t(z_0)$ is continuous and $F_{t_0}$ is non-negative, we must have $W\rho(z_0)=F_{t_0}(z_0)=0$, and so $\N(\rho) \not=\emptyset $.

We now consider the case $\rho_0 \in \D(\H) \setminus \D_+(\H)$. Note that $\Omega_-(\rho_0) \cap \Omega_-(\rho_1) = \N(\rho_0) \cap \N(\rho_1)$ (otherwise $\rho_t \notin \D_+(\H) $ for all $t\in [0,1]$). Moreover, if 
$\N(\rho_0) \cap \N(\rho_1) \not= \emptyset$ then $\N(\rho_t) \not=\emptyset$ for all $t\in [0,1]$, and so $\rho=\rho_{t_0} \in \D_{0+}(\H)$. 

Hence, assume that $\Omega_-(\rho_0) \cap \Omega_-(\rho_1) = \emptyset $. For every $z\in \Omega_- (\rho_1)$ we consider again the function $t \to F_t(z)$, $t\in (t_0,1]$ which is strictly decreasing. In this case we also consider, for every $z\in \Omega_- (\rho_0)$, the function $t \to F_t(z)$, $t\in [0,t_0)$, which is strictly increasing. It follows that the sets $\Omega_-(\rho_t)$, $t \in [0,1] \backslash \{t_0\}$ satisfy:
$$
0 \le t \le  t' < t_0 \, \Longrightarrow \, \Omega_- (\rho_t) \supseteq \Omega_- (\rho_{t'}) ,   
$$  
$$
t_0 < t \le  t' \le 1 \, \Longrightarrow \, \Omega_- (\rho_t) \subseteq \Omega_- (\rho_{t'}) .   
$$  
Moreover, all the sets in one of these two families are nonempty (otherwise $\rho_t \in \D_+(\H)$ for some $t_1 <t_0$ and some $t_2 > t_0$, and so $\rho_{t_0} \in (\rho_{t_1},\rho_{t_2})$ could not be an extreme point of $\D_+(\H)$).

We conclude that either $\left(\Omega_- (\rho_t)\right)_{0 \le  t < t_0}$ or $\left(\Omega_- (\rho_t)\right)_{t_0 < t \le 1}$ is a nested collection of nonempty compact sets, and the rest of the proof follows the steps of the previous case.   
\end{proof}

\begin{remark}
	
If the negative set $\Omega_-(\rho_1)$ is unbounded, the previous result is not valid, as the transition from non-positive to strictly positive Wigner states does not necessarily cross a vanishing WPS. At least, this is the case for $L^2(\RE)$-functions, as the following example shows:
	
	Let $f_0(x)=e^{-x^2}$ and $f_1(x)=-\frac{2}{\pi} e^{-x^2} \arctan |x|$. Then $f_t \in [f_0,f_1]$ is of the form:
	$$
	f_t(x)=(1-t)f_0(x)+tf_1(x)=e^{-x^2}-te^{-x^2}\left( 1+\frac{2}{\pi} \arctan |x| \right)\, , \, t\in [0,1]
	$$
	and 
	$$
	f_t(x) \ge 0\, ,\, \forall x\in \RE \, \Longleftrightarrow \, t\le 1/2~.
	$$
However
$$
f_t(x) =0 \Longleftrightarrow \arctan |x| = \frac{\pi}{2} \left( \frac{1}{t} -1\right). 
$$ 
Observe that if $0< t \le 1/2$, the previous equation has no solutions. We conclude that if $f$ is non-negative then $f$ does not vanish; and thus, in this case, there is no vanishing positive function. 

\end{remark}

We then obtain the following simple characterisation of the finite-dimensional sets of WPS.

\begin{corollary}\label{GeneratorsVWPS}
Let $\H\subseteq \H^N$ for some $N\in \IN_0$ and let $\rho_0=|0\>\<0|$. We have:
\begin{itemize}

\item If $|0\> \in \H$ then $
%\begin{equation} \label{ZeroGenerators0}
\D_+(\H) = {\rm conv} \left(\D_{0+}(\H) \, \cup \, \{\rho_0\} \right)
%\end{equation}
$
\smallskip
\item If $|0\> \notin \H$ then $
%\begin{equation} \label{ZeroGenerators0}
\D_+(\H) = {\rm conv} (\D_{0+}(\H) )
%\end{equation}
$ 
\smallskip
\item If ${\rm dim}\,\H \ge 2$ then  $\D_+(\H) = \overline{\rm conv} (\D_{0+}(\H) )$
\end{itemize}	
\end{corollary}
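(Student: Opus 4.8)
The plan is to dispatch the first two bullets immediately from the Krein--Milman description already in hand, and to concentrate the real work on the third. For the first two bullets I would start from \cref{KMfinite}, which gives $\D_+(\H)={\rm conv}(\E(\D_+(\H)))$ whenever $\D_+(\H)\neq\emptyset$; no closure is needed since $\D_+(\H)$ is compact (\cref{Compact_finite_dim}), cf.\ \eqref{KM-eq-finite}. Substituting the inclusions from \cref{Extreme_points} --- $\E(\D_+(\H))\subseteq\D_{0+}(\H)\cup\{\rho_0\}$ when $|0\>\in\H$, and $\E(\D_+(\H))\subseteq\D_{0+}(\H)$ when $|0\>\notin\H$ --- yields the inclusion ``$\subseteq$'' in each case. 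The reverse inclusions are trivial: $\D_{0+}(\H)\subseteq\D_+(\H)$ and, when $|0\>\in\H$, $\rho_0=|0\>\<0|\in\D_+(\H)$ (its Wigner function is Gaussian), so convexity of $\D_+(\H)$ gives ${\rm conv}(\D_{0+}(\H)\cup\{\rho_0\})\subseteq\D_+(\H)$ and likewise without $\rho_0$. The degenerate possibility $\D_+(\H)=\emptyset$, which can occur only when $|0\>\notin\H$, is vacuous.

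For the third bullet, the inclusion $\overline{{\rm conv}}(\D_{0+}(\H))\subseteq\D_+(\H)$ is immediate because $\D_+(\H)$ is closed and convex and contains $\D_{0+}(\H)$. When $|0\>\notin\H$ the reverse inclusion is already the second bullet together with closedness, and needs no hypothesis on the dimension. Thus the entire content of the third bullet lies in the case $|0\>\in\H$ with $\dim\H\ge2$, where, by the first bullet, it suffices to prove $\rho_0\in\overline{{\rm conv}}(\D_{0+}(\H))$; for this it is in turn enough to show $\rho_0\in\overline{\D_{0+}(\H)}$, since then $\D_+(\H)={\rm conv}(\D_{0+}(\H)\cup\{\rho_0\})\subseteq\overline{{\rm conv}}(\D_{0+}(\H))$. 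The hypothesis $\dim\H\ge2$ is essential: for $\H=\H^0$ one has $\D_{0+}(\H)=\emptyset$ while $\D_+(\H)=\{\rho_0\}$.

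The crux is therefore to exhibit a sequence in $\D_{0+}(\H)$ converging to $\rho_0$ in trace norm, and I would do this by an explicit construction on a two-dimensional subspace. Fix a unit vector $\psi\in\H$ with $\psi\perp|0\>$ (available since $\dim\H\ge2$) and set
\[
\rho(q,s)=q\,|0\>\<0|+(1-q)\,|\psi\>\<\psi|+s\,|0\>\<\psi|+\bar s\,|\psi\>\<0|,
\]
which lies in $\D(\H)$ precisely when $|s|^2\le q(1-q)$. Since $\psi\in\H^N$ one may write $W\rho(q,s)=\tfrac1\pi e^{-|z|^2}Q(z)$ with $Q(z)=q+(1-q)A(z)+2{\rm Re}(s\,B(z))$, where $A=\pi e^{|z|^2}W(\psi,\psi)$ and $B=\pi e^{|z|^2}W(h_0,\psi)$ are polynomials. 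The aim is to choose, as $q\uparrow1$, a coherence $s=s(q)\to0$ with $|s|^2<q(1-q)$ making $\min_z Q=0$; then $\rho(q,s)\in\D_{0+}(\H)$ and $\rho(q,s)\to\rho_0$. The model case $\psi=|1\>$ is fully explicit: there $A(z)=2|z|^2-1$ and $B(z)=\sqrt2\,(x+i\xi)$, completing the square gives $\min_z Q=(2q-1)-|s|^2/(1-q)$, and the choice $|s|^2=(1-q)(2q-1)$ makes $\min_z Q=0$ while obeying $|s|^2\le q(1-q)$ for $q\in[\tfrac12,1]$; letting $q\uparrow1$ forces $s\to0$, whence $\rho(q,s)\to\rho_0$.

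The main obstacle is to carry out this last step for a \emph{general} partner $\psi$, since an arbitrary $\H\subseteq\H^N$ need not contain any Fock state other than $|0\>$. The difficulty is structural: $A$ and $|B|^2$ share the same leading term, so the minimizer of $Q$ escapes to infinity as $q\uparrow1$, and the feasibility of $\min_z Q=0$ under the \emph{strict} constraint $|s|^2<q(1-q)$ reduces to a borderline estimate on the subleading part $A-|B|^2$ --- exactly the point that the $\H^1$ computation settles by an exact identity. I expect this to be controllable in general by exploiting the freedom in the choice of $\psi$, or by allowing a higher-rank perturbation which supplies extra parameters; once $\rho_0\in\overline{\D_{0+}(\H)}$ is established, the corollary follows.
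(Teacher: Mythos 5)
Your treatment of the first two bullets is correct and coincides with the paper's: \cref{KMfinite} plus \cref{Extreme_points} give the forward inclusions, and convexity of $\D_+(\H)$ gives the reverse ones. Your reduction of the third bullet to the single claim $\rho_0\in\overline{\D_{0+}(\H)}$ (when $|0\>\in\H$ and $\dim\H\ge 2$) is also valid, and in fact slightly cleaner than the paper's, which instead invokes $\overline{\rm conv}(X)={\rm conv}(\overline X)$ for bounded sets in finite dimensions.

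However, there is a genuine gap exactly where you flag it: you only establish $\rho_0\in\overline{\D_{0+}(\H)}$ when $|1\>\in\H$, via an explicit completion of the square, and you leave the general two-dimensional subspace ${\rm span}\{|0\>,\psi\}$ as an expectation rather than a proof. An arbitrary $\H\subseteq\H^N$ containing $|0\>$ need not contain any Fock state, so the model computation does not cover the statement as claimed. The missing ingredient is that you do not need to solve $\min_z Q=0$ by hand at all: this is precisely what \cref{rho+} (already proved in the paper) supplies. Take any unit $\phi\in\H$ with $\phi\perp|0\>$, set $\psi_n=\cos(\tfrac1n)|0\>+\sin(\tfrac1n)\phi$ and $\rho_n=|\psi_n\>\<\psi_n|$. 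By Hudson's theorem $\rho_n\in\D(\H)\setminus\D_+(\H)$ (it is a non-Gaussian pure state), and $\N(\rho_0)=\emptyset$, so \cref{rho+} produces a state $\wt\rho_n\in(\rho_0,\rho_n)\cap\D_{0+}(\H)$ with $\|\wt\rho_n-\rho_0\|_1\le\|\rho_n-\rho_0\|_1=2\sin(\tfrac1n)\to0$; the hypotheses of \cref{rho+} are available here because $\H\subseteq\H^N$ guarantees (via \cref{Theorem5} and \cref{SecondTheorem}) that $\Omega_-(\rho_n)$ is compact, which is what makes the critical mixing parameter $t_0$ attained. This closes the gap for every admissible $\H$ and renders your explicit borderline estimate on $A-|B|^2$ unnecessary. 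I would recommend replacing the final paragraph of your argument with this appeal to \cref{rho+}; as written, the proof is incomplete.
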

\smallskip
\begin{proof}
The two first results follow directly from Lemma \ref{KMfinite}, Theorem \ref{Extreme_points}, and the fact that if $\E(A)\subseteq B \subseteq$ ${\rm conv}(\E(A))$ then ${\rm conv}(B)= {\rm conv}(\E(A))$.

Let us prove the third result. If $|0\> \notin \H$ the relation follows from the second identity (taking into account that $\D_+(\H)$ is closed). 

In the case $|0\> \in \H$, let us first show that $\rho_0 \in  \overline{\D_{0+}(\H)}$. Since dim $\H\ge 2$, we select a state $|\phi\> \in \H$ such that $|\phi\> \perp |0\>$, and construct the  sequence of pure states:
$$
|\psi_n\>=\cos\left(\tfrac{1}{n}\right) |0\> + \sin\left(\tfrac{1}{n}\right) |\phi\>\, , \quad n\in \IN.
$$ 
Let $\rho_n=|\psi_n\>\<\psi_n|$. The only Gaussian pure state in $\D(\H^N)$ is $\rho_0$ and thus, from Hudson's Theorem, $\rho_n\in \D(\H) \backslash \D_+(\H)$ for all $n \in \IN$. From Lemma \ref{rho+} there exists $\wt\rho_n \in (\rho_0,\rho_n)\cap \D_{0+}(\H)$ that satisfies:
$$
\| \wt\rho_n-\rho_0\|_1 \le \|\rho_n-\rho_0\|_1.
$$
We also have: 
$$
\|\rho_n-\rho_0\|_1=2\sqrt{1- |\<\psi_n|0\>|^2}=2\sin\left(\tfrac{1}{n}\right).
$$
Hence, $(\wt\rho_n)_{n\in \IN} \subset \D_{0+}(\H)$ and $\wt\rho_n \xrightarrow[n\to \infty]{\| \, \|_1} \rho_0$. We conclude that $\rho_0\in \overline{D_{0+}(\H)}$. 

Now recall that, in finite dimensions, if $X$ is bounded then $\overline{{\rm conv}}(X)={\rm conv}(\overline{X})$ \cite[Theorem 17.2]{Rockafellar}. Taking into account that $\D_+(\H)$ is compact, we obtain:
$$
\D_+(\H) ={\rm conv} \left(\D_{0+}(\H) \, \cup \, \{\rho_0\} \right)= \overline{{\rm conv}} \left(\D_{0+}(\H) \, \cup \, \{\rho_0\} \right)
$$
$$
={\rm conv} \left(\overline{\D_{0+}(\H) \, \cup \, \{\rho_0\}} \right)={\rm conv} \left(\overline{\D_{0+}(\H)} \right)
= \overline{{\rm conv}} \left(\D_{0+}(\H) \right)
$$
which concludes the proof. 
\end{proof}

\medskip

Note that in general $\E(\D_+(\H))$ is a proper subset of $ \D_{0+}(\H) \cup \{\rho_0\}$, and so it may be possible to fine-tune the set of generators proposed in Corollary \ref{GeneratorsVWPS}. We discuss this problem by further exploring the very interesting relation between extreme points and nodal sets.  

Consider the {\it faces} of $\D_+(\H)$:
$$
\F(S)\coloneqq\left\{\rho \in \D_+(\H): \, S \subseteq \N(\rho)  \right\} \subseteq \D_{0+}(\H)
$$
associated to the (nonempty) sets $S\subset \RE^2$. These are convex sets, and satisfy
$\F(S) \subseteq \E(\D_+(\H))$ if and only if $\F(S)$ is a singleton. Moreover, up to the possible exclusion of $\{\rho_0\}$, we have:
\begin{equation}\label{ExtFaces2}
\D_{0+}(\H)=\bigcup_{S\subset \RE^2}\F(S) \quad \mbox{and} \quad \E(\D_+(\H))=
\bigcup_{S\subset \RE^2}\E(\F(S))
\end{equation}
where the second equality follows from (cf.~Eq.~(\ref{ExtFaces})):
$$
\E(\F(S))= \E(\D_+(\H)) \, \cap \, \F(S)
$$
by taking the union $\cup_{S\subset \RE^2}$ (and ignoring $\{\rho_0\}$):
\begin{eqnarray}\label{ExtrF}
\bigcup_{S\subset \RE^2}\E(\F(S))&=&\bigcup_{S\subset \RE^2}\big( \E(\D_+(\H)) \, \cap \, \F(S)\big)\\
&=&\E(\D_+(\H)) \cap \big(\bigcup_{S\subset \RE^2} \F(S) \big)\nonumber\\ 
&=&\E(\D_+(\H)) \cap \D_{0+}(\H)\nonumber\\ 
&=&\E(\D_+(\H))\nonumber
\end{eqnarray}
where we used the first identity in Eq.~(\ref{ExtFaces2}) and Theorem \ref{Extreme_points}.
 We thus conclude that:
\begin{corollary}\label{Faces_Cor}
For $\H\subseteq \H^N$ we have
$$
\E\left(\D_+(\H)\right)=  \bigcup_{S\subset \RE^2}\E(\F(S)) \, \cup \, \{\rho_0 \} 
$$
or 
$$  
\E\left(\D_+(\H)\right)=  \bigcup_{S\subset \RE^2}\E(\F(S)) 
$$
depending on whether $|0\>\in \H $ or $|0\>\notin \H$. 
\end{corollary}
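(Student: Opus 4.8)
The plan is to read the corollary off the computation already displayed in Eq.~(\ref{ExtrF}), carefully organising the bookkeeping of the Gaussian state $\rho_0=|0\>\<0|$ into the two cases. The only genuinely new ingredients are two elementary set identities together with the observation, from Hudson's theorem, that $\rho_0\notin\D_{0+}(\H)$; everything else is machinery that is already in place.

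First I would justify the first equality in Eq.~(\ref{ExtFaces2}), namely $\bigcup_{S\subset\RE^2}\F(S)=\D_{0+}(\H)$. The inclusion $\subseteq$ is immediate, since a nonempty $S$ forces $\N(\rho)\supseteq S\neq\emptyset$, whence $\F(S)\subseteq\D_{0+}(\H)$; for $\supseteq$, given $\rho\in\D_{0+}(\H)$ I pick any $z\in\N(\rho)$ and note $\rho\in\F(\{z\})$. Next I would record that each $\F(S)$ is a face of $\D_+(\H)$: for a single point $S=\{z\}$ it is the set on which the continuous linear functional $\rho\mapsto -W\rho(z)$ attains its supremum $0$ over $\D_+(\H)$, cf.~Eq.~(\ref{Faces}), and a general $\F(S)=\bigcap_{z\in S}\F(\{z\})$ is then an intersection of faces, hence itself a face. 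With $\F(S)$ a face, Eq.~(\ref{ExtFaces}) yields $\E(\F(S))=\E(\D_+(\H))\cap\F(S)$, which is exactly the first line of Eq.~(\ref{ExtrF}).

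Distributing the union over $S$ past the intersection then gives, precisely as in Eq.~(\ref{ExtrF}),
$$
\bigcup_{S\subset\RE^2}\E(\F(S))=\E(\D_+(\H))\cap\bigcup_{S\subset\RE^2}\F(S)=\E(\D_+(\H))\cap\D_{0+}(\H).
$$
It remains to compare $\E(\D_+(\H))\cap\D_{0+}(\H)$ with $\E(\D_+(\H))$ itself, and this is where the case split enters. By Hudson's theorem $W\rho_0$ is a strictly positive Gaussian, so $\N(\rho_0)=\emptyset$ and hence $\rho_0\notin\D_{0+}(\H)$. If $|0\>\notin\H$, then $\rho_0\notin\D_+(\H)$ and \cref{Extreme_points} gives $\E(\D_+(\H))\subseteq\D_{0+}(\H)$, so intersecting with $\D_{0+}(\H)$ changes nothing and the second formula follows. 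If instead $|0\>\in\H$, then $\rho_0$ is the unique pure Wigner-positive state and therefore $\rho_0\in\E(\D_+(\H))$, while \cref{Extreme_points} gives $\E(\D_+(\H))\subseteq\D_{0+}(\H)\cup\{\rho_0\}$; consequently $\E(\D_+(\H))\cap\D_{0+}(\H)=\E(\D_+(\H))\setminus\{\rho_0\}$, and adjoining $\{\rho_0\}$ recovers the first formula.

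There is no real analytic obstacle here, since the heavy lifting was carried out in \cref{Extreme_points}. The one point demanding care is the treatment of $\rho_0$: it is the single extreme point with empty nodal set, so it escapes every $\F(S)$, and one must verify through Hudson's theorem both that it lies outside $\D_{0+}(\H)$ and that it is genuinely an extreme point of $\D_+(\H)$ exactly when $|0\>\in\H$, in order to include or exclude the stray term $\{\rho_0\}$ correctly.
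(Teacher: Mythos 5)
Your proposal is correct and follows essentially the same route as the paper: the identity $\bigcup_{S}\F(S)=\D_{0+}(\H)$, the face relation $\E(\F(S))=\E(\D_+(\H))\cap\F(S)$ from Eq.~(\ref{ExtFaces}), distribution of the union, and \cref{Extreme_points} to handle the residual term $\{\rho_0\}$. You merely make explicit a few steps the paper leaves implicit (that each $\F(S)$ is indeed a face, and the precise bookkeeping of $\rho_0$ via Hudson's theorem), which is sound.
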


An interesting complement to the previous results, is the following:

\begin{theorem}
Let $\H=\H^N_{z_0,S}$. The extreme points of $\D_+(\H)$ and $\D_+^{N}$ are related by:
$$
\rho \in \E(\D^N_+) \Longleftrightarrow U \rho \, U^{\dagger} \in \E(\D(\H)) 
$$
where $U=D(z_0) \mu(S^{-1})$. 	
\end{theorem}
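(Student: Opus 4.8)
The plan is to realise the conjugation map $\Phi(\rho) \coloneqq U\rho U^\dagger$, with $U = D(z_0)\mu(S^{-1})$, as an affine bijection between $\D_+^N = \D_+(\H^N)$ and $\D_+(\H)$ (where $\H = \H^N_{z_0,S} = U[\H^N]$ by Eq.~(\ref{Hilbert_zS})), and then to invoke the elementary fact that affine bijections map extreme points to extreme points. First I would observe that $U$ is unitary on $L^2(\RE)$, since $D(z_0)$ is unitary and $\mu(S^{-1}) \in$ Mp$(2)$ is unitary. Consequently $\Phi$ is a linear bijection of $\B_1$ onto itself, with inverse $\sigma \mapsto U^\dagger \sigma U$, and it preserves self-adjointness, positivity and the trace. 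Moreover $U$ maps $\H^N$ isometrically onto $\H$, so that ${\rm Ran}\,\rho \subseteq \H^N$ if and only if ${\rm Ran}(U\rho U^\dagger) \subseteq \H$; hence $\Phi$ restricts to a bijection $\D(\H^N) \to \D(\H)$.

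Next I would verify that $\Phi$ preserves Wigner positivity. Since the conjugation $U\rho U^\dagger$ is insensitive to the sign ambiguity of the metaplectic operator, it coincides with $D(z_0)\mu(S^{-1})\rho\,\mu(S)D^{-1}(z_0)$, and the covariance relation for mixed states recalled above gives
$$
W(\Phi(\rho))(z) = W\rho\big(S(z-z_0)\big).
$$
Because $S \in$ Sp$(2) \subset$ Gl$(2;\RE)$ and $z_0 \in \RE^2$, the map $z \mapsto S(z-z_0)$ is a bijection of $\RE^2$; therefore $W\rho(z) \ge 0$ for all $z$ if and only if $W(\Phi(\rho))(z) \ge 0$ for all $z$. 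Combined with the previous paragraph, this shows that $\rho \in \D_+^N$ if and only if $\Phi(\rho) \in \D_+(\H)$, i.e.\ $\Phi$ restricts to a bijection $\D_+^N \to \D_+(\H)$.

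Finally, being linear and bijective, $\Phi$ preserves proper convex combinations in both directions: $\rho = (1-t)\rho_1 + t\rho_2$ with $0 < t < 1$ and $\rho_1,\rho_2 \in \D_+^N$ if and only if $\Phi(\rho) = (1-t)\Phi(\rho_1) + t\Phi(\rho_2)$ with $\Phi(\rho_1),\Phi(\rho_2) \in \D_+(\H)$. Applying \cref{extreme} then yields at once that $\rho \in \E(\D_+^N)$ if and only if $U\rho U^\dagger \in \E(\D_+(\H))$, which is the assertion (the intended set on the right-hand side being $\D_+(\H)$; note that with $\D(\H)$ the equivalence would fail, since the extreme points of $\D_+^N$ are generically mixed by \cref{Extreme_points}).

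The work here is essentially bookkeeping rather than conceptual, and the only point requiring care is matching the conjugation $U\rho U^\dagger$ to the precise form appearing in the covariance identity—in particular checking that the metaplectic double-cover sign is harmless under conjugation—together with confirming that the range condition defining $\D(\H^N)$ transports correctly under the unitary $U$. Once $\Phi$ is seen to be an affine bijection $\D_+^N \to \D_+(\H)$, the extreme-point equivalence is immediate.
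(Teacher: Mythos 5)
Your proof is correct and follows essentially the same route as the paper's: conjugation by $U$ is an affine bijection that intertwines the Wigner functions via the covariance relation $W(U\rho U^\dagger)(z)=W\rho(S(z-z_0))$, hence maps $\D_+^N$ bijectively onto $\D_+(\H)$ and preserves proper convex combinations in both directions, which gives the extreme-point equivalence at once. You are also right that the right-hand side of the statement should read $\E(\D_+(\H))$ rather than $\E(\D(\H))$ --- a typo confirmed by the paper's own proof, which concludes with $\eta\notin\E(\D_+(\H))$.
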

\begin{proof}
Let $\eta \in \D(\H)$. Then exists $\rho \in \D^N$ such that $\eta= U\rho \,U^{\dagger}$ (cf.~Eq.~(\ref{Hilbert_zS})). Moreover $\eta$ is a WPS if and only if $\rho$ is a WPS. This follows from 
$$
W\eta(z)=W\rho(S(z-z_0)).
$$  
Assume that $\rho \notin \E(\D^N_+)$. Then exists $\rho_0,\rho_1\in \D^N_+$ such that $\rho \in (\rho_0,\rho_1)$. This implies that $\eta =U\rho \, U^{\dagger} \in (\eta_0,\eta_1)$ where $\eta_i=U\rho_i \, U^{\dagger}$, $i=0,1$. Hence, $\eta \notin \E(\D_+(\H))$.

Since $\eta =U\rho \, U^{\dagger} \Longrightarrow \rho =  U^{\dagger} \eta \, U$, the converse implication also holds, concluding the proof. 
\end{proof}

\subsubsection{Characterisation via the boundary}

We now concentrate on the case $|0\> \in \H$, and provide an alternative construction of $\D_+(\H)$ by using convex combinations of certain elements of the boundary of $\D_+(\H)$. The main advantage here is that we are able to construct the set of generators explicitly by using convex combinations of $\rho_0=|0\>\<0|$ and particular density matrices from the boundary of $\D(\H)$. As a side result we also show that the entire set of density matrices $\D(\H)$ can be construct by taking affine combinations of the same set of generators.

Let $\H \subseteq \H^M$ be such that $|0\> \in \H$ and dim $\H =N+1$ with $0< N \le M$. Define the set: 
\begin{equation}\label{Y}
    Y(\H)= \{\rho \in \D(\H): {\rm Rank}\, \rho =N \, \wedge \, |0\rangle \notin {\rm Ran}\, \rho \}
\end{equation}
which is a subset of $\partial^i \D(\H)$. We then have:

\begin{theorem}\label{X_+}

Let $\rho_0 =|0\rangle\langle 0|$ and let $\rho \in {\rm iint} \,\D_+(\H)$. Then:

\begin{enumerate}

\item[{\bf(A)}] $\exists^1 \, {\rho_1 \in Y(\H)}: ~\rho \in (\rho_0, \rho_1)$.
\smallskip
\item[{\bf(B)}] $\exists^1\, {\rho_+ \in \partial^i \D_+(\H)}:~ \rho \in (\rho_0, \rho_+)$. 
 \smallskip 
\item[{\bf(C)}] $\rho_+=F(\rho_1)$, where $F:Y(\H) \longrightarrow \partial^i\D_+(\H)$ is defined by
\begin{equation}\label{F1}
F(\rho_1)= (1-t_0(\rho_1)) \rho_0 + t_0(\rho_1)\rho_1
\end{equation}
where
\begin{equation}\label{F2}
t_0(\rho_1) =\frac{1}{1-k_0(\rho_1)}, \quad k_0(\rho_1)= {\rm min}\left\{0, \, \tfrac{W\rho_1(z)}{W\rho_0(z)} \, , \, z\in \RE^2\right\}
\end{equation}
Moreover, $F$ is injective, and $t_0(\rho_1)=\frac{||\rho_+-\rho_0||_1}{||\rho_1-\rho_0||_1}$.

\end{enumerate}

\end{theorem}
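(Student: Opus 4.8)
The plan is to handle the three parts in order, reading (A) off the geometry of $\D(\H)$, (B) off that of $\D_+(\H)$, and then fusing the two in (C) through \cref{rho+}.

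For part (A), I would first note that $\rho\in{\rm iint}\,\D_+(\H)$ forces, by \cref{ipos}, that $\rho$ has maximal rank $N+1$ and $\N(\rho)=\emptyset$; maximal rank then places $\rho\in{\rm iint}\,\D(\H)$ by \cref{intQS}-A. Since $\D(\H)$ is compact and convex (\cref{Compact}) and $\rho_0=|0\rangle\langle 0|\in\D(\H)\setminus\{\rho\}$, \cref{Simon}-D yields a unique $\rho_1\in\partial^i\D(\H)$ with $\rho\in(\rho_0,\rho_1)$. The substance of (A) is to identify this exit point. Along the line $\rho_\tau=\rho_0+\tau(\rho-\rho_0)$ one has $\langle v|\rho_\tau|v\rangle=(1-\tau)|\langle 0|v\rangle|^2+\tau\langle v|\rho|v\rangle$, so for $\tau>1$ positivity holds for all $v$ iff $\tfrac{\tau}{\tau-1}\ge\max_{v\ne 0}\tfrac{|\langle 0|v\rangle|^2}{\langle v|\rho|v\rangle}$. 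A Cauchy--Schwarz computation (substituting $w=\rho^{1/2}v$) shows this maximum equals $M:=\langle 0|\rho^{-1}|0\rangle$, attained uniquely along $v_*=\rho^{-1}|0\rangle$, where $\rho^{-1}$ is the inverse of the full-rank $\rho$ on $\H$; moreover $M>1$ since ${\rm rank}\,\rho\ge 2$ forces $\langle 0|\rho|0\rangle<1$. Hence the boundary is reached at $\tau_{\max}=M/(M-1)$, and $\ker\rho_1={\rm span}\{v_*\}$ is one-dimensional, so ${\rm Rank}\,\rho_1=N$ and ${\rm Ran}\,\rho_1=\{v_*\}^\perp$. Finally $|0\rangle\in{\rm Ran}\,\rho_1$ would mean $\langle 0|\rho^{-1}|0\rangle=0$, contradicting $M>1$, so $|0\rangle\notin{\rm Ran}\,\rho_1$ and $\rho_1\in Y(\H)$.

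Part (B) is immediate: $\D_+(\H)$ is compact and convex (\cref{Compact_finite_dim}), $\rho\in{\rm iint}\,\D_+(\H)$, and $\rho_0\in\D_+(\H)\setminus\{\rho\}$, so \cref{Simon}-D gives a unique $\rho_+\in\partial^i\D_+(\H)$ with $\rho\in(\rho_0,\rho_+)$. Here the intrinsic interior and boundary of $\D_+(\H)$ are those relative to $\A(\H)$, since $|0\rangle\in\H$ gives ${\rm Aff}(\D_+(\H))=\A(\H)$ by \cref{intWN}-B.

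For part (C), observe that $\rho_1$ and $\rho_+$ both lie on the ray from $\rho_0$ through $\rho$, with $\rho_+\in(\rho_0,\rho_1]$ because $\D_+(\H)\subseteq\D(\H)$. I would split into two cases. If $\rho_1\notin\D_+(\H)$, then $\rho_1\in\D(\H)\setminus\D_+(\H)$ has bounded nodal set (\cref{SecondTheorem}, \cref{Theorem5}), so \cref{rho+} applied to $(\rho_0,\rho_1)$ produces the unique transition state on $[\rho_0,\rho_1]$ with the stated $t_0,k_0$; here $W\rho_1$ is negative somewhere so $k_0<0$ and the $\min\{0,\cdot\}$ in (C) agrees with \cref{rho+}, while \cref{rho+}(A,B) identify this state as the unique point of $\partial^i\D_+(\H)$ on the segment, hence equal to $\rho_+$. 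If instead $\rho_1\in\D_+(\H)$, convexity gives $[\rho_0,\rho_1]\subset\D_+(\H)$ and $\rho_1\in\D_+(\H)\cap\partial^i\D(\H)\subseteq\partial^i\D_+(\H)$ by \cref{boundary}-B, so $\rho_+=\rho_1$; and $W\rho_1\ge 0$ makes $k_0=0$, $t_0=1$, $F(\rho_1)=\rho_1$. Both cases give $\rho_+=F(\rho_1)$, and the norm identity follows from $\rho_+-\rho_0=t_0(\rho_1)(\rho_1-\rho_0)$. For injectivity I would recover $\rho_1$ from $\rho_+$: the two cases are distinguished by whether $\rho_+\in{\rm iint}\,\D(\H)$ (case 1, where $\rho_+\in(\rho,\rho_1)\subset{\rm iint}\,\D(\H)$ by \cref{Simon}-B) or $\rho_+\in\partial^i\D(\H)$ (case 2), so no cross-case collision occurs; within case 1, $\rho_1$ is the unique boundary point of $\D(\H)$ on the ray from $\rho_0$ through $\rho_+$ (\cref{Simon}-D with $x=\rho_+$), and within case 2 trivially $\rho_1=\rho_+$. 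Thus $F(\rho_1)=F(\rho_1')$ forces $\rho_1=\rho_1'$.

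The hard part will be part (A): pinning down the exit point's kernel as ${\rm span}\{\rho^{-1}|0\rangle\}$, which simultaneously delivers ${\rm Rank}\,\rho_1=N$ and $|0\rangle\notin{\rm Ran}\,\rho_1$. Once this Rayleigh-quotient analysis is in place, parts (B) and (C) are essentially bookkeeping built on \cref{Simon}-D and \cref{rho+}, with the $\min\{0,\cdot\}$ in the definition of $k_0$ serving precisely to absorb the degenerate case $\rho_1\in\D_+(\H)$.
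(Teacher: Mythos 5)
Your proof is correct and, for parts (B) and (C), follows essentially the same route as the paper: \cref{Simon}-D supplies existence and uniqueness of the two boundary points, and the case split on whether $\rho_1\in\D_+(\H)$, resolved via \cref{rho+} and \cref{boundary}, yields $\rho_+=F(\rho_1)$ and the injectivity of $F$ exactly as in the paper. The genuine divergence is in part (A), where the paper identifies the exit point by pure rank bookkeeping: from $\rho=(1-t)\rho_0+t\rho_1$ one gets ${\rm Ran}\,\rho={\rm span}\{|0\rangle,{\rm Ran}\,\rho_1\}=\H$, while $\rho_1\in\partial^i\D(\H)$ forces ${\rm Rank}\,\rho_1\le N$ (\cref{intQS}-A), so ${\rm Rank}\,\rho_1=N$ and $|0\rangle\notin{\rm Ran}\,\rho_1$ follow by counting dimensions. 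Your Rayleigh-quotient computation reaches the same conclusion but buys strictly more: it exhibits the exit point in closed form, $\rho_1=(1-\tau_{\max})\rho_0+\tau_{\max}\rho$ with $\tau_{\max}=M/(M-1)$ and $M=\langle 0|\rho^{-1}|0\rangle$, and pins down $\ker\rho_1={\rm span}\{\rho^{-1}|0\rangle\}$ — a constructive inverse to the parametrisation of ${\rm iint}\,\D_+(\H)$ by $Y(\H)$ that the paper never writes down. The only point to tighten is the claim $M>1$: the stated reason ($\langle 0|\rho|0\rangle<1$) needs the additional observation that $M\ge 1/\langle 0|\rho|0\rangle$, which you get for free by taking $v=|0\rangle$ in the quotient you are maximising; with that line added, part (A) is complete.
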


\begin{figure}[t]
  \centering
  \includegraphics[width=1\textwidth]{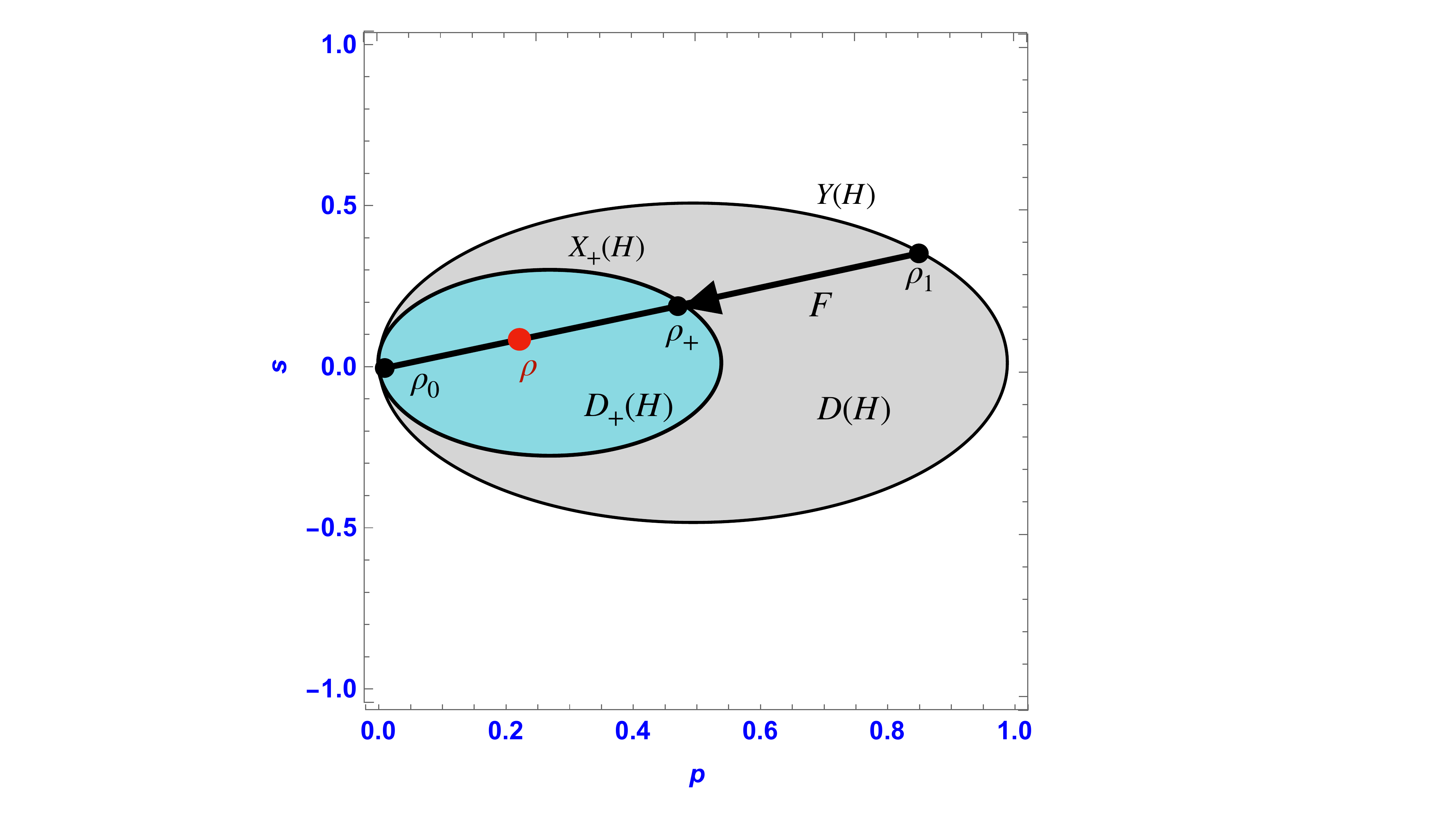}
  \caption{\small Illustration of \cref{X_+} for the case $\H=\H^1={\rm span}\{|0\>,|1\>\}$.  Each point $(p,s)$ inside the larger ellipse  represents a density matrix  
$
\rho{(p,s)}= (1-p) \,\rho_0+ p \,|1\>\<1| +s [|0\>\<1| + |1\>\<0| ],
$ (cf.~\cref{DM} below).
The set of density matrices $\D(\H)$ (grey region) and of WPS $\D_+(\H)$ (blue region) are represented for the case of $s$ real. The origin corresponds to the state $\rho_0=|0\>\<0|$. In this case, $Y(\H)=\partial^i\D(\H) \backslash \{\rho_0\}$ and $X_+(\H)\coloneqq{\rm Ran} \, F=\partial^i\D_+(\H) \backslash \{\rho_0\}$.}
  \label{fig:Y}
\end{figure}

\begin{proof}
{\bf(A)} and {\bf(B)}: The first part of the proof follows from Lemma \ref{Simon}-D. Since $\D(\H)$ and $\D_+(\H)$ are both compact and $\A(\H)={\rm Aff}(\D(\H))= {\rm Aff}(\D_+(\H))$ (cf. Theorem \ref{intWN}-A), we have from Lemma \ref{Simon}-D that for every $\rho \in {\rm iint}\, \D_+(\H) \subset {\rm iint}\, \D(\H)$ statement (B) holds and statement (A) holds for some $\rho_1 \in \partial^i \D(\H)$.

We then prove that $\rho_1$ actually belongs to the subset $Y(\H) \subset \partial^i \D(\H)$. Since $\rho \in (\rho_0,\rho_1)$ and $\rho_1 \in \partial^i \D(\H)$, we must have Rank $\rho_1=N$ (note that $\rho \in {\rm iint} \,\D_+(\H) \Longrightarrow$ Rank $\rho=N+1$). Moreover 
$$
{\rm Ran~} \rho={\rm span}\{|0\rangle,{\rm Ran~} \rho_1 \}=\H  
$$
and so $|0\rangle \notin $ Ran $\rho_1$. We conclude that $\rho_1 \in Y(\H)$. 

\smallskip

{\bf(C)} From (A) and (B) we have $\rho \in (\rho_0,\rho_1)$ and $\rho \in (\rho_0,\rho_+)$. Hence, $\rho_+\in (\rho_0,\rho_1]$.

If $\rho_1\in \D_+(\H)$ (and since $\rho_1\in Y(\H) \subset \partial^i\D(\H)$) then $\rho_1\in \partial^i \D_+(\H)$ (cf.~Corollary \ref{boundary}-B). It follows from (B) above that $\rho_+=\rho_1$. This corresponds to the case $k_0(\rho_1)=0$ in Eqs.~(\ref{F1},\ref{F2}).

If $\rho_1\notin \D_+(\H)$, we know from Lemma \ref{rho+} how to calculate $\rho_+$ from $\rho_1$. Since $\rho_+\in \partial^i\D_+(\H)$ is of maximal rank, we also have $\rho_+\in \D_{0+}(\H)$ (cf.~Corollary \ref{boundary}-A), and so $\rho_+$ is the unique state in $(\rho_0,\rho_1)\cap \D_{0+}(\H)$ (cf.~Lemma \ref{rho+}). This state is explicitly given by Eq.~(\ref{eqrho+}). Hence, we have $\rho_+=F(\rho_1)$ for $F$ given by Eqs.~(\ref{F1},\ref{F2}).

Moreover, if $F$ is not injective then $\rho_+ \in [\rho_0, \rho_1]$ and $\rho_+ \in [\rho_0, \rho_1']$ for two different states $\rho_1,\rho_1' \in \partial^i \D(\H)$, which is not possible by Lemma \ref{Simon}-D.

Finally, we show that $F$ is of the form $F:Y(\H) \to \partial^i\D_+(\H)$, i.e.~that Ran $F \subseteq \partial^i\D_+(\H)$. We have two cases: 1) if $\rho_1\in Y(\H) \cap \D_+(\H)$ then $\rho_1 \in \partial^i\D_+(\H)$ and $F(\rho_1)=\rho_1 \in \partial^i\D_+(\H)$. 2) If $\rho_1\in Y(\H) \setminus \D_+(\H)$ then it follows from Eq.~(\ref{eqrho+}) that $F(\rho_1) \in \D_{0+}(\H) \subset \partial^i\D_+(\H)$.    
\end{proof}

\smallskip

We can now construct the WPS as {\it  convex combinations} of $\rho_0=|0\rangle\langle 0|$ and some specific states $\rho_1\in \partial^i\D_+(\H)$ (Corollaries \ref{C18} and \ref{C18-B}), and the non-positive Wigner states as {\it affine combinations} of the same states (\cref{C19}). 

\begin{corollary}\label{C18}
Let $X_+(\H)= {\rm Ran}\, F$, where $F$ was given by Eqs.~(\ref{F1},\ref{F2}). Then:
\begin{equation}\label{C18-eq}
{\rm iint}\, \D_+(\H)=\bigcup_{\rho_1\in Y(\H)} (\rho_0,F(\rho_1))=\bigcup_{\rho_+\in X_+(\H)} (\rho_0,\rho_+)
\end{equation}
These descriptions of ${\rm iint}\, \D_+(\H)$ are optimal in the sense that: 
\begin{enumerate}
	
\item[(i)] every interval is nonempty and 
\smallskip

\item[(ii)]  each element of ${\rm iint}\,\D_+(\H)$ belongs to one, and only one, of the intervals.   
\end{enumerate}
Moreover, Eq.~(\ref{C18-eq}) implies:
\begin{equation}\label{C18-eq2}
\D_+(\H)=\ol{\bigcup_{\rho_+\in X_+(\H)} (\rho_0,\rho_+)}
\end{equation}
\end{corollary}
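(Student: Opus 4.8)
The plan is to read off the first equality in~\eqref{C18-eq} directly from \cref{X_+}, obtain the second equality as a tautology, verify the two optimality claims from the uniqueness and injectivity already established, and deduce~\eqref{C18-eq2} from a standard closure fact for convex sets.

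For the inclusion ${\rm iint}\,\D_+(\H)\subseteq\bigcup_{\rho_1\in Y(\H)}(\rho_0,F(\rho_1))$, I would take $\rho\in{\rm iint}\,\D_+(\H)$ and invoke \cref{X_+}: part~(B) yields a unique $\rho_+\in\partial^i\D_+(\H)$ with $\rho\in(\rho_0,\rho_+)$, and part~(C) identifies $\rho_+=F(\rho_1)$ for the unique $\rho_1\in Y(\H)$ of part~(A), so $\rho\in(\rho_0,F(\rho_1))$. The reverse inclusion is where genuine work is needed: given $\rho\in(\rho_0,F(\rho_1))$ I must show $\rho\in{\rm iint}\,\D_+(\H)$, which by \cref{intWN}-B and \cref{ipos} reduces to checking that $\rho$ has maximal rank and empty nodal set. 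Writing $\rho=(1-s)\rho_0+sF(\rho_1)=(1-st_0)\rho_0+st_0\rho_1$ with $st_0\in(0,1)$, maximal rank follows because $|0\rangle\in{\rm Ran}\,\rho_0$ while $\rho_1\in Y(\H)$ has rank $N$ with $|0\rangle\notin{\rm Ran}\,\rho_1$, so the two ranges span $\H$. For the nodal set I would split into the two cases of \cref{X_+}-C: if $\rho_1\in\D_+(\H)$ then $F(\rho_1)=\rho_1$ and $W\rho=(1-st_0)W\rho_0+st_0\,W\rho_1>0$ everywhere, since $W\rho_0>0$ and $W\rho_1\ge0$; if $\rho_1\notin\D_+(\H)$ then $F(\rho_1)$ is the unique crossing state and \cref{rho+}-A gives $[\rho_0,F(\rho_1))\subset\D_+(\H)\setminus\D_{0+}(\H)$, so again $\N(\rho)=\emptyset$. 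The second equality in~\eqref{C18-eq} is then immediate from $X_+(\H)={\rm Ran}\,F$: as $\rho_1$ runs over $Y(\H)$ the endpoint $F(\rho_1)$ runs over $X_+(\H)$, so the two unions coincide.

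The optimality statements follow cheaply from the machinery in place. Nonemptiness~(i) holds because $t_0(\rho_1)>0$ (as $k_0\le0$) and $\rho_1\neq\rho_0$ (their ranges differ at $|0\rangle$), whence $F(\rho_1)\neq\rho_0$ and $(\rho_0,F(\rho_1))\neq\emptyset$. For~(ii), membership in at least one segment is exactly the inclusion ${\rm iint}\,\D_+(\H)\subseteq\bigcup$ proved above, while if $\rho$ lay in two segments $(\rho_0,\rho_+)$ and $(\rho_0,\rho_+')$ with $\rho_+,\rho_+'\in X_+(\H)\subseteq\partial^i\D_+(\H)$, the uniqueness in \cref{X_+}-B forces $\rho_+=\rho_+'$. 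Finally, for~\eqref{C18-eq2} I would use that~\eqref{C18-eq} identifies the union with ${\rm iint}\,\D_+(\H)$, which is nonempty by \cref{intWN}-B; since $\D_+(\H)$ is convex with nonempty intrinsic interior and compact (\cref{Compact_finite_dim}), the standard fact $\overline{{\rm iint}\,C}=\overline{C}$ for convex $C$ gives $\overline{{\rm iint}\,\D_+(\H)}=\D_+(\H)$. The only real obstacle is the reverse inclusion of the first equality, where both the rank and the nodal-set conditions of \cref{ipos} must be verified; everything else is bookkeeping layered on top of \cref{X_+}.
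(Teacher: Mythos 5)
Your proposal is correct and follows essentially the same route as the paper: the forward inclusion via \cref{X_+}, the reverse inclusion by verifying the maximal-rank and empty-nodal-set hypotheses of \cref{ipos} on $(\rho_0,F(\rho_1))\subseteq(\rho_0,\rho_1)$, optimality from the uniqueness and injectivity in \cref{X_+}, and Eq.~(\ref{C18-eq2}) from closedness of $\D_+(\H)$. Your case split for the nodal set and your explicit range argument for full rank are just more detailed renderings of steps the paper states more tersely.
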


\begin{proof}
It is obvious from \cref{X_+} that 
$$
{\rm iint} \,\D_+(\H) \subseteq \bigcup_{\rho_1\in Y(\H)} (\rho_0,F(\rho_1)).
$$
Let us prove the converse inclusion. We also know from \cref{X_+} that for an arbitrary $\rho_1 \in Y(\H)$ we have $F(\rho_1) \in \D_+(\H)$. Moreover, $F(\rho_1) \not=\rho_0$ (because $F(\rho_1)=\rho_1\not=\rho_0$ if $\rho_1\in Y(\H)\cap \D_+(\H)$, and $F(\rho_1) \in \D_{0+}(\H)$ if $\rho_1 \in Y(\H)\setminus\D_+(\H)$). 

Hence, $F(\rho_1) \in \D_+(\H) \setminus \{\rho_0\}$, and thus the interval $(\rho_0,F(\rho_1))$ is always nonempty, which proves statement (i) above. 

Moreover, it follows from   
$\N(\rho_0)=\emptyset $ and  $\F(\rho_1) \in \D_+(\H)$ that every $\rho \in (\rho_0,F(\rho_1))$ is a WPS with empty nodal set.

Since $(\rho_0,F(\rho_1)) \subseteq (\rho_0,\rho_1)$, whose elements are all of full rank, we conclude from \cref{ipos} that every $\rho \in (\rho_0,F(\rho_1))$ belongs to ${\rm iint} \,\D_+(\H)$, which proves the converse inclusion. Hence, Eq.~(\ref{C18-eq}) holds.

Again, \cref{X_+}-B implies that every $\rho \in {\rm iint} \,\D_+(\H)$ belongs to just one of the intervals $(\rho_0,\rho_+)$, $\rho_+\in X_+(\H)$. Since $F$ is injective, $\rho_+=F(\rho_1)$ also for just one $\rho_1\in Y(\H)$. This proves (ii).

Finally, Eq.~(\ref{C18-eq2}) follows from the fact that $\D_+(\H)$ is closed. 
\end{proof}

Quite interestingly, the same generators can be used to construct the set of Wigner non-positive states. This is related to the fact that Aff$(\D_+(\H))=$ Aff$(\D(\H))$. 

\begin{corollary}\label{C19}
Let $\rho \in \D(\H)\backslash \D_+(\H)$. Then there exists $\rho_+\in \partial^i \D_+(\H) $ such that:
$$
\rho= (1+s) \rho_+-s\rho_0 \quad \mbox{for some} \quad s> 0.
$$
If $\rho \in {\rm iint}\D(\H)\setminus \D_+(\H)$ then $\rho_+\in X_+(\H)$.
\end{corollary}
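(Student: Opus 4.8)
The plan is to recognise the claimed affine decomposition as a mere rearrangement of the convex combination produced by \cref{rho+}, and then, in the interior case, to identify the resulting boundary point with a point of $X_+(\H)$.

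For the first statement I would argue as follows. Since we are in the case $|0\rangle\in\H$, the state $\rho_0=|0\rangle\langle0|$ lies in $\D_+(\H)$ and has $\N(\rho_0)=\emptyset$, its Wigner function being a strictly positive Gaussian. Thus $\rho_0$ and $\rho$ satisfy the hypotheses of \cref{rho+} with $\rho$ playing the role of $\rho_1$ there, so there is a unique $\rho_+\in(\rho_0,\rho)\cap\D_{0+}(\H)$, given explicitly by $\rho_+=(1-t_0)\rho_0+t_0\rho$ with $t_0\in(0,1)$. Because $\rho_+$ has nonempty nodal set, \cref{boundary} places it in $\partial_{\A(\H)}\D_+(\H)$, which coincides with $\partial^i\D_+(\H)$ since $\A(\H)=\mathrm{Aff}(\D_+(\H))$ when $|0\rangle\in\H$ (\cref{intWN}-B). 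Solving the relation for $\rho$ gives $\rho=(1+s)\rho_+-s\rho_0$ with $s=(1-t_0)/t_0>0$, which is the asserted decomposition.

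For the second statement, suppose in addition that $\rho\in\mathrm{iint}\,\D(\H)$. Then $\rho$ has maximal rank $N+1=\dim\H$ by \cref{intQS}-A, so in particular $\rho_0\neq\rho$. Applying \cref{Simon}-D to the compact convex set $\D(\H)$ with interior point $\rho$ and the point $\rho_0$, I obtain a unique $\rho_1\in\partial^i\D(\H)$ with $\rho\in(\rho_0,\rho_1)$. I would then verify $\rho_1\in Y(\H)$: writing $\rho$ as a convex combination of $\rho_0$ and $\rho_1$ with strictly positive weights yields $\H=\mathrm{Ran}\,\rho=\mathrm{span}\{|0\rangle\}+\mathrm{Ran}\,\rho_1$, and since $\rho_1\in\partial^i\D(\H)$ is not of maximal rank (\cref{intQS}-A) this forces $\mathrm{Rank}\,\rho_1=N$ and $|0\rangle\notin\mathrm{Ran}\,\rho_1$. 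Moreover $\rho_1\notin\D_+(\H)$, for otherwise convexity of $\D_+(\H)$ would put the whole segment $[\rho_0,\rho_1]$, hence $\rho$, in $\D_+(\H)$, contradicting $\rho\notin\D_+(\H)$.

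Finally I would identify $\rho_+$ with $F(\rho_1)$. By \cref{rho+} applied with $\rho_1$ in its native role, $F(\rho_1)$ is the unique element of $(\rho_0,\rho_1)\cap\D_{0+}(\H)$, and the subsegments $[\rho_0,F(\rho_1))$ and $(F(\rho_1),\rho_1]$ lie respectively inside and outside $\D_+(\H)$; since $\rho\notin\D_+(\H)$ this gives $\rho\in(F(\rho_1),\rho_1)$, hence $F(\rho_1)\in(\rho_0,\rho)$. Both $F(\rho_1)$ and the point $\rho_+$ constructed in the first part are the unique $\D_{0+}(\H)$-point on the collinear segment $(\rho_0,\rho)$, so $\rho_+=F(\rho_1)\in\mathrm{Ran}\,F=X_+(\H)$. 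I expect the main obstacle to be the rank-and-range bookkeeping certifying $\rho_1\in Y(\H)$, together with the collinearity and uniqueness argument needed to match $\rho_+$ with $F(\rho_1)$; everything else is a direct rearrangement of \cref{rho+}.
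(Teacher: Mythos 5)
Your proof is correct and follows essentially the same route as the paper: the first part is the same rearrangement of the convex combination from \cref{rho+} applied to the pair $(\rho_0,\rho)$, and the second part explicitly carries out the steps that the paper invokes by reference to the proof of \cref{X_+}-A (obtaining $\rho_1\in Y(\H)$ via \cref{Simon}-D and the rank/range argument) before matching $\rho_+$ with $F(\rho_1)$ by the uniqueness in \cref{rho+}. The only cosmetic difference is that you locate the unique $\D_{0+}(\H)$-point on $(\rho_0,\rho)$, whereas the paper identifies the unique point of $(\rho_0,\rho_1)\cap\partial^i\D_+(\H)$; both rest on the same uniqueness statement.
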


\begin{proof}
From Lemma \ref{rho+} we know that for every $\rho\in \D(\H)\backslash \D_+(\H)$ there exists $\rho_+\in \D_{0+}(\H) \subset \partial^i\D_+(\H)$, and $0<t_0 <1$ such that:
\begin{equation}\label{rho++}
\rho_+=(1-t_0)\rho_0+t_0\rho \quad \Longleftrightarrow \quad \rho=(1+s)\rho_+-s\rho_0 \, , \quad s > 0
\end{equation}
where we defined $s=\frac{1}{t_0} -1$, which satisfies $s>0$.

We now consider the case $\rho \in {\rm iint}\D(\H)\setminus \D_+(\H)$. Following exactly the steps of the proof of \cref{X_+}-A, we conclude that:
$$
\exists^1 {\rho_1\in Y(\H)}: \, \rho \in (\rho_0,\rho_1).
$$
Consider the state $\rho_+$ given in Eq.~(\ref{rho++}). It follows from \cref{rho+} that $\rho_+$ is the unique state in $(\rho_0,\rho_1) \cap \D_{0+}(\H)$. Moreover, \cref{rho+}-A and -B also implies that $(\rho_0,\rho_1) \cap \partial^i \D_{+}(\H)=\{\rho_+\}$ (note that $\rho_+$ is of full rank). Since $F(\rho_1) \in (\rho_0,\rho_1) \cap \partial^i \D_{+}(\H)$ (cf. \cref{X_+}), we necessarily have $\rho_+=F(\rho_1) \in X_+(\H)$, concluding the proof.
\end{proof}

\medskip

Finally, let us make a few remarks about the sets of generators $Y(\H)$ and $X_+(\H)$. In the next subsection we will construct these sets explicitly for some simple examples.

Let us introduce some notation. For $v\in \H$, define: 
$$
\<v\>:= {\rm span}\,\{|v\>\} \quad  , \quad \<v\>^\perp:= \{u\in \H: \<u|v\>=0\}
$$ 
so that $\<v\> \oplus \<v\>^\perp=\H$.  

Since every $\rho_1 \in Y(\H) \subset \partial^i\D(\H)$ is of rank $N$ and dim $\H=N+1$ (cf.~Eq.~(\ref{Y})), there exists $v \in \H$ such that $\rho_1 \in \D(\<v\>^\perp)$. It follows that:
\begin{eqnarray}\label{Yold}
Y(\H) &= & \bigcup_{v\in \H:\, \<0|v\> \not=0} \{\rho_1 \in \D( \<v\>^\perp) : \, \rho_1 \mbox{ is of full rank}\} \nonumber \\
&=& \bigcup_{\<v\> \in {\mathbb P}(\H) \wedge \<0|v\> \not=0} {\rm iint} \,  \D(\<v\>^\perp) 
\end{eqnarray}
where ${\mathbb P}(\H)$ is the projective space of lines in $\H$. Note also that every $\rho\in Y(\H)$ belongs to one and only one, of the sets $\D(\<v\>^\perp)$. We conclude from Eq.~(\ref{C18-eq}) that:

\begin{corollary}\label{C18-B} Let $\H\subseteq \H^M$, $M\ge 1$ and $|0\> \in \H$. Then: 
$$
 \D_+(\H)= \overline{\bigcup_{\<v\> \in {\mathbb P}(\H) \wedge \<0|v\> \not=0} {\rm conv}\big(\left\{\rho_0 \right\} \cup F\big(\D(\<v\>^\perp)\big)\big)}.
$$
This formulation partitions $\D_+(\H)$ into affine cones ${\rm conv}\big(\left\{\rho_0 \right\} \cup F\big(\D(\<v\>^\perp)\big)\big)$, which are parametrized by the directions $\<v\> \in {\mathbb P}(\H)$ not orthogonal to $|0\>$.
\end{corollary}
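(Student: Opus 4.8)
The plan is to prove the displayed identity by a two-inclusion argument, with both sides sandwiched against $\D_+(\H)$ via the radial description of ${\rm iint}\,\D_+(\H)$ already obtained in \cref{C18}. Throughout I would abbreviate the cone attached to a direction $\<v\>\in{\mathbb P}(\H)$ with $\<0|v\>\neq0$ as $A_v\coloneqq{\rm conv}\big(\{\rho_0\}\cup F({\rm iint}\,\D(\<v\>^\perp))\big)$, reading the symbol $F\big(\D(\<v\>^\perp)\big)$ in the statement as $F({\rm iint}\,\D(\<v\>^\perp))$, which is the correct domain since Eq.~(\ref{Yold}) shows that $Y(\H)=\bigcup_{\<v\>}{\rm iint}\,\D(\<v\>^\perp)$ and $F$ is defined on $Y(\H)$. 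The two facts I would import are Eq.~(\ref{C18-eq}), i.e.\ ${\rm iint}\,\D_+(\H)=\bigcup_{\rho_1\in Y(\H)}(\rho_0,F(\rho_1))$, and Eq.~(\ref{C18-eq2}), i.e.\ $\D_+(\H)=\overline{{\rm iint}\,\D_+(\H)}$; both rest on $|0\>\in\H$, which also guarantees $\rho_0\in\D_+(\H)$ and a nonempty intrinsic interior (\cref{intWN}-B).

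For the inclusion $\overline{\bigcup_v A_v}\subseteq\D_+(\H)$ I would invoke only convexity and closedness of $\D_+(\H)$ (\cref{Compact_finite_dim}). Indeed $\rho_0\in\D_+(\H)$, and by \cref{X_+} one has $F(Y(\H))=X_+(\H)\subseteq\partial^i\D_+(\H)\subseteq\D_+(\H)$, so each $A_v$ is a convex hull of points of the convex set $\D_+(\H)$ and therefore lies in $\D_+(\H)$. Taking the union over $\<v\>$ and then the closure, and using that $\D_+(\H)$ is closed, gives the inclusion.

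For the reverse inclusion $\D_+(\H)\subseteq\overline{\bigcup_v A_v}$, by Eq.~(\ref{C18-eq2}) it suffices to check ${\rm iint}\,\D_+(\H)\subseteq\bigcup_v A_v$ and pass to closures. Given $\rho\in{\rm iint}\,\D_+(\H)$, Eq.~(\ref{C18-eq}) supplies $\rho_1\in Y(\H)$ with $\rho\in(\rho_0,F(\rho_1))$, and Eq.~(\ref{Yold}) places this $\rho_1$ in ${\rm iint}\,\D(\<v\>^\perp)$ for a (unique) direction $\<v\>$; since $\rho$ is then a convex combination of $\rho_0$ and $F(\rho_1)\in F({\rm iint}\,\D(\<v\>^\perp))$, it lies in $A_v$. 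Combining the two inclusions yields the asserted equality.

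The partition statement I would treat as bookkeeping rather than a new difficulty: \cref{C18}(ii) says each point of ${\rm iint}\,\D_+(\H)$ lies on exactly one open segment $(\rho_0,\rho_+)$, injectivity of $F$ (\cref{X_+}) identifies $\rho_+$ with a unique $\rho_1\in Y(\H)$, and Eq.~(\ref{Yold}) attaches to that $\rho_1$ a unique $\<v\>$, so the radial open segments foliating the interior split into disjoint families indexed by the directions $\<v\>$. The one point I would be careful to phrase honestly is that this disjointness is strict only for the open radial segments (equivalently for the boundary caps $F({\rm iint}\,\D(\<v\>^\perp))$); the \emph{solid} cones $A_v$ can overlap once $\dim\H\ge3$, because $F$ is radial and non-affine, so ${\rm conv}\big(F({\rm iint}\,\D(\<v\>^\perp))\big)$ may contain points beyond $F({\rm iint}\,\D(\<v\>^\perp))$ itself. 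This is the only place where the clean picture of \cref{fig:Y}, in which each cap degenerates to a single point, must be qualified; no step presents a genuine obstacle, since all the analytic content already lives in \cref{X_+} and \cref{C18}.
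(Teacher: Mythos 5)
Your proof of the displayed equality is correct and follows the route the paper itself takes (the paper deduces the corollary directly from Eqs.~(\ref{C18-eq}), (\ref{C18-eq2}) and (\ref{Yold}) with no further written argument): one inclusion from convexity and closedness of $\D_+(\H)$ together with ${\rm Ran}\,F\subseteq\partial^i\D_+(\H)$, the other from the radial description of ${\rm iint}\,\D_+(\H)$ grouped by the unique direction $\<v\>$ attached to each $\rho_1\in Y(\H)$. Your reading of $F\big(\D(\<v\>^\perp)\big)$ as $F\big({\rm iint}\,\D(\<v\>^\perp)\big)$ is a reasonable resolution of the paper's abuse of notation and is harmless under the outer closure.

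The one point to push back on is your closing caveat that the solid cones $A_v$ ``can overlap once $\dim\H\ge3$''. They cannot, beyond the common apex $\rho_0$. Write $K_v={\rm iint}\,\D(\<v\>^\perp)$, which is convex. A generic element of $A_v$ is $\lambda_0\rho_0+\sum_i\lambda_i F(\sigma_i)$ with $\sigma_i\in K_v$; since $F(\sigma_i)=(1-t_0(\sigma_i))\rho_0+t_0(\sigma_i)\sigma_i$ is radial, this element equals $(1-T)\rho_0+T\sigma$ with $T=\sum_i\lambda_i t_0(\sigma_i)$ and $\sigma=T^{-1}\sum_i\lambda_i t_0(\sigma_i)\sigma_i\in K_v$. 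Being a convex combination of WPS it lies in $\D_+(\H)$, so \cref{rho+}-B forces $T\le t_0(\sigma)$, i.e.\ the element lies on $[\rho_0,F(\sigma)]$. Hence $A_v=\bigcup_{\sigma\in K_v}[\rho_0,F(\sigma)]$ exactly. Since the open segments $(\rho_0,F(\sigma))$ are pairwise disjoint by statement (ii) of \cref{C18}, the endpoints $F(\sigma)\in\partial^i\D_+(\H)$ are distinct ($F$ is injective) and cannot lie on another open segment (which sits in ${\rm iint}\,\D_+(\H)$), and the sets $K_v$ are pairwise disjoint, two cones $A_v$, $A_w$ with $\<v\>\neq\<w\>$ intersect exactly in $\{\rho_0\}$. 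So the partition claim needs no qualification for $\dim\H\ge3$: the only sharing is the apex (and, if one insists on applying $F$ to all of $\D(\<v\>^\perp)$ as the paper's examples do, the images of the rank-deficient states common to several bases).
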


\begin{remark}
Concerning the set $X_+(\H)$, we also easily conclude that:
\begin{eqnarray*}
X_+(\H)= {\rm Ran}\, F & = & \partial^i\D_+(\H) \cap (Y(\H) \cup {\rm iint}\, \D(\H))\\
& =& (\D_+(\H)\cap Y(\H)) \cup (\D_{0+}(\H)\cap {\rm iint}\,\D(\H)).
\end{eqnarray*}	
\end{remark}

Note that $\E(\D_+(\H))\not\subset X_+(\H)$ (since $X_+(\H)$  does not contain extreme points of rank less than $N$). This is possible because formula (\ref{C18-eq}) generates only the interior of $\D_+(\H)$, and not the entire set $\D_+(\H)$. Conversely, we also have, in general, $X_+(\H) \not\subset \E(\D_+(\H))$ (in particular $X_+(\H)$ may contain elements from the boundary $\partial^i \D(\H)$ which may have an empty nodal set).

\subsubsection{Simple examples}

In this section we illustrate some of our results in the cases of the $2$-dimensional complex Hilbert spaces $\H=$ span$\{|0\>,|n\>\}$, for $n=1,\dots,4$; and in the case of the $3$-dimensional complex Hilbert space $\H^2=$ span$\{|0\>,|1\>,|2\>\}$. 

In the sequel, we will use the general formula for the Wigner function of $|m\rangle\langle n|$ with $m\le n$; $n,m \in \IN_0$:
\begin{equation}\label{Wignermn}
W_{|m\rangle\langle n|}(x,\xi)
=\frac{1}{\pi}(-1)^m
\sqrt{\frac{m!}{n!}}\;(\sqrt{2}\,z)^{\,n-m}\;
L_{m}^{(n-m)}\bigl(2r^2\bigr)\,e^{-r^2},
\end{equation}
where $L_{m}^{(n-m)}$, is the generalized Laguerre polynomial and $z=r e^{i\theta}=x+i\xi$ (we redefined $z$ as a complex variable to simplify the presentation). In particular, 
$W_{\rho_0}(x,\xi)=W_{\ket0\bra0}(x,\xi)=
\frac{1}{\pi}\,e^{-r^2}$.\\

\paragraph{\bf 2-dimensional Hilbert space.}

Let $\H=$ span$\{|0\>,|n\>\}$. A general density matrix $\rho \in \D(\H)$ is of the form:
\begin{equation}\label{DM}
\rho=\rho(p,s):=(1-p) \,|0\>\<0| +p \, |n\>\<n| + s |0\>\<n| + \ol{s}|n\>\<0|\, ,\quad (p,s) \in R
\end{equation}
where 
\begin{equation}\label{R0}
(p,s)\in R:=\{(p,s) \in \RE\times\CO: 0\le p\le 1 \, \wedge |s|^2\le p(1-p)\}. 
\end{equation}
In this case $Y(\H)$ (cf.~Eq.~(\ref{Y})) is the set of pure states in $\D(\H)$, excluding $\rho_0=|0\>\<0|$,
$$
Y(\H)=\{\rho_1= |u\>\<u|: |u\> \in \H \wedge |u\> \not= |0\> \}.
$$
Let $\H \ni |u\>=a |0\>+b|n\>$ for some $a,b\in \CO$ such that $|a|^2+|b|^2=1$ and $b\not=0$. A generic element in $Y(\H)$ can be written:
$$
\rho_1= (1-p) \,|0\>\<0| +p \, |n\>\<n| + s |0\>\<n| + \ol{s}|n\>\<0|\, , \quad 0< p\le 1 \,, \quad |s|^2=p(1-p)
$$
where $p=|b|^2$ and $s=a\ol{b}$. Hence: 
$$
\rho_1=\rho(p,s) \mbox{ for some } (p,s) \in \partial R \backslash \{(0,0)\}
$$
From Eq.~(\ref{Wignermn}), we easily get (assuming that $s$ is real):
\begin{eqnarray*}
W_{\rho_1}(r,\theta)&=& (1-p)\,W_{\ket{0}\bra{0}}(r,\theta) + p\,W_{\ket{n}\bra{n}}(r,\theta) + s\Bigl[ W_{\ket{0}\bra{n}}(r,\theta) + W_{\ket{n}\bra{0}}(r,\theta)\Bigr]\\
&=& \frac{e^{-r^2}}{\pi} \left[ (1-p) + p\,(-1)^n L_n\Bigl(2r^2\Bigr) + 2s\, \frac{(\sqrt2r)^n\cos(n\theta)}{\sqrt{n!}} \right]
\end{eqnarray*}
Since $\N(\rho_1)\not=\emptyset$, we have from Eq.~(\ref{F2}):
\begin{equation*}
k_0(\rho_1)=\min\limits_{r\ge 0, \, 0\le \theta <2 \pi } \tfrac{W\rho_1(
r,\theta)}{W\rho_0(r,\theta)} <0.
\end{equation*}
Noting that the minimum value with respect to \(\theta\) is reached when
$$
\cos(n\theta) = -\,\operatorname{sgn}(s),
$$
and taking into account that $|s|^2=p(1-p)$, we obtain:
\begin{equation}
k_0(\rho_1)=\min_{r\ge0} \left[(1-p) + p\,(-1)^n L_n\Bigl(2r^2\Bigr) - 2\sqrt{p(1-p)} \frac{(\sqrt2r)^n}{\sqrt{n!}} \right].
\end{equation}
Once we have computed $k_0(\rho_1)$, we can determine the generators of the WPS (cf.~Eq.~(\ref{C18-eq})): 
\begin{equation}\label{rho_+}
\rho_+=F(\rho_1)= (1-t_0(\rho_1)) \rho_0 + t_0(\rho_1)\rho_1
\end{equation}
where $t_0(\rho_1) =\frac{1}{1-k_0(\rho_1)}$. These states form the sets $X_+(\H)$. A straightforward calculation shows that:
\begin{equation}\label{Eqrho_+}   
\rho_1=\rho(p,s) \Longrightarrow \rho_+=F(\rho_1)=\rho(t_0p,t_0s) \quad \mbox{where } t_0=t_0(\rho_1).
\end{equation}
The value of $t_0(\rho_1)$ can be obtained analytically for $n=1,2$. After a simple calculation we get: 
$$
n=1 \,\Longrightarrow \, k_0(\rho_1)= -p  \,\Longrightarrow \, t_0(\rho_1)= \frac{1}{1+p}
$$
$$
n=2 \,\Longrightarrow \, k_0(\rho_1)= -p - 2\sqrt{2p(1-p)}  \,\Longrightarrow \, t_0(\rho_1)= \frac{1}{1+p + 2\sqrt{2p(1-p)}}
$$
In the other cases the value of $t_0(\rho_1)$ can be obtained numerically.

Finally, it follows from \cref{C18} that the interior of $\D_+(\H)$ is  the union of all intervals of the form $(\rho_0,F(\rho_1))$, where $\rho_1\in Y(\H)$. From Eq.~(\ref{Eqrho_+}) we easily conclude that if $\rho_1=\rho(p,s)\in Y(\H)$ then:
$$
\rho\in (\rho_0,F(\rho_1)) \, \Longleftrightarrow \,  \rho=\rho(tp,ts) \, , \,  0<t<t_0(\rho_1) 
$$
and so 
$$
\D_+(\H)=\{\rho{(tp,ts)}\,|\, (p,s) \in \partial R \; \mbox{ and } \; 0\le t \le t_0(\rho{(p,s)}) \},  
$$ 
where $R$ was defined in Eq.~(\ref{R0}).

\begin{figure}[t]
  \centering
  \includegraphics[width=0.50\textwidth]{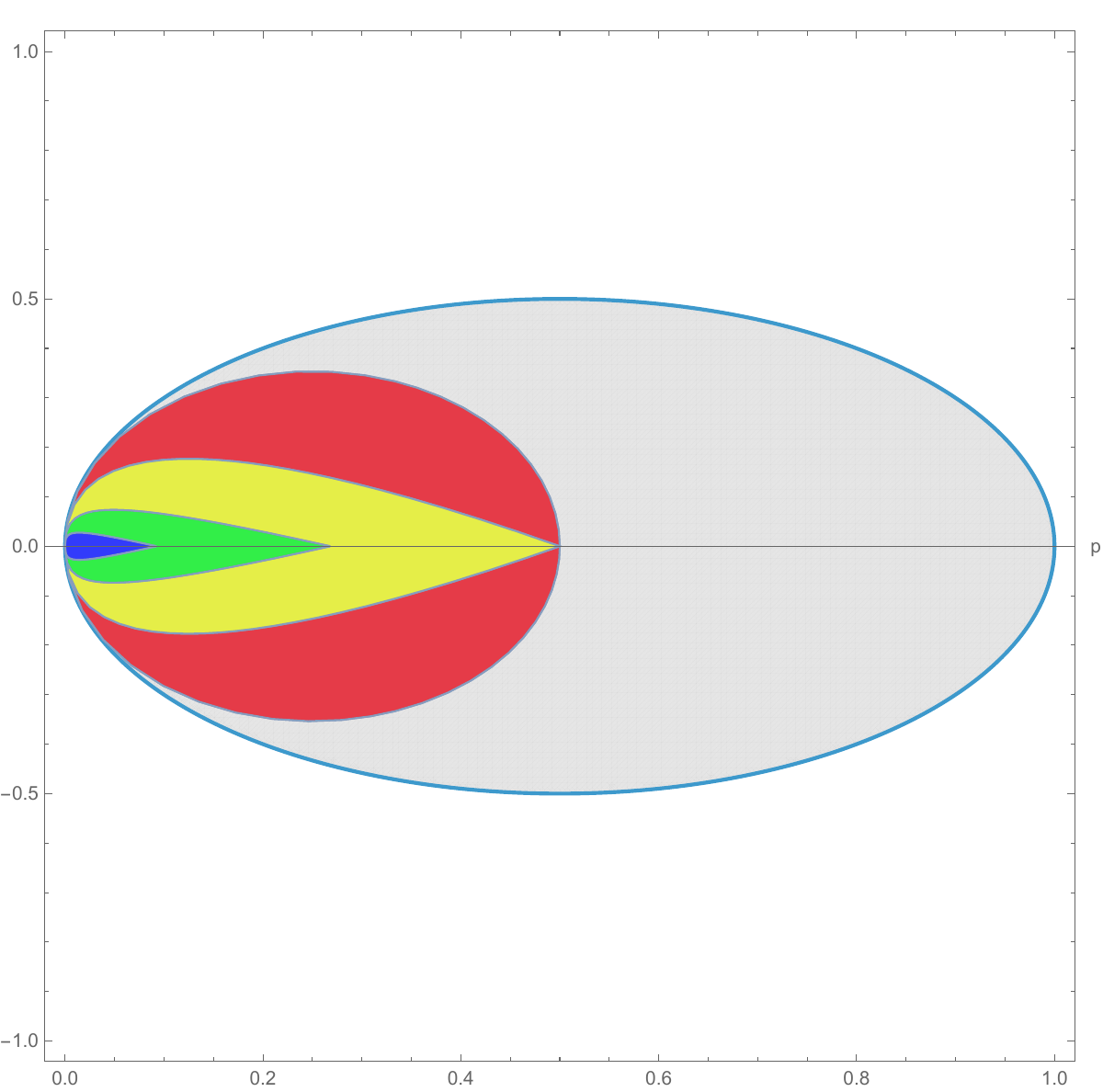}
  \caption{\small Set of density matrices $\D(\H)$ (grey area) and of WPS $\D_+(\H)$ for the case of real $s$. Each point $(p,s)$ inside the larger ellipse  represents a density matrix 
$
\rho{(p,s)}= (1-p) \,\rho_0+ p \,|n\>\<n| +s [|0\>\<n| + |n\>\<0| ].
$
The blue line (without the point $(0,0)$) represents the set $Y(\H)$. The origin is the state $\rho_0=|0\>\<0|$. The coloured domains, in decreasing order of their area size, represent the sets of WPS for the cases of $n=1,\dots,4$, respectively. Their boundaries (without the point $(0,0)$) represent the sets $X_+(\H)$ of the states $\rho_+$.}
  \label{fig:2d}
\end{figure}

\cref{fig:2d} displays the set of density matrices $\D(\H)$, $\H=$ span$\{|0\>,|n\>\}$, and the set of WPS $\D_+(\H)$ for $n=1,\dots,4$. Each point $(p,s)$ inside the larger ellipse represents a density matrix $\rho{(p,s)}$ of the form (\ref{DM}) with real $s$. 
The sets of WPS for $n=1,2,3,4$ are represented by the red, yellow, green and blue regions, respectively. 

The present results for the 2-dimensional Hilbert space are echoed in our companion paper~\cite{physicspaper}. There, we focus on the Hilbert space $\mathcal{H}=\mathrm{span}\lbrace\ket{0},\ket{1}\rbrace$, which we analyze within the framework of quantum optics. We show that a pure-loss channel, i.e. a quantum channel implemented by a beam splitter coupled to a vacuum environment, shares with the map $F$ the ability to reconstruct all states on the boundary of the Wigner-positive set starting from pure states. Consequently, we identify the set $X_+$ with the so-called beam-splitter states, which are especially relevant in the context of Wigner positivity.\\\

\paragraph{\bf 3-dimensional Hilbert space.} 

In the previous example the formulation of $\D_+(\H)$ as a union of affine cones, as presented in \cref{C18-B}, reduces trivially to a union of intervals. This is because, in this case, the sets $\D(\<v\>^\perp)$ are formed by a single pure state from $Y(\H)$, a consequence of dim $\H=2$. In higher dimensions the characterization of the states $\rho\in Y(\H)$ is a more difficult problem, and the partition of $\D_+(\H)$ into (affine) cones can be useful. In this section we illustrate these features for a $3$-dimensional Hilbert space. 

We consider the case $\H=\H^2=$ span$\{|0\>,|1\>,|2\>\}$. 
Let us introduce the notation $Y^N\coloneqq Y(\H^N)$.
It follows from Eq.~(\ref{Yold}) that 
\begin{equation}
Y^2 = \bigcup_{\<v\> \in {\mathbb P}(\H^2) \wedge \<0|v\> \not=0} {\rm iint} \,  \D(\<v\>^\perp) 
\end{equation}
where ${\mathbb P}(\H^2)$ is the projective space of rays in $\H^2$.
 Moreover, every $\rho\in Y^2$ belongs to one, and only one, of the sets $\D(\<v\>^\perp)$. 
We have from Corollary \ref{C18-B}: 
$$
 \D^2_+= \ol{\bigcup_{\<v\> \in {\mathbb P}(\H^2) \wedge \<0|v\> \not=0} {\rm conv}\big(\left\{\rho_0 \right\} \cup F\big(\D(\<v\>^\perp) \big)\big)}.
$$
Let us study the affine cones ${\rm conv}\big(\left\{\rho_0 \right\} \cup F\big(\D(\<v\>^\perp) \big)\big)$ for some vectors $v\in \H^2$. 

We start with the simplest case $v=|0\>$. Then: 
$$
\<v\>^\perp=\<0\>^\perp= \mbox{span}\{|1\>,|2\>\}
$$ 
and a general density matrix $\rho_1\in \D(\<0\>^\perp) $  is of the form:
\begin{equation}\label{rho_1-3dim}
\rho_1= (1-p) \,|1\>\<1| +p \, |2\>\<2| + s |1\>\<2| + \ol{s}|2\>\<1|\, , \quad 0\le p\le 1 \,, \quad |s|^2\le p(1-p).
\end{equation}
From Eq.~(\ref{Wignermn}), we easily find that for real $s$
\begin{equation}\label{W12}
W_{\rho_1}(r,\theta)
=\frac{e^{-r^2}}{\pi}\Bigl[
-(1-p)(1-2r^2)
\;+\;p\,(1-4r^2+2r^4)
\;-\;4s\,r\cos\theta\,(1-r^2)
\Bigr].
\end{equation}
We can then calculate $\rho_+=F(\rho_1)$ using Eqs.~(\ref{F1},\ref{F2}). Since $\rho_1$ is not a WPS, we have:
\begin{eqnarray*}
k_0(p,s)&=&\min_{r\ge0 \, , \, 0\le \theta <2\pi}\left[\tfrac{W_{\rho_1}(r,\theta)}{W_{\rho_0}(r,\theta)} \right]\\
&=&\min_{r\ge0}\Bigl[
-(1-p)(1-2r^2)
+p\,(1-4r^2+2r^4)
-4|s(1-r^2)|r
\Bigr],
\end{eqnarray*}
where we set $\cos\theta=\mathrm{sgn}(s(1-r^2))$, and used $W_{\rho_0}(r,\theta)
=\frac{1}{\pi}\,e^{-r^2}$. Computing $k_0$, and substituting in $t_0(p,s)= \frac{1}{1-k_0(p,s)}$, we can determine
$$
\rho_+(p,s) =F(\rho_1)= \big(1-t_0(p,s)\big)\rho_0 + t_0(p,s) \rho_1\,.
$$

Let us introduce the following notation. For $v\in \H: \<0|v\>\not=0$, let
$$
\D_{v}:={\rm conv}\big(\left\{\rho_0 \right\} \, \cup \,\D(\<v\>^\perp )\big)
$$
be the affine cone of density matrices in $\D(\H)$ whose base is perpendicular to $|v\>\<v|$. 

 \cref{fig:combined1} displays the affine cone of density matrices $\D_{|0\>}= {\rm conv}\big(\left\{\rho_0 \right\} \, \cup \,\D(\<0\>^\perp )\big)$, and the affine cone of WPS $\D_+^2 \,\cap \, \D_{|0\>} ={\rm conv}\big(\left\{\rho_0 \right\} \,\cup \, F\big(\D(\<0\>^\perp)\big)\big)$, for the case of real $s$. Each point $(p,q,s)$ in the larger domain represents a density matrix $\rho(p,q,s) \in \D_{|0\>}$, that can be written in the form:
$$
\rho(p,q,s):= (1-p-q) \,\rho_0+ q \,|1\>\<1| +p \, |2\>\<2| + s \left[|1\>\<2| + |2\>\<1|\right].
$$
Note that if $p+q=1$ then $\rho(p,q,s) \in \D(\<0\>^\perp) \subset Y^2$ is of the form given by Eq.~(\ref{rho_1-3dim}). These states are on the base of the cone. Moreover:
$$
\D_+^2 \cap \D_{|0\>} = \{\rho(tp,tq,ts): \rho_(p,q,s) \in \D(\<0\>^\perp) \wedge \, 0\le t \le t_0(p,s)  \}.
$$

\begin{figure}[t]
  \centering
  \includegraphics[width=0.45\textwidth]{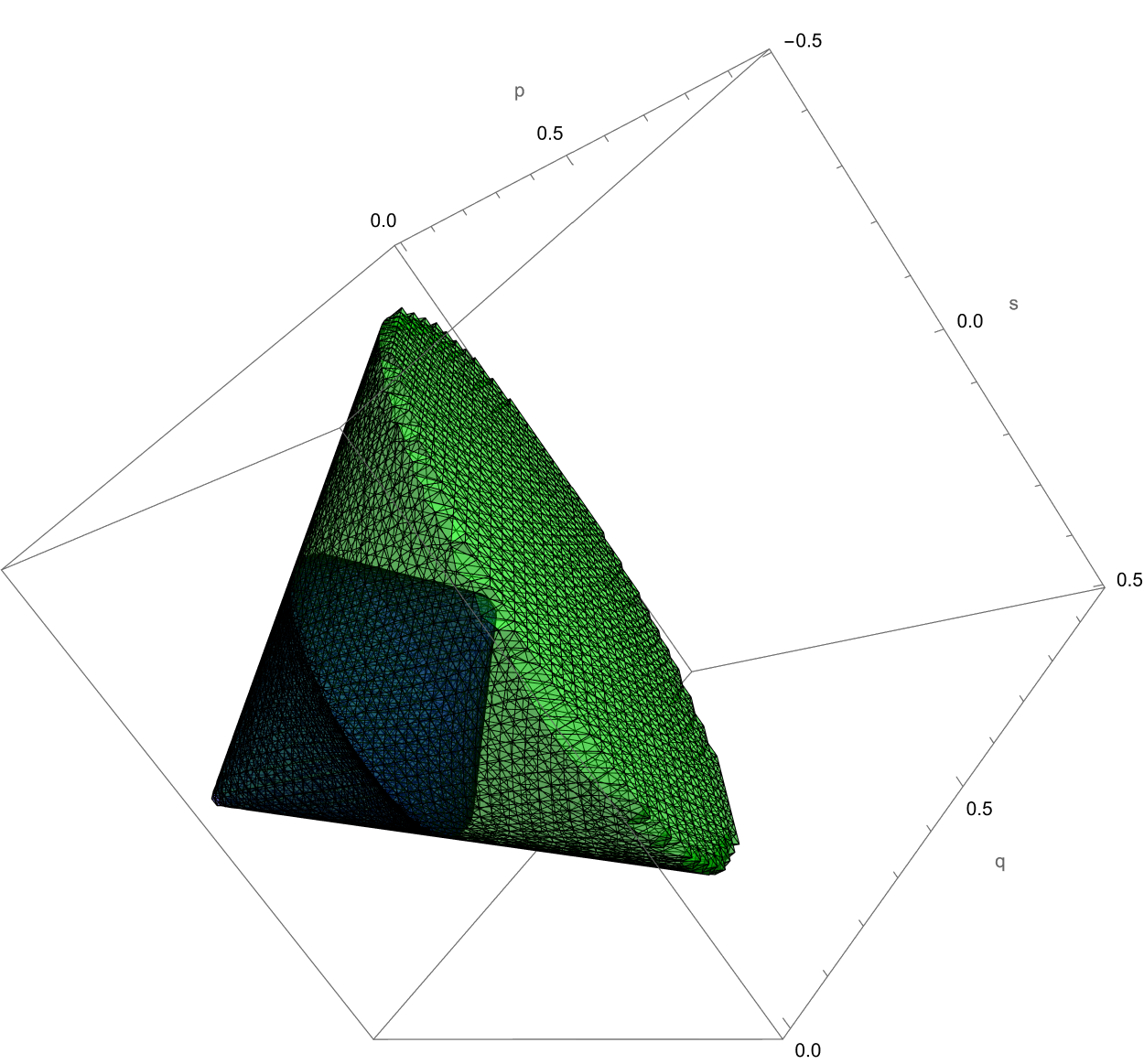}
  \hspace{1cm}
  \includegraphics[width=0.40\textwidth]{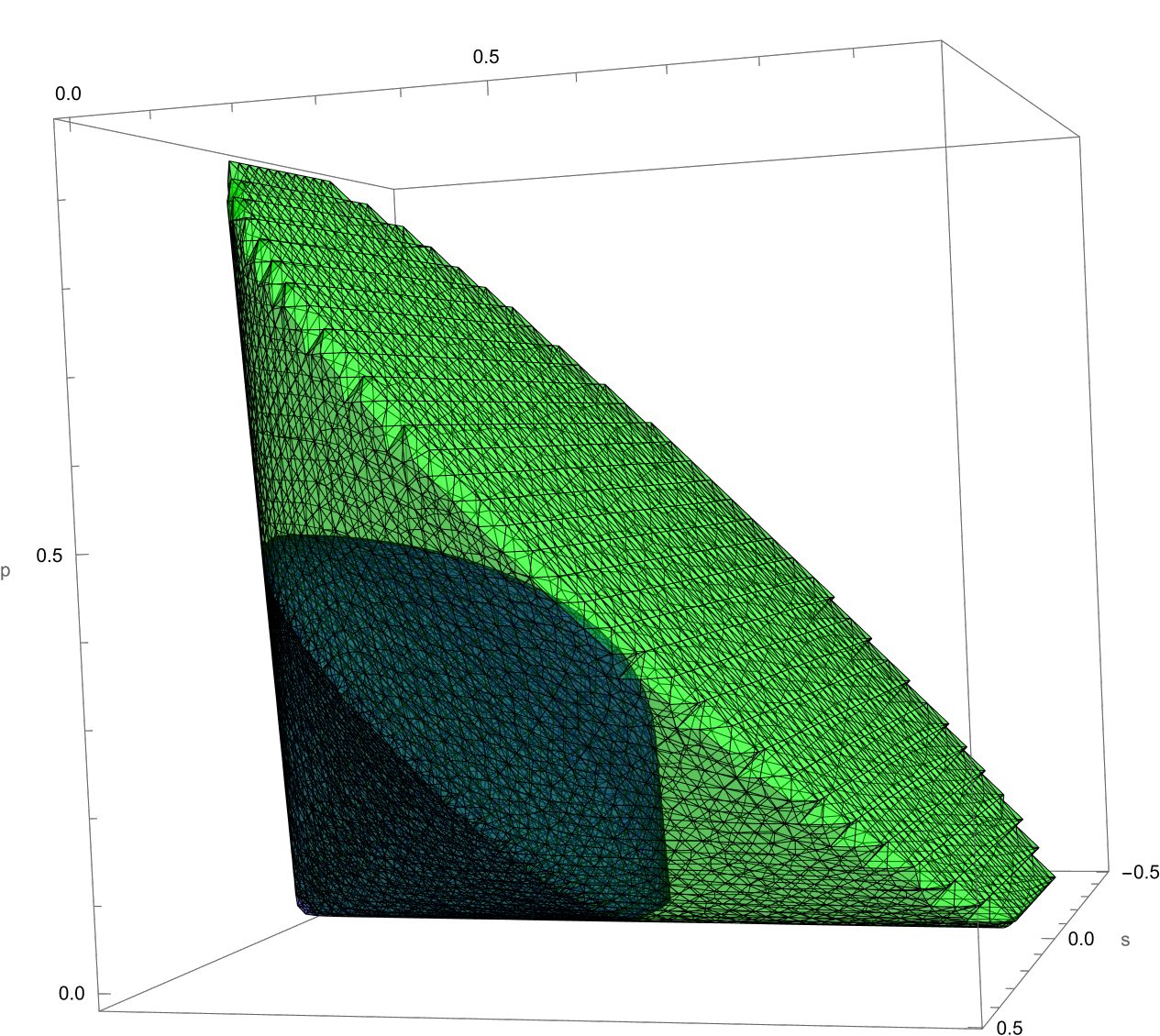}
  \caption{\small Two views of the (affine) cones of density matrices $\D_{|0\>}$ (green domain) and of WPS $\D_+^2 \cap \D_{|0\>}$ (blue domain) for the case of real $s$. Each point $(p,q,s)$ in the larger domain represents a density matrix 
$
\rho(p,q,s)= (1-p-q) \,\rho_0+ q \,|1\>\<1| +p \, |2\>\<2| + s \left(|1\>\<2| + |2\>\<1|\right).
$
The vertex of the cone is the state $\rho_0=|0\>\<0|$, and the base of the cone is the set $\D(\<0\>^\perp) \subset Y^2$ (restricted to real $s$). A cross section of this figure (for s=0) is displayed in \cref{fig:combined} (left and right) for a=0 and c=0, respectively.}
  \label{fig:combined1}
\end{figure}

Let us extend the previous discussion to the case of a general vector $v \in \H^2$ such that $\<0|v\> \not=0$:
$$
v=bd |0\> - ad |1\> - bc |2\> \, , \quad b,d \not=0.
$$
Then 
$$
\<v\>^\perp = \mbox{span} \{u,w\} 
$$
where 
$$
u=\tfrac{1}{\sqrt{a^2+b^2}}\left( a|0\>+b|1\>\right) \quad , \quad w=\tfrac{1}{\sqrt{c^2+d^2}}\left( c|0\>+d|2\>\right), 
$$
and $\rho_1 \in \D(\<v\>^\perp)\subset Y^2$ is of the general form:
$$
\rho_1= (1-p) \,|u\>\<u| +p \, |w\>\<w| + s |u\>\<w| + \ol{s}|w\>\<u|\, , 
$$
where, as usual, $0\le p\le 1 \,, \, |s|^2\le p(1-p)$.

To simplify the presentation, let us restrict to the case $s=0$. The key parameter $k_0(p)$ can then be written in the form:  
$$
k_0(p)={\rm min}\left\{0 \, , \, \tfrac{W_{\rho_1}(r,\theta)}{W_{\rho_0}(r,\theta)}\,\big| \, r\in \RE^+_0\, ,\, 0\le \theta < 2\pi \right\}
$$
where
\begin{equation}\label{R}
\tfrac{W_{\rho_1}(r,\theta)}{W_{\rho_0}(r,\theta)}=(1-p) \tfrac{W_{|u\>\<u|}(r,\theta)}{W_{\rho_0}(r,\theta)} + p \tfrac{W_{|w\>\<w|}(r,\theta)}{W_{\rho_0}(r,\theta)}
\end{equation}
and can be easily calculated after rewriting the last expression in terms of Laguerre polynomials. Let us define:
$$
W_{ij}= \frac{W_{|i\>\<j|}+W_{|j\>\<i|}}{2W_{\rho_0}} \, ,\quad i,j=0,1,2
$$
These terms can be computed using Eq.~(\ref{Wignermn}). We get, after introducing the notation $x=\cos\theta$:
\begin{eqnarray}\label{W_{ij}}
W_{01}(r,x)=	\sqrt{2} rx\,, \quad && \quad W_{02}(r,x)=\sqrt2 r^2(2x^2-1)\nonumber\\
W_{11}(r,x)=-(1-2r^2)\,, \quad && \quad W_{22}(r,x)=1-4r^2+2r^4
\end{eqnarray}
Since:
\begin{eqnarray*}
\frac{W_{|u\>\<u|}(r,x)}{W_{\rho_0}(r,x)}&=&\frac{1}{|a|^2+|b|^2}\left( |a|^2 + 2 \mbox{Re}(a\ol{b}) W_{01} + |b|^2 W_{11}\right)\\
\frac{W_{|w\>\<w|}(r,x)}{W_{\rho_0}(r,x)}&=&\frac{1}{|c|^2+|d|^2}\left( |c|^2 + 2 \mbox{Re}(c\ol{d}) W_{02} + |d|^2 W_{22}\right)
\end{eqnarray*}
after substituting Eq.~(\ref{W_{ij}}) into these expressions, we can calculate Eq.~(\ref{R}) explicitly, and then $k_0(p)$ numerically. Following exactly the same steps as before, we determine $t_0(p)$, $\rho_+$, and finally the set of WPS in $\D_v={\rm conv}\big(\left\{\rho_0 \right\} \, \cup \,\D(\<v\>^\perp )\big)$.

\begin{figure}[t]
  \centering
  \includegraphics[width=0.40\textwidth]{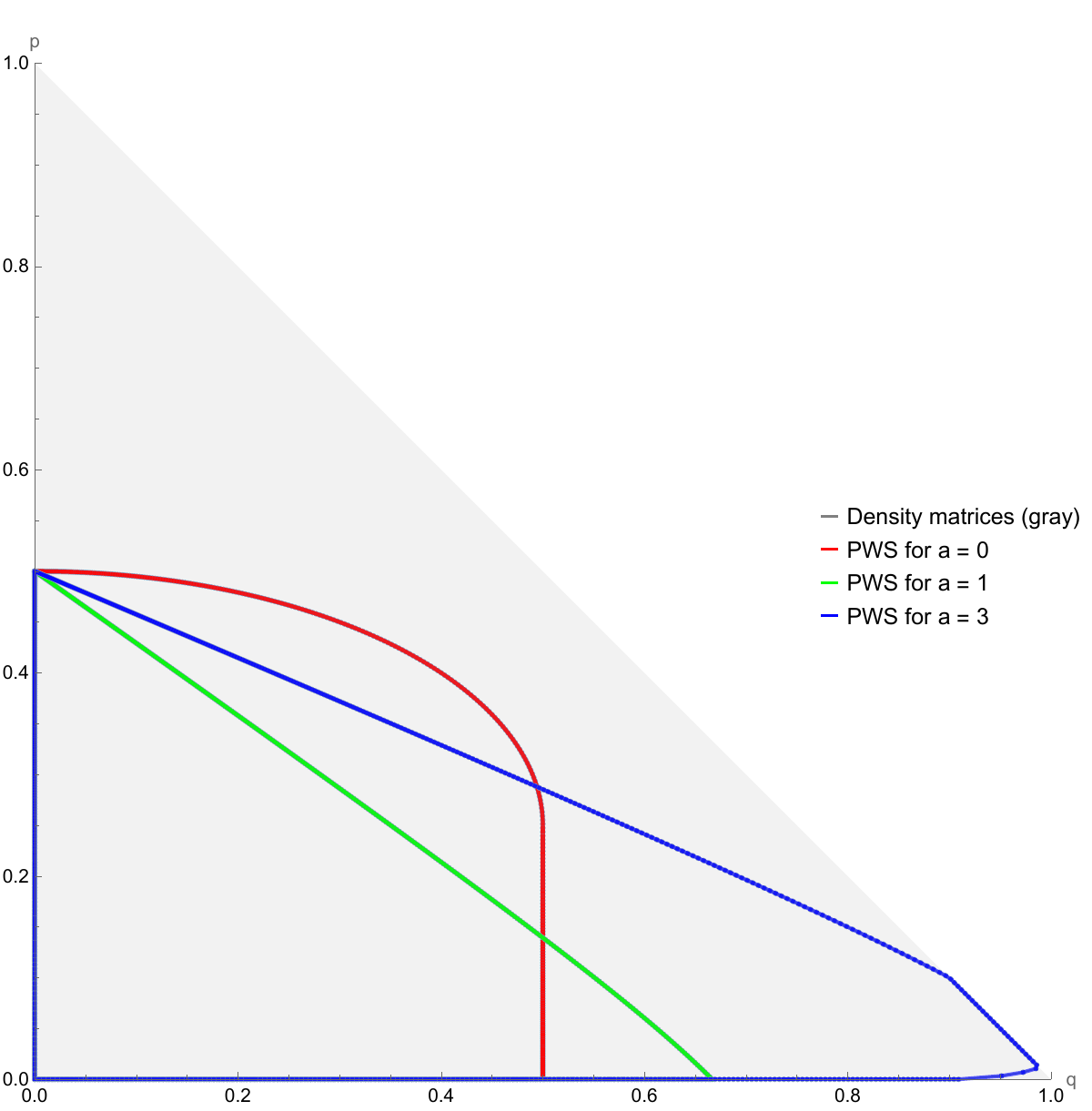}
  \hspace{1cm}
  \includegraphics[width=0.40\textwidth]{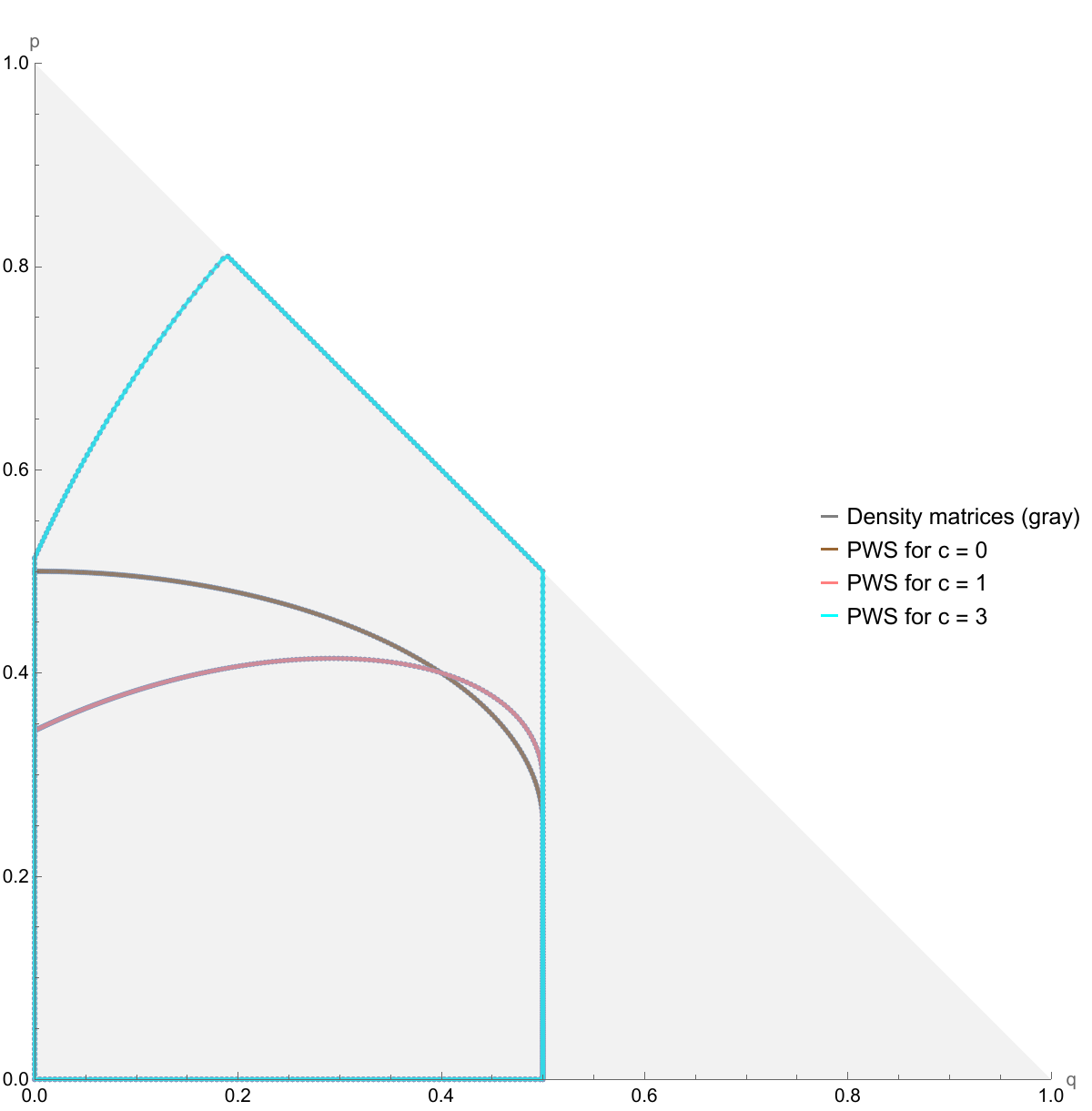}
  \caption{\small Affine cones of density matrices $\D_v$ (grey domain) and of WPS $\D_+^2 \cap \D_v$ for the case $s=0$. Each point $(p,q)$ in the triangular domain represents a density matrix 
$
\rho{(p,q)}= (1-p-q) \,\rho_0+ q \,|u\>\<u| +p \, |w\>\<w|
$
for some $u,w\in \<v\>^\perp$. 
The origin is the state $\rho_0=|0\>\<0|$. The hypotenuse is the set $\D(\<v\>^\perp)$ (restricted to $s=0$). The points inside the domains bounded by the coloured lines are the WPS. The figure on the left corresponds to the cases $v=|0\>-a|1\>, \,u=\tfrac{a|0\>+|1\>}{\sqrt{1+a^2}}$ and $w=|2\> $, for $a=0,1,3$. On the right, we have $v=|0\>-c|2\>,\, u=|1\>$ and $w=\tfrac{c|0\>+|2\>}{\sqrt{1+c^2}}$ for $c=0,1,3$.}
  \label{fig:combined}
\end{figure}

\cref{fig:combined} displays the affine cones of density matrices $\D_{v}$, and of WPS $\D_+^2\,\cap \, \D_{v}={\rm conv}\big(\left\{\rho_0 \right\} \,\cup \, F\big(\D(\<v\>^\perp)\big)\big)$, for the case $s=0$, and several different vectors $v$ (parametrized by the values of $(a,b,c,d)$). Each point $(p,q)$ in the triangular domain represents a density matrix $\rho(p,q) \in \D_{v}$:
$$
\rho(p,q)= (1-p-q) \,\rho_0+ q \,|u\>\<u| +p \, |w\>\<w| \, , \quad 0\le q,p \le 1.
$$
The points $(p,q)$ such that $p,q\ge 0$ and $p+q=1$ represent states $\rho(p,q) \in \D(\<v\>^\perp)$.
Moreover:
$$
\D_+^2 \cap \D_{v} = \{\rho(tp,tq):\, p,q\ge 0\, \wedge \,  p+q=1 \wedge \, 0\le t \le t_0(p)  \}.
$$
is the set of WPS in the cone $\D_v$. 

Here again, our results for the 3-dimensional Hilbert space are closely connected to those presented in our companion work~\cite{physicspaper}. There, we consider the family of states $\rho = (1-p-q)\ket{0}\bra{0} + q\ket{1}\bra{1} + p\ket{2}\bra{2}$ and focus on their Wigner-positive restriction (delimited by the red curve in Fig.~\ref{fig:combined}-Left, or equivalently by the brown curve in Fig.~\ref{fig:combined}-Right). We find that the curved boundary of that set can be obtained as the evolution of a specific state -- a beam-splitter state -- through a new quantum map, which we introduce as the \emph{Vertigo map}. As highlighted in that companion paper, the Vertigo map displays remarkable properties with respect to extreme Wigner-positive states.

\subsection{Geometry of Wigner-positive states in infinite dimension}
\label{sec:geom_infinite}

We now turn to the infinite-dimensional case.
Let $V$ be a Hausdorff, locally convex TVS and let $S\subset V$ be convex and compact. The Krein--Milman theorem \cite{Krein} is one of the central results of convex analysis. It states that
\begin{equation}\label{KM}
S=\overline{\mbox{\rm conv}}(\E(S))
\end{equation}
where, for an arbitrary set $A\subseteq V$, the closed convex hull of $A$, denoted $\overline{\mbox{\rm conv}}(A)$, is the smallest closed, convex set in $V$ that contains $A$ (cf.~Definition \ref{Aff+Conv}). It can be shown that if $V$ is finite-dimensional and $A$ is compact, then $\overline{\mbox{\rm conv}}(A)={\rm conv}(A)$. In infinite dimensions, $\overline{\mbox{\rm conv}}(A)$ coincides with the closure of the set of all {\it finite} convex combinations of the elements of $A$, i.e.
$$
\overline{\mbox{\rm conv}}(A)= \overline{{\rm conv}(A)}
$$   
The closure of a set $A \subset V$ of course depends on the topology in $V$. 

The topology is {\it metrisable} if and only if it can be induced by a metric. For a locally convex TVS this means that the topology is determined by a countable family of seminorms (cf.~\cite[Proposition 2.1, Chap. IV]{Conway}). 
All metrisable spaces $V$ (and in particular all metric spaces, and hence all norm spaces) are first countable which in turn are Fr\'echet--Urysohn spaces (sometimes just called Fr\'echet spaces) \cite[Theorem 1.6.14]{Engelking}. The latter are defined by the property that every subset $A\subset V$ satisfies $\overline{A}=$sc$(A)$ where sc$(A)$ denotes the sequential closure of $A$:
$$
{\rm sc}\,(A)\coloneqq\{x\in V:\, \mbox{there is a sequence } (a_n)_{n\in \IN} \subset A \mbox{ with } a_n \to x \}~.
$$
Therefore, if the topology is metrisable, Eq.~(\ref{KM}) can be rewritten in the form:
\begin{equation}\label{KM2}
S={\rm sc}\big({\mbox{\rm conv}}\big(\E(S)\big)\big)
\end{equation}

Note that the compactness of $S$ also depends on the topology. One of the limitations of the Krein--Milman theorem is that if $V$ is an infinite-dimensional Banach space then the unit ball is not compact in the norm topology.

\subsubsection{A Krein--Milman theorem for \texorpdfstring{$\D_+$}{}}

For $\H=L^2(\RE)$, the set  
of WPS $\D_+$ is not compact with respect to the trace norm (cf.~\cref{Compact_infinite_dim}), and so the Krein--Milman theorem does not apply directly. We will prove, nevertheless, that the {\it Krein--Milman relation} given by Eq.~(\ref{KM}) still holds for $\D_+$. Note that the conditions in the Krein--Milman theorem (namely, compactness) are sufficient but not necessary conditions. 

Our approach requires that $\B_1=\B_1(\H)$ be endowed with the weak* topology (denoted w* topology). Recall that the space of trace-class operators $\B_1$ is the topological dual of the set of compact operators $\K=\K(\H)$ with the standard operator norm, i.e.~$\B_1 =\K^*$ \cite[Theorem VI.26]{ReedSimon}. The w* topology in $\B_1$ is defined by the family of seminorms ${\mathcal F}=\{p_k:\, K \in \K\}$, where:
$$
p_K:\B_1 \to [0,\infty) \, , p_K(a) = |{\rm tr}(a K)| 
$$
This topology gives the following convergence of sequences:
$$
\B_1 \ni a_n \xrightarrow[n\to \infty]{{\mathrm w}^*} \, a \in \B_1 \quad \Longleftrightarrow \quad \mbox{tr}\,(a_n K) \xrightarrow[n\to \infty]{} \mbox{tr}\,(a K)\, , \quad  \forall \, K\in \K
$$
Let us collect a few basic properties that will be used to prove the main theorem:
\begin{enumerate}
\smallskip
\item[(A)] $\B_1$ with the w* topology is a Hausdorff, locally convex TVS. This is a general property of the dual (endowed with the w* topology) of every normed space \cite[Definition 1.2 and Example 1.8, Chap. IV]{Conway}. Note that in Conway's book \cite{Conway} the definition of locally convex TVS already includes the Hausdorff property. 
\smallskip
\item[(B)] Since $L^2$ is separable, $\K$ is also separable, and moreover is a Banach space. It also follows from a general result \cite[Corollary 2.6.20]{Megginson} that the (relative) w* topology on a bounded subset of $\B_1$ is metrisable (but not on the whole space $\B_1$). Note that a set $X \subset \B_1$ is norm-bounded if and only if it is w*-bounded \cite[Theorem 2.6.7]{Megginson}. $\B_1=\K^*$ is equipped with the dual (operator) norm, which coincides with the trace norm:
$$
\| a \|_*= {\rm sup} \{|{\rm tr} (a K) | : K\in \K \wedge \|K\| \le 1 \} =\|a\|_1
$$
\smallskip
\item[(C)] The unit ball $B_1=\{a \in \B_1 : \| a \|_1 \le 1 \}$ is w*-compact. This property follows from the Banach--Alaoglu theorem  \cite[Theorem IV.21]{ReedSimon}. Since $B_1$ is bounded, the w* topology on $B_1$ is metrisable. We can then conclude that $B_1$ with the relative w* topology is a Fr\'echet--Urysohn subspace \cite[Theorem 1.6.14]{Engelking} and so every subset $A\subset B_1$ satisfies $\overline{A}=$sc$(A)$, where both the closure and the sequential closure are with respect to the w* topology.
\smallskip
\item[(D)] The cone Con ${\D_+}=\{\lambda \rho \, | \, \rho \in \D_+ \, \wedge \, \lambda \in \RE^+_0 \}$ is w*-closed \cite[Theorem 7, Corollary 31]{Lami}.
\smallskip
\item[(E)] Let $\rho_n,\rho \in \D(\H)$. Then $\rho_n \xrightarrow[n\to \infty]{{\mathrm w}^*} \, \rho$ if and only if $ \rho_n \xrightarrow[n\to \infty]{\|\, \,\|_1} \rho$ \cite[Lemma 4.3]{Davies}.

\end{enumerate}

\smallskip

We can now prove that:

\begin{theorem}\label{KMR}
Let $\H=L^2(\RE)$ and let $\D_+=\D_+(\H)$. Then
$$
\D_+=\overline{\mbox{\rm conv}}\,(\E(\D_+)) 
$$
where the closure is with respect to the trace norm.
\end{theorem}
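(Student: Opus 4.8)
The plan is to circumvent the non-compactness of $\D_+$ in the trace norm by passing to the weak* topology, where a truncated cone over $\D_+$ \emph{is} compact, and then to transfer the resulting Krein--Milman decomposition back to the trace norm using the fact that the two topologies coincide on $\D$ (property (E)). First I would introduce the truncated cone
$$
K \coloneqq \mathrm{Con}\,\D_+ \cap B_1 = \{\lambda\rho : \rho\in\D_+,\ 0\le\lambda\le 1\} = \mathrm{conv}\bigl(\{0\}\cup\D_+\bigr),
$$
where the equalities use $\|\lambda\rho\|_1=\lambda$ for $\rho\in\D_+$ together with the definition of the convex hull. The set $K$ is convex, and it is w*-compact, being the intersection of the w*-closed cone $\mathrm{Con}\,\D_+$ (property (D)) with the w*-compact unit ball $B_1$ (property (C)). Since $\B_1$ with the w* topology is a Hausdorff, locally convex TVS (property (A)), the Krein--Milman theorem applies and gives $K=\overline{\mathrm{conv}}^{\,\mathrm{w}^*}(\E(K))$. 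A short algebraic computation identifies the extreme points: taking traces in $\rho=\tfrac12(a+b)$ with $a=\lambda_a\rho_a,\,b=\lambda_b\rho_b\in K$ forces $\lambda_a=\lambda_b=1$ whenever $\mathrm{tr}\,\rho=1$, so a trace-$1$ point is extreme in $K$ iff it is extreme in $\D_+$; the apex $0$ is extreme; and any $\lambda\rho$ with $0<\lambda<1$ equals the proper combination $(1-\lambda)\,0+\lambda\rho$ and is thus not extreme. Hence $\E(K)=\{0\}\cup\E(\D_+)$ and $K=\overline{\mathrm{conv}}^{\,\mathrm{w}^*}(\{0\}\cup\E(\D_+))$.

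Next, fix $\rho\in\D_+\subseteq K$. Because $K$ is a bounded subset of $\B_1$, the relative w* topology on $K$ is metrisable and Fr\'echet--Urysohn (property (C)), so the w*-closed convex hull equals the w*-sequential closure of the convex hull. Hence there is a sequence $c_n\in\mathrm{conv}(\{0\}\cup\E(\D_+))$ with $c_n\xrightarrow{\mathrm{w}^*}\rho$, which I write as $c_n=\lambda_n\tilde\rho_n$ with $\tilde\rho_n\in\mathrm{conv}(\E(\D_+))\subseteq\D_+$ and $\lambda_n\in[0,1]$. The key point is to rule out loss of mass in the limit. Since each $c_n\ge 0$ and the trace is the supremum of the w*-continuous functionals $\sigma\mapsto\mathrm{tr}(\sigma P)$ over finite-rank projections $P$, the trace is w*-lower-semicontinuous on the positive cone; therefore $1=\mathrm{tr}\,\rho\le\liminf_n\mathrm{tr}\,c_n=\liminf_n\lambda_n\le 1$, so $\lambda_n\to 1$. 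Consequently $\tilde\rho_n=\lambda_n^{-1}c_n\xrightarrow{\mathrm{w}^*}\rho$ with $\tilde\rho_n,\rho\in\D$, and by property (E) this upgrades to $\tilde\rho_n\xrightarrow{\|\cdot\|_1}\rho$. As $\tilde\rho_n\in\mathrm{conv}(\E(\D_+))$, we conclude $\rho\in\overline{\mathrm{conv}(\E(\D_+))}=\overline{\mathrm{conv}}(\E(\D_+))$ for the trace norm.

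The reverse inclusion $\overline{\mathrm{conv}}(\E(\D_+))\subseteq\D_+$ is immediate, since $\E(\D_+)\subseteq\D_+$ and $\D_+$ is convex and trace-norm closed (\cref{Compact_infinite_dim}). The main obstacle is precisely the failure of weak*-continuity of the trace (the identity is not compact), which is what makes $\D_+$ fail to be w*-closed --- coherent states displaced to infinity converge w* to $0\notin\D_+$. The truncated cone $K$ absorbs exactly this escaping mass as its apex $0$, and the lower-semicontinuity argument guarantees that no mass is lost when the weak* limit already lies in the trace-$1$ slice $\D_+$, which is what allows the weak* decomposition to be pulled back to a genuine trace-norm decomposition.
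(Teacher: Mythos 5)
Your proof is correct and follows essentially the same route as the paper: the same truncated cone $\widetilde{\D}_+ = B_1\cap\mathrm{Con}\,\D_+$, the same application of the Krein--Milman theorem in the weak* topology where this cone is compact, and the same transfer back to the trace norm via metrisability of the w* topology on bounded sets and the equivalence of w* and trace-norm convergence on $\D$. The only local difference is in ruling out loss of mass: you invoke weak*-lower-semicontinuity of the trace on the positive cone to conclude $\lambda_n\to 1$ directly, whereas the paper extracts a convergent subsequence of the normalisation coefficients and derives a contradiction with the w*-compactness of $\widetilde{\D}_+$; your variant is slightly more streamlined, but the two arguments are interchangeable.
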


\begin{proof}
Let us define 
$$
\widetilde\D_+ \coloneqq B_1 \, \cap \, {\rm Con}\,\D_+ = \{\lambda \rho \, | \, \rho \in \D_+ \, \wedge \, \lambda \in [0,1] \}, 
$$ 
which is a subset of $\B_1$. In view of (C) and (D) above, this set is w*-compact (it is a w*-closed subset of a w*-compact set). 

Given (A) above, and since $\widetilde\D_+$ is also convex, the Krein--Milman theorem holds:
\begin{equation}\label{cchw}
\widetilde\D_+=\overline{\mbox{\rm conv}}^{{\mathrm w}^*}\, (\E(\widetilde\D_+))  
\end{equation}
where $\overline{\mbox{\rm conv}}^{{\mathrm w}^*}(A)$ denotes the w*-closure of ${\rm conv}$$(A)$.

Consider an arbitrary element $\rho \in \D_+ \subset \widetilde\D_+$. Taking into account (C), 
it follows from (\ref{cchw}) that $\rho \in $ sc$^{{\mathrm w}^*}($${\rm conv}$$(\E(\widetilde\D_+))$, and so there exists a sequence $(\widetilde\rho_n)_{n\in \IN} \subset$ ${\rm conv}$$\,(\E(\widetilde\D_+))$ such that
\begin{equation} \label{wtrho}
\wt\rho_n \xrightarrow[n\to \infty]{{\mathrm w}^*} \rho.
\end{equation}
Since $\E(\wt\D_+)= \E(\D_+) \cup \{\mathbf{0}\}$, where $\mathbf{0}$ is the null operator, we can write  each $\wt\rho_n$ in (\ref{wtrho}) in the form 
$$ 
\wt\rho_n = \lambda_0 \, \mathbf{0} + \sum_{i=1}^m \lambda_i e_i= \sum_{i=1}^m \lambda_i e_i \quad , \quad 0 \le \lambda_i \le 1 \, ,\quad 0< \sum_{i=1}^m \lambda_i \le 1
$$
where $e_i \in \E(\D_+)$, and $m,\lambda_i,\, i=1,\dots,m$ depend of $n$. 

Let us define $\rho_n=\wt\rho_n / k_n$ where $k_n=\|\wt\rho_n \|_1=\sum_{i=1}^m \lambda_i$. Hence, $\rho_n \in$ ${\rm conv}$$\,(\E(\D_+))$. We then have from (\ref{wtrho}):
$$
k_n\rho_n \xrightarrow[n\to \infty]{{\mathrm w}^*} \rho \,\Longleftrightarrow k_n \, {\rm tr}\,(\rho_n C) \xrightarrow[n\to \infty]{} {\rm tr}\,(\rho C)\, , \,\forall C\in \K
$$ 
Since $(k_n)_{n\in \IN}$ is a bounded sequence of real numbers, it has a convergent subsequence $(k_{n_i})_{i\in \IN}$, where $n_i:\IN\to \IN$ is strictly increasing. Hence, $k_{n_i} \to k$, for some $0\le k\le 1$. First consider that $k=0$. Then (using the fact that $|{\rm tr}\,(\rho_{n_i} C)| \le \| C \|$, for all $C\in \K$ )
$$
k_{n_i}  {\rm tr}\,(\rho_{n_i} C) \xrightarrow[i\to \infty]{} 0\, , \, \forall C\in \K \; \Longrightarrow \; \wt\rho_{n_i} \xrightarrow[i\to \infty]{{\mathrm w}^*} 0
$$
which contradicts (\ref{wtrho}). Next consider the case $0<k<1$. We have:
$$
{\rm tr}\,(\rho_{n_i} C) \xrightarrow[i\to \infty]{} \tfrac{1}{k} {\rm tr}\,(\rho C)\, , \, \forall C\in \K \; \Longleftrightarrow \; \rho_{n_i} \xrightarrow[i\to \infty]{{\mathrm w}^*} \tfrac{1}{k}\rho
$$ 
Note that $(\rho_{n_i})_{i\in \IN} \subset \wt\D_+$ but $ \tfrac{1}{k}\rho \notin \wt\D_+$ (since $\| \tfrac{1}{k}\rho \|_1 >1$). This is not possible since $\wt\D_+$ is w*-compact (and thus w*-sequentially closed).  

Therefore, $k_n \to 1$ (otherwise $(k_n)_{n\in \IN}$ would always have a subsequence converging to some $k<1$). We conclude from (\ref{wtrho}) that for all $\rho \in \D_+$ there exists a sequence $(\rho_n)_{n\in \IN} \subset {\rm conv}(\E(\D_+))$, given by $\rho_n=\wt\rho_n/k_n$, and such that
\begin{equation} \label{rhon}
\rho_n \xrightarrow[n\to \infty]{{\mathrm w}^*} \rho \quad \Longleftrightarrow \quad \rho_n \xrightarrow[n\to \infty]{\| \, \|_1} \rho 
\end{equation}
where, in the last step, we have used the relation (E) above. Hence 
$$
\D_+ \subseteq \overline{\mbox{\rm conv}}^{\| \, \|_1}\, (\E(\D_+)) ~.
$$
Since $\D_+$ is closed for the trace norm (cf.~\cref{Compact}), we also have $\overline{\mbox{\rm conv}}^{\| \, \|_1}\, (\E(\D_+)) \subseteq \D_+$, and so the two sets are equal, which concludes the proof.
\end{proof}

\subsubsection{From finite to infinite-dimensional extreme points}
\label{sec:geom_finite_to_infinite}

The next question is how to determine and construct the extreme points (or other sets of generators) of $\D_+$. One partial answer is that we may find extreme points of $\D_+$ by looking for extreme points of its various subsets $\D_+(\H)$, for arbitrary finite-dimensional Hilbert subspaces $\H\subset L^2(\RE)$. We follow this strategy in a companion paper \cite{physicspaper} to generate a large class of explicit extreme points of $\D_+$. The validity of this strategy follows from the next result: 

\begin{theorem}\label{Faces+Extreme}
Let $\H\subset L^2(\RE)$ be finite-dimensional and let $\Pi_\H:L^2(\RE) \to \H$ be the orthogonal projector onto $\H$. Then:
\begin{enumerate}
\item[{\bf(A)}] Let $\rho \in \D_+$. Then $\rho \in \D_+(\H)$ if and only if {\rm tr}$\,(\rho \,\Pi_\H )=1$
\smallskip
\item[{\bf(B)}] If nonempty, $\D_+(\H)$ is a face of $\D_+$
\smallskip
\item[{\bf(C)}] $
\E(\D_+(\H))= \E(\D_+) \cap \D_+(\H) 
$		
\end{enumerate}
\end{theorem}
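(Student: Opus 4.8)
The plan is to prove the three parts in order, since (B) follows from (A) and (C) follows from (B) together with the general identity \eqref{ExtFaces}. The unifying object is the continuous linear functional $f(\rho) = {\rm tr}(\rho\,\Pi_\H)$, which is bounded above by $1$ on $\D$ (because $0 \le \Pi_\H \le I$ and $\rho \ge 0$ give ${\rm tr}(\rho\,\Pi_\H) \le {\rm tr}\,\rho = 1$) and which, by part (A), detects membership in $\D_+(\H)$.

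For part (A) the forward implication is immediate: if $\rho \in \D_+(\H) \subseteq \D(\H)$ then $\rho$ is supported on $\H$, so $\rho = \Pi_\H \rho\,\Pi_\H$, whence $\rho\,\Pi_\H = \rho$ and ${\rm tr}(\rho\,\Pi_\H) = {\rm tr}\,\rho = 1$. For the converse I would start from ${\rm tr}(\rho\,\Pi_\H) = 1 = {\rm tr}\,\rho$, rewrite it as ${\rm tr}(\rho\,\Pi_{\H^\perp}) = 0$ with $\Pi_{\H^\perp} = I - \Pi_\H$, and then invoke positivity. By cyclicity of the trace and $\Pi_{\H^\perp}^2 = \Pi_{\H^\perp}$ we have ${\rm tr}(\Pi_{\H^\perp}\rho\,\Pi_{\H^\perp}) = 0$, and since $\Pi_{\H^\perp}\rho\,\Pi_{\H^\perp} \ge 0$ a vanishing trace forces $\Pi_{\H^\perp}\rho\,\Pi_{\H^\perp} = 0$. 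Writing $\rho = \sqrt\rho\,\sqrt\rho$, this reads $(\sqrt\rho\,\Pi_{\H^\perp})^\dagger(\sqrt\rho\,\Pi_{\H^\perp}) = 0$, hence $\sqrt\rho\,\Pi_{\H^\perp} = 0$ and therefore $\rho\,\Pi_{\H^\perp} = 0$. Thus ${\rm Ran}\,\rho \subseteq \H$ and, by self-adjointness, $\rho = \Pi_\H\rho\,\Pi_\H$, so $\rho \in \D(\H)$; combined with $\rho \in \D_+$ this yields $\rho \in \D_+(\H)$.

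For part (B) I would use (A) to write $\D_+(\H) = \{\rho \in \D_+ : f(\rho) = 1\}$, where $\sup_{\rho \in \D_+} f(\rho) = 1$ whenever $\D_+(\H) \neq \emptyset$. The face property then follows directly from the definition in \cref{extreme}: if $w \in \D_+(\H)$ decomposes as $w = (1-t)u + tv$ with $u,v \in \D_+$ and $0<t<1$, then $1 = f(w) = (1-t)f(u) + tf(v)$ with $f(u), f(v) \le 1$, which forces $f(u) = f(v) = 1$, so $u,v \in \D_+(\H)$ by (A). This is also exactly the extreme-set-of-a-functional construction \eqref{Faces}, since $f$ is continuous and linear.

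Finally, part (C) is an immediate application of the general identity \eqref{ExtFaces}, namely $\E(\F) = \E(X) \cap \F$, taken with $X = \D_+$ and $\F = \D_+(\H)$ the face furnished by (B); the degenerate case $\D_+(\H) = \emptyset$ holds trivially since both sides are then empty. I expect the only genuine work to be the positivity argument in the converse of (A) — concluding from ${\rm tr}(\rho\,\Pi_{\H^\perp}) = 0$ that $\rho$ is genuinely supported on $\H$ rather than merely having vanishing overlap in a weaker sense; once that step is secured, parts (B) and (C) are purely formal.
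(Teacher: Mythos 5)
Your proposal is correct and follows essentially the same route as the paper: the same functional $f(\rho)=\mathrm{tr}(\rho\,\Pi_\H)$ drives part (A), part (B) is the extreme-set-of-a-linear-functional construction of Eq.~(\ref{Faces}), and part (C) is the identity (\ref{ExtFaces}) applied to the face from (B). The only difference is that you spell out the converse of (A) in more detail (via $\mathrm{tr}(\rho\,\Pi_{\H^\perp})=0$, positivity, and the $\sqrt{\rho}$ factorisation), where the paper simply asserts the strict inequality $\mathrm{tr}(\rho\,\Pi_\H)<1$ for $\rho\notin\D(\H)$; your added detail is a faithful justification of that step, not a different argument.
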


\begin{proof}

{\bf (A)} Let $\{\psi_k, k \in S\}$ be an orthonormal basis of $\H$. Since $\H$ is finite-dimensional, $S$ is a finite index set. Then $\Pi_\H=\sum_{k\in S} |\psi_k\>\<\psi_k|$ and if $\rho \in \D_+(\H) \subset \D(\H)$, we get:
$$
{\rm tr}\,(\rho \,\Pi_\H ) = \sum_{k\in S} \<\psi_k|\rho \psi_k\> = {\rm tr}\,(\rho )=1.
$$
Conversely, if $\rho \in \D_+ \backslash \D_+(\H)$ then $\rho \notin \D(\H)$ and so:
$$
{\rm tr}\,(\rho \,\Pi_\H ) = \sum_{k\in S} \<\psi_k|\rho \psi_k\> < {\rm tr}\,(\rho )=1.
$$

{\bf (B)} Consider the linear, non-constant functional:
$$
f_\H: \D_+ \longrightarrow \RE; \, \rho \longrightarrow f_\H (\rho)={\rm tr}\,(\rho \,\Pi_\H )
$$
Since $f_\H$ is defined on the convex set $\D_+$ and has the maximum ${\rm max} \{f_\H(\rho)|\, \rho \in \D_+\}=1$, it follows from \cite[Theorem 8.3]{Simon2011} that (cf.~Eq.~(\ref{Faces})):
$$
\F_{\H}= \{\rho \in \D_+: f_\H(\rho)=1 \}
$$
is a proper face of $\D_+$. Moreover, from 1. above we obtain $\F_{\H}=\D_+(\H)$. 

\medskip

{\bf (C)} It is an immediate consequence of  $\D_+(\H) \subset \D_+$, that if $\rho \in \E(\D_+)\cap \D_+(\H)$ then $\rho$ is an extreme point of $\D_+(\H)$. Conversely, $\D_+(\H)$ is a face of $\D_+$ and so $\E(\D_+(\H)) \subseteq \E(\D_+)$ (cf.~Eq.~(\ref{ExtFaces})).
\end{proof}

One should note that the previous theorem is not only valid for $\H=\H^N$, but more generally for all Hilbert spaces $\H=H_{z_0,S}^N$, $z_0\in \RE$, $S\in $ Sp$(2)$ and $N\in \IN_0$. Hence:  
%The next corollary is a simple extension of the previous theorem:
%\begin{corollary}
%Let $\H ,\H' \subset L^2(\RE)$ be two Hilbert subspaces of $L^2(\RE)$ (finite or infinite-dimensional) such that $\H\subset \H'$. Then 
%\begin{itemize}
%\item $\D_+(\H)$ is a face of $\D_+(\H')$.

%\item $\E(\D_+(\H))= \E(\D_+(\H')) \cap \D_+(\H) $.
	
%\end{itemize}
%\end{corollary}
$$
\bigcup_{N\in \IN_0, z_0\in \RE^2,S\in {\rm Sp}(2)}\, \E(\D_+(\H^{N}_{z_0,S})) \subseteq \E(\D_+)
$$  

Interestingly, it might be possible to find sets of generators that are proper subsets of $\E(\D_+)$. This is a consequence of the following simple theorem: 

\begin{lemma}\label{chdensity}
Let $V$ be a Fr\'echet--Uryshon TVS (e.g.~a metrisable TVS), and let $A$ and $B$ be subsets of $V$ where $A \subset B$ is a dense subset of $B$ (i.e.~$B \subseteq \overline{A}$). Define
$$
A_1 = {\rm conv}(A), \quad B_1 = {\rm conv}(B),
$$
and
$$
A_2 = \overline{A_1}=\overline{\mbox{\rm conv}}(A), \quad B_2 = \overline{B_1} = \overline{\mbox{\rm conv}}(B).
$$
Then:
\begin{enumerate}
    \item[{\bf(A)}] $A_1$ is dense in $B_1$, i.e.~$B_1 \subseteq \overline{A_1}$.
    \smallskip
    \item[{\bf(B)}] $A_2 = B_2$.
\end{enumerate}
\end{lemma}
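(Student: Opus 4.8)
The plan is to establish (A) directly by approximating a convex combination in $B_1$ termwise by elements of $A$, using continuity of the vector-space operations, and then to obtain (B) as a purely formal consequence of (A) together with the monotonicity of ${\rm conv}(\cdot)$ and of closure under set inclusion.

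First I would prove (A), i.e.\ $B_1 \subseteq \overline{A_1}$. Fix an arbitrary $b \in B_1={\rm conv}(B)$ and write it as a finite convex combination $b=\sum_{i=1}^m t_i b_i$ with $b_i\in B$, $t_i\in[0,1]$ and $\sum_{i=1}^m t_i=1$. Since $B\subseteq\overline{A}$ and $V$ is Fr\'echet--Urysohn, each $b_i$ satisfies $b_i\in\overline{A}={\rm sc}(A)$, so there is a sequence $(a_i^{(n)})_{n\in\IN}\subset A$ with $a_i^{(n)}\to b_i$. For each $n$ set $a^{(n)}=\sum_{i=1}^m t_i a_i^{(n)}$, which lies in ${\rm conv}(A)=A_1$. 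By continuity of addition and scalar multiplication in the TVS $V$ one has $a^{(n)}\to\sum_{i=1}^m t_i b_i=b$, hence $b\in{\rm sc}(A_1)\subseteq\overline{A_1}$. As $b$ was arbitrary, $B_1\subseteq\overline{A_1}$.

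Then I would deduce (B). The inclusion $A_2\subseteq B_2$ is immediate from $A\subseteq B$: it gives $A_1={\rm conv}(A)\subseteq{\rm conv}(B)=B_1$, and taking closures yields $A_2=\overline{A_1}\subseteq\overline{B_1}=B_2$. For the reverse inclusion I would invoke (A): from $B_1\subseteq\overline{A_1}=A_2$ and the fact that $A_2$ is closed, taking closures gives $B_2=\overline{B_1}\subseteq\overline{A_2}=A_2$. Combining the two inclusions yields $A_2=B_2$.

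The lemma is elementary, so rather than a genuine obstacle the one point that must be handled with care is the role of the Fr\'echet--Urysohn hypothesis: it is exactly what lets one replace the abstract membership $b_i\in\overline{A}$ by an approximating \emph{sequence} in $A$, which is then carried through the (finite) convex combination by continuity. In a general topological vector space one would instead have to argue with nets; here metrisability of the relevant set --- in the intended application the bounded w*-subsets of $\B_1$, cf.\ property (C) above --- guarantees that sequences suffice. Note that neither compactness nor finite-dimensionality is used anywhere.
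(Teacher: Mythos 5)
Your proof is correct and follows essentially the same route as the paper's: part (A) by termwise sequential approximation of a finite convex combination (using the Fr\'echet--Urysohn property to pass from closure to sequential closure, and continuity of the TVS operations), and part (B) as a formal consequence of (A) plus monotonicity of the convex hull and closure. The remarks on why sequences suffice match the paper's intended application to bounded w*-subsets of $\B_1$.
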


\begin{proof}
Since $A \subset B$, it follows from the monotonicity of the convex hull that
$$
A_1 = {\rm conv}(A) \subset {\rm conv}(B) = B_1.
$$

\medskip

\textbf{(A)}
Let $x \in B_1$. Then there exist points $b_1, \dots, b_n \in B$ and non-negative coefficients $\lambda_1, \dots, \lambda_n$ with $\sum_{i=1}^n \lambda_i = 1$, such that
$$
x = \sum_{i=1}^n \lambda_i b_i.
$$
Since $V$ is a Fr\'echet--Uryshon space, the closure and the sequential closure of its subsets coincide. Since $A$ is dense in $B$, for every $b_i$, $i = 1, \dots, n$ there exists a sequence $(a_{i,k})_{k\in\IN} \subset A$ satisfying
$$
a_{i,k} \longrightarrow b_i \quad \text{as } k \longrightarrow \infty.
$$
Define the sequence $(x_k)_{k\in \IN}$ in $A_1$ by
$$
x_k = \sum_{i=1}^n \lambda_i a_{i,k} ~.
$$
Since $a_{i,k} \in A$ for all $i$ and $k$, every $x_k$ is in ${\rm conv}(A) = A_1$. By the continuity of addition and scalar multiplication in $V$ (since $V$ is a TVS), we have
$$
x_k \longrightarrow \sum_{i=1}^n \lambda_i b_i = x ~.
$$
Thus, $x\in \overline{A_1}$, and so
$
B_1 \subseteq \overline{A_1}
$.

\medskip

\textbf{(B)}
By definition, $A_2 = \overline{A_1}$ and $B_2 = \overline{B_1}$. Since $A_1 \subset B_1$, it immediately follows that
$
A_2 \subset B_2.
$
On the other hand, we already know that $B_1 \subseteq \overline{A_1}$. Taking closures on both sides (and noting that closures are closed sets), we get
$$
\overline{B_1} \subset \overline{\overline{A_1}} = \overline{A_1},
$$
so that
$
B_2 \subset A_2 .
$
Combining the two inclusions, we get
$
A_2 = B_2
$
completing the proof.
\end{proof}

\medskip

It follows that: 

\begin{corollary}\label{DenseX}
Let $\mathcal K$ be a dense subset of $\E(\D_+)$ (in the trace norm topology). Then:
$$
\D_+=\overline{\mbox{\rm conv}}(\mathcal K)
$$
\end{corollary}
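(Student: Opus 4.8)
The plan is to obtain this statement as an essentially immediate consequence of the two results already established: the Krein--Milman relation \cref{KMR} and the abstract density lemma \cref{chdensity}. The key observation is that both the hypothesis (density of $\mathcal K$ in $\E(\D_+)$) and the conclusion (the closed convex hull equalling $\D_+$) are phrased with respect to the trace norm, and that $\B_1$ equipped with the trace norm is a Banach space, hence a metrisable---and therefore Fr\'echet--Urysohn---topological vector space. Thus the ambient space satisfies precisely the standing hypothesis of \cref{chdensity}.

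First I would set $V=\B_1$ with its trace-norm topology and apply \cref{chdensity} with the choices $A=\mathcal K$ and $B=\E(\D_+)$. By assumption $\mathcal K$ is a dense subset of $\E(\D_+)$, i.e.~$A\subset B$ with $B\subseteq\overline A$, so the hypotheses of the lemma are met. Part {\bf (B)} of \cref{chdensity} then yields directly
$$
\overline{\mbox{\rm conv}}(\mathcal K)=\overline{\mbox{\rm conv}}(\E(\D_+)),
$$
where both closed convex hulls are taken in the trace norm.

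It remains only to identify the right-hand side. By the Krein--Milman relation for the infinite-dimensional set of WPS, \cref{KMR}, we have $\D_+=\overline{\mbox{\rm conv}}(\E(\D_+))$, again with closure in the trace norm. Chaining this equality with the one produced by the lemma gives $\D_+=\overline{\mbox{\rm conv}}(\mathcal K)$, as claimed.

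I do not anticipate any genuine obstacle: the entire content has been front-loaded into \cref{KMR} (whose proof is the substantive one, relying on the weak* compactness of $\wt\D_+$ and the coincidence of weak* and trace-norm convergence on $\D$) and into \cref{chdensity}. The only points requiring a moment's care are bookkeeping, namely confirming that the topology used to define ``dense'' in the statement of the corollary is the same trace-norm topology appearing in both prior results, and confirming that $\B_1$ is Fr\'echet--Urysohn so that \cref{chdensity} applies verbatim. Both are immediate.
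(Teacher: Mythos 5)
Your proposal is correct and is precisely the argument the paper intends: the corollary is stated as an immediate consequence of \cref{chdensity} (applied with $A=\mathcal K$, $B=\E(\D_+)$ in the Fr\'echet--Urysohn space $\B_1$ with the trace norm) combined with the Krein--Milman relation of \cref{KMR}. No difference in route or substance.
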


The preceding discussion motivates the following conjecture, with which we conclude the paper:

\begin{conjecture}

For $\H=L^2(\RE)$, we have:
$$
\D_+= \overline{\mbox{\rm conv}}\left(\bigcup_{N\in \IN}\, \E(\D_+^{N})\right)
$$
i.e.~every WPS is the limit (in the trace norm) of a sequence of finite rank WPS from $\D_+^N$.  
  	
\end{conjecture}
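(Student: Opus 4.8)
The plan is to recast the conjecture as a pure density statement and then attack it constructively. First I would simplify the right-hand side. Since $\H^N \subseteq \H^{N+1}$ we have $\D_+^N \subseteq \D_+^{N+1}$, so the nested union $\bigcup_N \D_+^N$ is already convex; moreover each $\D_+^N$ is compact and convex, so by the finite-dimensional Krein--Milman theorem (\cref{KMfinite}) $\D_+^N = {\rm conv}(\E(\D_+^N))$. Combining these two facts yields ${\rm conv}\big(\bigcup_N \E(\D_+^N)\big) = \bigcup_N \D_+^N$, hence $\overline{\rm conv}\big(\bigcup_N \E(\D_+^N)\big) = \overline{\bigcup_N \D_+^N}$ in the trace norm. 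Thus the conjecture is \emph{equivalent} to the assertion $\D_+ = \overline{\bigcup_N \D_+^N}$, i.e.\ that the finite-rank Wigner-positive states are trace-norm dense in $\D_+$. (Alternatively, by \cref{DenseX} it would suffice to prove the stronger claim that $\bigcup_N \E(\D_+^N)$ is dense in $\E(\D_+)$; the two routes are related, but the density formulation seems the more tractable target.)

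The inclusion $\overline{\bigcup_N \D_+^N} \subseteq \D_+$ is immediate, since each $\D_+^N \subseteq \D_+$ and $\D_+$ is trace-norm closed (\cref{Compact_infinite_dim}). For the reverse inclusion I would proceed constructively. Given $\rho \in \D_+$, let $P_N$ be the orthogonal projector onto $\H^N$ and set $\sigma_N = P_N \rho P_N / {\rm tr}(P_N \rho P_N)$; a standard argument gives $\sigma_N \to \rho$ in trace norm. The difficulty is that $\sigma_N \in \D^N$ need not be Wigner-positive. The natural remedy is to restore positivity by mixing in the vacuum, i.e.\ to set $\tau_N = (1-\epsilon_N)\sigma_N + \epsilon_N \rho_0$ with $\rho_0 = |0\>\<0|$, which lies in $\D^N$ and converges to $\rho$ whenever $\epsilon_N \to 0$. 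Writing the Wigner function of any state in $\H^N$ as $W\sigma(z) = \tfrac1\pi Q_\sigma(z)\,e^{-|z|^2}$ with $Q_\sigma$ a polynomial, and noting $Q_{\rho_0}\equiv 1$, the positivity condition $W\tau_N \ge 0$ becomes $\inf_z Q_{\sigma_N}(z) \ge -\epsilon_N/(1-\epsilon_N)$. Hence the whole construction succeeds provided $m_N := \max\{0,\,-\inf_z Q_{\sigma_N}(z)\} \to 0$, in which case one simply takes $\epsilon_N = m_N/(1+m_N) \to 0$.

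The main obstacle is exactly the control of $m_N$, the exponentially weighted negativity $-\inf_z e^{|z|^2} W\sigma_N(z)$ (up to constants). Trace-norm convergence $\sigma_N \to \rho$ only yields $W\sigma_N \to W\rho$ in $L^2(\RE^2)$, which is far too weak to tame the factor $e^{|z|^2}$: the negativity of $W\sigma_N$ can migrate to infinity, where the Gaussian weight amplifies it, and since the Fock coefficients of a general trace-class state decay slowly, the series $Q_{\sigma_N} \to Q_\infty := \pi e^{|z|^2}W\rho \ge 0$ need not even converge pointwise. Overcoming this requires either (i) upgrading the convergence of $e^{|z|^2}W\sigma_N$ to a locally uniform statement together with a genuine tail estimate, or (ii) regularising \emph{before} truncating. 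For (ii) a promising route is to first apply a small pure-loss (attenuation) channel $\Phi_\eta$, which maps $\D_+$ into itself, satisfies $\Phi_\eta(\rho) \to \rho$ as $\eta \to 1$, and acts on Wigner functions by rescaling plus Gaussian convolution, thereby smoothing $W\rho$ and suppressing the far-field negativity of its truncations. One may further reduce the problem using \cref{Theorem5} and \cref{SecondTheorem}: it suffices to treat a dense subclass of $\D_+$, and \cref{X} already provides density of $\D_{0+}$, so one could aim to arrange approximants whose negative sets are uniformly bounded, turning the weighted supremum defining $m_N$ into a supremum over a fixed compact set, where uniform convergence of $W\sigma_N$ can be exploited. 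I expect step (ii) --- proving that regularisation followed by truncation forces $m_N \to 0$ --- to be the crux of the entire argument.
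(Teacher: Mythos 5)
There is no proof in the paper to compare against: this statement is the paper's closing \emph{conjecture}, explicitly left open. Your reformulation of the right-hand side is correct and clean --- since the sets $\D_+^N$ are nested, convex and compact, \cref{KMfinite} gives ${\rm conv}\bigl(\bigcup_N \E(\D_+^N)\bigr)=\bigcup_N\D_+^N$, so the conjecture is indeed equivalent to the trace-norm density of $\bigcup_N\D_+^N$ in $\D_+$. But that is precisely the paper's own parenthetical rephrasing (``every WPS is the limit \dots of finite rank WPS from $\D_+^N$''), so it contributes no progress on the actual open question. Everything after the reduction is a plan, not a proof, and the plan stalls exactly where you say it does.

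The genuine gap is the unproven claim $m_N\to 0$. Your quantity $m_N=\max\{0,-\inf_z Q_{\sigma_N}(z)\}$ is at least well defined and finite for each $N$ (by \cref{SecondTheorem} and \cref{Theorem5}, $\Omega_-(\sigma_N)$ is compact, so $Q_{\sigma_N}>0$ outside a compact set), but nothing in your argument controls it uniformly in $N$. Note first a small inaccuracy: trace-norm convergence gives more than $L^2$ convergence of Wigner functions --- writing $W\sigma(z)=\tfrac1\pi\,{\rm tr}\bigl(\sigma\,\Pi(z)\bigr)$ with $\Pi(z)$ the displaced parity operator yields $\sup_z|W\sigma_N(z)-W\rho(z)|\le\tfrac1\pi\|\sigma_N-\rho\|_1\to 0$. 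This, however, does not rescue the construction, and in fact makes the true obstruction sharper: the compensator $\rho_0$ has a Gaussian-decaying Wigner function, so if $W\sigma_N$ attains a (uniformly small) negative value $-\delta_N$ at a point of radius $R_N$, restoring positivity by mixing with $\rho_0$ requires $\epsilon_N\gtrsim \delta_N e^{R_N^2}$. Since Fock truncation gives no control on where $\Omega_-(\sigma_N)$ sits, $R_N$ may grow with $N$ fast enough that $\delta_N e^{R_N^2}\not\to 0$ even though $\delta_N\to 0$. The same Gaussian-decay problem afflicts any compensator chosen from a fixed $\D_+^M$. Your proposed remedies (pre-regularising with a pure-loss channel, or reducing to a dense subclass with uniformly bounded negative sets) are reasonable directions, but neither is carried out, and the key interaction --- how Fock-space truncation affects the far-field sign of the Wigner function --- is exactly the part that is not understood. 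As it stands, the proposal identifies the difficulty rather than resolving it, and the statement remains a conjecture.
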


\section*{Acknowledgements}

N.J.C. acknowledges support from the F.R.S.-FNRS under project
CHEQS within the Excellence of Science (EOS) program.

U.C.\ , Z.V.H.\ and J.D.\ acknowledge funding from the European Union's Horizon Europe Framework Programme (EIC Pathfinder Challenge project Veriqub) under Grant Agreement No.~101114899.

N.C.D.\ and J.N.P.\ acknowledge funding from the Fundação para a Ciência e Tecnologia (Portugal) via the research centre GFM, reference UID/00208/2025.

******************************************************************************\\

{\bf Author's addresses:}

\begin{itemize}

\item {\bf Nicolas J. Cerf:}	Centre for Quantum Information and Communication, \'{E}cole polytechnique de Bruxelles,  CP 165, Universit\'{e} libre de Bruxelles, 1050 Brussels, Belgium.

e-mail address: nicolas.cerf@ulb.be\\

\item {\bf Ulysse Chabaud:} DIENS, \'Ecole Normale Sup\'erieure, PSL University, CNRS, INRIA, 45 rue d'Ulm, Paris 75005, France. 

e-mail address: ulysse.chabaud@inria.fr\\

\item {\bf Jack Davis:} DIENS, \'Ecole Normale Sup\'erieure, PSL University, CNRS, INRIA, 45 rue d'Ulm, Paris 75005, France.

e-mail address: jack.davis@inria.fr\\

\item {\bf Nuno C. Dias:} Grupo de Fisica-Matem\'atica, Departamento de Matem\'atica, Instituto Superior
Técnico, Av. Rovisco Pais, 1049-001 Lisbon, Portugal. 

e-mail address: nunocdias1@gmail.com\\

\item {\bf Jo\~ao N. Prata:} Grupo de Fisica-Matem\'atica, Departamento de Matem\'atica, Instituto Superior
Técnico, Av. Rovisco Pais, 1049-001 Lisbon, Portugal,
and
Departamento de Matem\'atica, ISCTE - Instituto Universit\'ario de Lisboa, Av. das
For\c{c}as Armadas, 1649-026, Lisbon, Portugal, 

e-mail address: joao.prata@mail.telepac.pt\\

\item {\bf Zacharie Van Herstraeten:} DIENS, \'Ecole Normale Sup\'erieure, PSL University, CNRS, INRIA, 45 rue d'Ulm, Paris 75005, France.

e-mail address: zacharie.van-herstraeten@inria.fr\\

\end{itemize}


\begin{thebibliography}{99}


\bibitem{Abreu} L.D. Abreu, U. Chabaud, N.C. Dias and J.N. Prata: Inverse problems for the zeros of the Wigner function, arXiv preprint arXiv:2504.20324v1 (2025)

\bibitem{Albarelli} F. Albarelli, M. G. Genoni,  M. G. A. Paris, A. Ferraro: Resource theory of quantum non-Gaussianity and Wigner negativity, Phys. Rev. A 98 (2018) 052350. 

\bibitem{Booth} R. I. Booth, U. Chabaud, P.-E. Emeriau: Contextuality and Wigner negativity are equivalent for continuous-variable quantum measurements, Phys. Rev. Lett. 129, n.23 (2022) 230401. 

\bibitem{Bondia} J. M. Gracia-Bondia and J. C. Varilly: Phys. Lett. A 128 (1988) 20.

\bibitem{Border2} The Kim C. Border Repository: Lecture 2\\
{https://healy.econ.ohio-state.edu/kcb/Ec181/Lecture02.pdf} 

\bibitem{Border5} The Kim C. Border Repository: Lecture 5\\
{https://healy.econ.ohio-state.edu/kcb/Ec181/Lecture05.pdf} 

\bibitem{Werner} T. Br\"ocker and R.F. Werner: Mixed states with positive Wigner functions, J. Math. Phys. 36 (1995) 62.

\bibitem{chabaud2020stellar} U. Chabaud, D. Markham, F. Grosshans: Stellar representation of non-Gaussian quantum states, Phys. Rev. Lett. 124, n.6 (2020) 063605. 

\bibitem{Conway} J. B. Conway: A course in Functional analysis, 2nd edition (Springer, 1990)

\bibitem{Davies} E. B. Davies: Quantum stochastic processes, Commun. Math. Phys. 15, n.4 (1969) 277-304.

\bibitem{Dias1} N.C. Dias and J.N. Prata: The Narcowich-Wigner spectrum of a pure state, Rep. Math. Phys. 63 (2009) 43-54.

\bibitem{DiasRMI} N.C. Dias and J.N. Prata: Quantum mappings acting by coordinate transformations on Wigner distributions, Rev. Mat. Iberoam. 35, no. 2 (2019) 317-337.


\bibitem{Engelking} R. Engelking: General Topology, (Heldermann, 1989).

\bibitem{GossonBook1} M. de Gosson: Symplectic Geometry and Quantum Mechanics, Birkh\"auser, Basel, series Operator Theory: Advances and Applications
(subseries: Advances in Partial Differential Equations), Vol. 166 (2006).


\bibitem{GossonLuef07} M. de Gosson and F. Luef: Quantum States and Hardy's Formulation of the Uncertainty Principle: a Symplectic Approach, Lett. Math. Phys. 80 (2007) 69-82. 


\bibitem{Hudson} R.L. Hudson: When is the Wigner quasi-probability density non-negative? Rep. Math. Phys. 6 (1974) 249-252.

\bibitem{Sudarshan} R. Jagannathan, R. Simon, E. Sudarshan and R. Vasudevan, Phys. Lett. A 120 (1987) 161.

\bibitem{Krein} M. Krein and D. Milman: On extreme points of regular convex sets, Studia Mathematica 9 (1940) 133-138.


\bibitem{Lami} L. Lami and M. Shirokov: Attainability and lower semi-continuity of the relative entropy of entanglement, and variations on the theme, Ann. Henri Poincar\'e 24 (2023) 4069-4137.

%\bibitem{Lesigne} E. Lesigne: On the behavior at infinity of an integrable function, The American Mathematical Monthly, 117, 2 (2010) 175-181.

\bibitem{Lions} P.L. Lions, T. Paul: Sur les mesures de Wigner, Rev. Mat. Iberoamericana 9 (1993) 553-618.
	
\bibitem{Cerf} A. Mandilara, E. Karpov, and N.J. Cerf: Extending Hudson's theorem to mixed quantum states, Phys. Rev. A 79 (2009) 062302.
	
\bibitem{Mari} A. Mari and J. Eisert: Positive Wigner Functions Render Classical Simulation of Quantum Computation Efficient, Phys. Rev. Lett. 109 (2012) 230503. 	
	
\bibitem{Megginson} R. Megginson: An introduction to Banach space theory (Springer, 1998) 

\bibitem{Narcowich1} F.J. Narcowich: Conditions for the convolution of two Wigner distributions to be itself a Wigner distribution, J. Math. Phys. 29 (1988) 2036.

\bibitem{Narcowich2} F. J. Narcowich and R. F. O'Connell, Phys. Lett. A 133 (1988) 167.

\bibitem{ReedSimon}  M. Reed and B. Simon: Methods of modern mathematical physics I, Revised and enlarged edition (Academic Press, 1980)

\bibitem{Rockafellar}  T.R. Rockafellar: Convex Analysis (Princeton Univ. Press, 1970).


\bibitem{Simon2011} B. Simon: Convexity - An analytic point of view (Cambridge University Press, 2011) 

\bibitem{Simon2015} B. Simon: Real Analysis -
A Comprehensive Course in Analysis, Part 1 (AMS, 2015).

\bibitem{Shale} D. Shale: Linear symmetries of free boson fields Trans. Amer. Math. Soc. 103 (1962) 149-167.

\bibitem{Soto} F. Soto and P. Claverie: When is the Wigner function of multidimensional systems nonnegative?, J. Math. Phys. 24
(1983) 97.

\bibitem{van1} Z. Van Herstraeten, N.J. Cerf: Quantum Wigner entropy, Phys. Rev. A 104, n.4 (2021) 042211. 

\bibitem{physicspaper} Z. Van Herstraeten, J. Davis, N.C. Dias, J.N. Prata, N.J. Cerf, U. Chabaud: Extreme non-negative Wigner functions, arXiv preprint arXiv:2512.14831 (2025).

\bibitem{Weil} A. Weil: Sur certains groupes d'op\'erateurs unitaires, Acta Mathematica 111 (1964) 143-211.

\bibitem{Webster} R. Webster: Convexity (Oxford University Press, 1994).

\bibitem{Weis1} S. Weis and M. Shirokov: The face generated by a point, generalized affine constraints,
and quantum theory, J. Conv. Anal. 28(3) (2021) 847-870.

\bibitem{Weis2} S. Weis: A note on faces of convex sets,
arXiv preprint arXiv:2404.00832v3 (2024).

\bibitem{Wigner} E. Wigner: On the quantum correction for thermodynamic equilibrium. Phys. Rev. 40 (1932) 749.

\bibitem{Zalinescu} C Zalinescu: Convex analysis in general vector spaces (World Scientific, 2002).\\
	
	
\end{thebibliography}
\end{document}